\documentclass[12pt]{article}
\usepackage{amsmath}
\usepackage{graphicx,psfrag,epsf}
\usepackage{enumerate}
\usepackage{psfrag,epsf}
\usepackage{url} 
\usepackage{amsmath,amssymb,amsfonts,amsthm,mathtools,mathrsfs}

\usepackage{bm,bbm}
\usepackage{color}
\usepackage{subfig}
\usepackage{algorithm,algpseudocode}
\usepackage[symbol]{footmisc}
\usepackage{scalefnt}
\usepackage{authblk} 
\usepackage{multirow,centernot}
\usepackage[colorlinks=true, citecolor=blue, urlcolor=blue]{hyperref}
\usepackage{natbib}
\usepackage{dsfont}
\usepackage[title]{appendix}
\usepackage{newtxtext,newtxmath} 
\usepackage{mhchem} 

\input cyracc.def

\usepackage[left=1in,top=1.1in,right=0.5in,bottom=1in]{geometry}

\theoremstyle{definition}

\newtheorem{theorem}{Theorem}[section]

\newtheorem{proposition}[theorem]{Proposition}
\newtheorem{remark}[theorem]{Remark}

\makeatletter
\def\@seccntformat#1{\@ifundefined{#1@cntformat}%
	{\csname the#1\endcsname\quad}
	{\csname #1@cntformat\endcsname}
}
\makeatother

\newif\ifShowComments
\ShowCommentstrue
\def\strutdepth{\dp\strutbox}
\def\druk#1{\strut\vadjust{\kern-\strutdepth
        {\vtop to \strutdepth{%
                \baselineskip\strutdepth\vss
                        \llap{\hbox{#1}\quad}\null}}}}

\title{\bf
Modeling double bounded data based on correlated gamma random variables
}

\author{
\text{Roberto Vila}$^{1}$\thanks{Corresponding author: Roberto Vila, email: {rovig161@gmail.com} 
\newline
}, \,
\text{Felipe Quintino}$^{1}$
and
\text{Marcelo Bourguignon}$^{2}$ 
\\
{\small $^{1}$ Department of Statistics, University of Brasilia, Brasilia, Brazil}
\\
{\small $^{2}$ Department of Statistics, Federal University of Rio Grande do Norte, Natal, RN, Brazil}\\
}
\begin{document}
	\maketitle 	
	\begin{abstract}

Many types of bounded data defined on the unit interval arise naturally as ratios of the form 
$X/(X + Y)$. In the existing literature, the main statistical models proposed for this type of
bounded data typically based on the assumption that the random variables 
$X$ and $Y$ are independent. However, this assumption is often unrealistic in practical applications, where 
$X$ and $Y$ tend to be correlated due to shared underlying mechanisms or common sources of variability. 
In this paper, we overcome such limitations
and propose a model in which the marginal distributions of the two components are linked by
a copula, leading to a more flexible and realistic representation of unit-interval data.
In particular, in the proposed model, $X$ and $Y$ are 
{\color{black}
dependent gamma random variables whose joint distribution is specified via Morgenstern's bivariate distribution}, allowing for positive and negative correlations between the
components.
The mathematical properties and practical applications are rigorously investigated. 
{\color{black} The resulting distribution exhibits a wide range of shapes, accommodating different degrees of skewness and, for some parameter configurations, more complex density structures.}
A Monte Carlo simulation study is carried out that shows the good performance of the maximum likelihood estimator in several scenarios of parameter choices.
{The potential and limitations of efficient likelihood-based computations are also discussed.
We evaluate the effectiveness of the new model and its estimates in modeling real-world datasets related to economics.}

	\end{abstract}
	\smallskip
	\noindent
	{\small {\bfseries Keywords.} {Extended beta distribution $\cdot$ Morgenstern's bivariate distribution $\cdot$ Monte Carlo simulation $\cdot$ \verb+R+ software.}}
	\\
	{\small{\bfseries Mathematics Subject Classification (2010).} {MSC 60E05 $\cdot$ MSC 62Exx $\cdot$ MSC 62Fxx.}}
	

	\section{Introduction}
	\noindent
	
{


Several natural indicators are measured as indicators, percentages,
proportions, ratios, and rates that are bounded on a certain interval, usually in the
bounded interval (0, 1).
The need for modeling and analyzing bounded data occurs in
many fields of real life such as  the Human Development Index, COVID-19 recovery rates, and percentages, among others.
Certainly, the
two parameter beta distribution is the most common distribution used in the literature to
describe data in the unit interval, especially because of its flexibility.

\cite{malik:67}
and \cite{ahuja:69}  both showed that if $X$ and $Y$ are independent random variables following
gamma distributions with shape parameters $\alpha > 0$ and
$\beta > 0$, and the same scale parameter $\sigma$, i.e., $X \sim \text{Gamma}(\alpha, \sigma)$ and $Y \sim \text{Gamma}(\beta, \sigma)$, then
\begin{equation}\label{eq:ratio_model}
Z \stackrel{d}{=} \frac{{\color{black} X}}{X + Y} \sim \text{Beta}(\alpha, \beta),
\end{equation}
namely, $Z$ is beta distributed with shape parameters $\alpha,\beta > 0$ and $\stackrel{d}{=}$ stands
for equality in distribution.
It should be noted that stochastic representations are important since they may justify some models
arising naturally in real situations. Furthermore, models generated by \eqref{eq:ratio_model} can produce distributions with different 
shapes, such as symmetry, bimodality, and others. This allows estimation procedures 
to better capture the inherent structures in the data.
However, the independence
assumption seems to be not suitable in real situations, where we would expect a relationship
between these variables.
The case of positively correlated gamma distributed components
has earlier been studied by \cite{Nadarajah2006} based on the Kibble bivariate
gamma distribution for $(X, Y)^\top$ \citep{kibble1941}. However, negatively correlated measurements are encountered
in numerous applications, for example in COVID-19 recovery rates, where 
$X$ and $Y$ are two random variables representing the number of confirmed COVID-19-related
deaths and COVID-19 cases without death result, and the independence
assumption seems to be not suitable in this case.

In this context, 
considering distributed components negatively and positively correlated, and
satisfying the stochastic representation in (\ref{eq:ratio_model}),
\cite{vila2024} proposed a class of unit-log-symmetric models to analyzing an internet access
data. \cite{vila2025} introduced a new distribution in the unit interval where $X$ and $Y$ are two correlated Birnbaum-Saunders
random variables. In the same way, \cite{vila2025novel} considered a novel unit‑asymmetric
distribution, where $X$ and $Y$ are two correlated Fréchet random
variables.

In this work, following the same stochastic formulation as the beta distribution, we introduce a new class of unit models that arises from a ratio \eqref{eq:ratio_model} of two correlated gamma random variables. We consider $(X,Y)^\top$ distributed as Morgenstern’s {\color{black} (which is also known in the literature as the Farlie-Gumbel-Morgenstern (FGM))} bivariate distribution with gamma margins \cite[cf.][]{Lai1978}. 
The new model extends the well-known Beta distribution with the addition of one parameter.
However, this extra parameter makes the model able to exhibit U- or J-shapes, symmetry, {\color{black} as well as more complex behaviors for certain parameter configurations.}
Moreover, information criteria on real-world dataset applications show that better fits are obtained with the new model, even with this extra parameter.
Maximum likelihood-based estimation procedures are discussed for the new distribution. A Monte Carlo simulation study is carried out, as well as applications to datasets well-consolidated in the literature, showing the versatility of the new model. {\color{black} It is well known that the Morgenstern copula is suitable for modeling bivariate data with weak dependence. However, it allows for analytically tractable derivations of the properties of the corresponding unit model. Moreover, our applications show that the proposed unit model can provide a good fit even when the original data exhibit strong correlation.}

Our results rely on special functions and their properties, which let us explicitly derive the probability density function (PDF), the cumulative distribution function (CDF), and key characteristics of the new probabilistic model.
Special functions are widely studied because they encompass many classes of functions and provide useful identities for integration, differentiation, and other operations. Some well-known examples are the Gauss hypergeometric ${}_2F_1$ function, Meijer's $G$-function, Fox’s $H$-function,  modified Bessel $K_\nu$-function of the 3rd kind  \cite[see Definitions 1.1, 1.5 and 1.6 and 1.11 in][]{MathaiSaxenaHaubold10}, and the double hypergeometric Appel functions $F_1, F_2, F_3$ and $F_4$ \citep{Appell1925}.

The remainder of this paper is organized as follows: 
Section \ref{sec:model} introduces the new model and plots of its shape for different parameter choices.
 Section \ref{sec:properties} deals with the
derivation of the new unit distribution and its mathematical properties, such as symmetry, stochastic representation, characterization as a ratio and moments.
In Section \ref{esti}, we propose the estimation of the
parameters.
In Section \ref{sec:simulation}, we discuss Monte Carlo simulations to evaluate the proposed estimator, its computational challenges, and alternative frameworks.
In Section \ref{sec:applications}, we address modeling real-world scenarios involving an income consumption dataset. The final section presents our conclusions, {\color{black}while Appendix A and B present, respectively, some additional results on the Appel $F_2$ function and the competing models.}
}

	\section{The extended beta distribution}\label{sec:model}
		\noindent

	We say that an absolutely continuous random variable $Z$, with support $(0, 1)$, follows an extended beta (EB) distribution with parameter vector $\boldsymbol{\theta} = (\alpha,\beta,\rho)^\top$, $\alpha> 0$ $\beta > 0$, and $-1\leqslant \rho\leqslant 1$, denoted by $Z\sim {\rm EB}(\boldsymbol{\theta})$, if its PDF 
    is given by (for $0 < z < 1$)
	\begin{align}\label{pdf-main}
		f_Z(z;\boldsymbol{\theta})
		&=
		(1+\rho) \, 
	{z^{\alpha-1} (1-z)^{\beta-1} \over 	
		\textrm{B}(\alpha,\beta)}
		\nonumber
	\\[0,2cm]
	&-
    \rho\,
{2\Gamma(2\alpha+\beta)\over \alpha\Gamma^2(\alpha)\Gamma(\beta)} 
\,
{z^{2\alpha-1}(1-z)^{\beta-1}\over(1+z)^{2\alpha+\beta}}
\,
_2F_1\left(1,2\alpha+\beta;1+\alpha;{z\over 1+z}\right)
			\nonumber
	\\[0,2cm]
	&-
    \rho\,
{2\Gamma(\alpha+2\beta)
	\over \beta \Gamma(\alpha)\Gamma^2(\beta)} 
\,
{z^{\alpha-1}(1-z)^{2\beta-1}\over (2-z)^{\alpha+2\beta}}\,
_2F_1\left(1,\alpha+2\beta;1+\beta;{1-z\over 2-z}\right)
			\nonumber
	\\[0,2cm]
	&+
    \rho\,
	{4^{1-\alpha-\beta} \Gamma(2\alpha+2\beta)\over \alpha \beta \Gamma^2(\alpha)\Gamma^2(\beta)}\,
 z^{2\alpha-1} (1-z)^{2\beta-1}
 	\,
 	F_2\left(2\alpha+2\beta,1,1;\alpha+1,\beta+1;{z\over 2}, {1-z\over 2}\right),
	\end{align}
%
	where 
	$\Gamma(\alpha)=\int_0^\infty t^{\alpha-1}\exp(-t){\rm d}t$ is the complete gamma function,
	$\mathrm {B} (\alpha _{1},\alpha_2)=\Gamma(\alpha _{1})\Gamma(\alpha _{2})/\Gamma(\alpha _{1}+\alpha _{2})$ is the complete beta function, 
	\begin{align*}
		{\displaystyle {}_{2}F_{1}(a,b;c;x)
			=
			\sum _{n=0}^{\infty }{\frac {(a)_{n}(b)_{n}}{(c)_{n}}}{\frac {x^{n}}{n!}}}
	\end{align*}
	is the hypergeometric function, 
	\begin{align}\label{eq:F2}
		F_2(a,b_1,b_2;c_1,c_2;x,y)
		=
		\sum_{m,n=0}^\infty
		{(a)_{m+n} (b_1)_m (b_2)_n\over (c_1)_m (c_2)_n}\, 
		{x^m\over m!} \, {y^n\over n!}
	\end{align} 
	 denotes the Appell F2 double hypergeometric function, and
	 $(q)_n=\Gamma(q+n)/\Gamma(q)$ is the Pochhammer symbol. {Conditions for the convergence of the series and several properties of these special functions are given in \cite{MathaiSaxenaHaubold10} and  \cite{Ananthanarayan2023}.}

	\begin{remark}
Note that when $\rho=0$ the EB model in \eqref{pdf-main} becomes the classic beta distribution.
Therefore, the EB model can be interpreted as a new extended beta distribution.
	\end{remark}

Figure \ref{fig:pdf} shows the PDF $f_Z(z{\color{black};\boldsymbol{\theta}})$ behavior for some parameter choices.
Note that we can have a symmetric distribution centered around 0.5 (if $\alpha=\beta$, see Figure \ref{fig:pdf_symm}). It is also possible to obtain asymmetric or bimodal (Figure \ref{fig:pdf_bimodal}) models, depending on the choice of parameters.
    
\begin{figure}[H]
	\centering
	\includegraphics[width=0.9\linewidth, height=7cm]{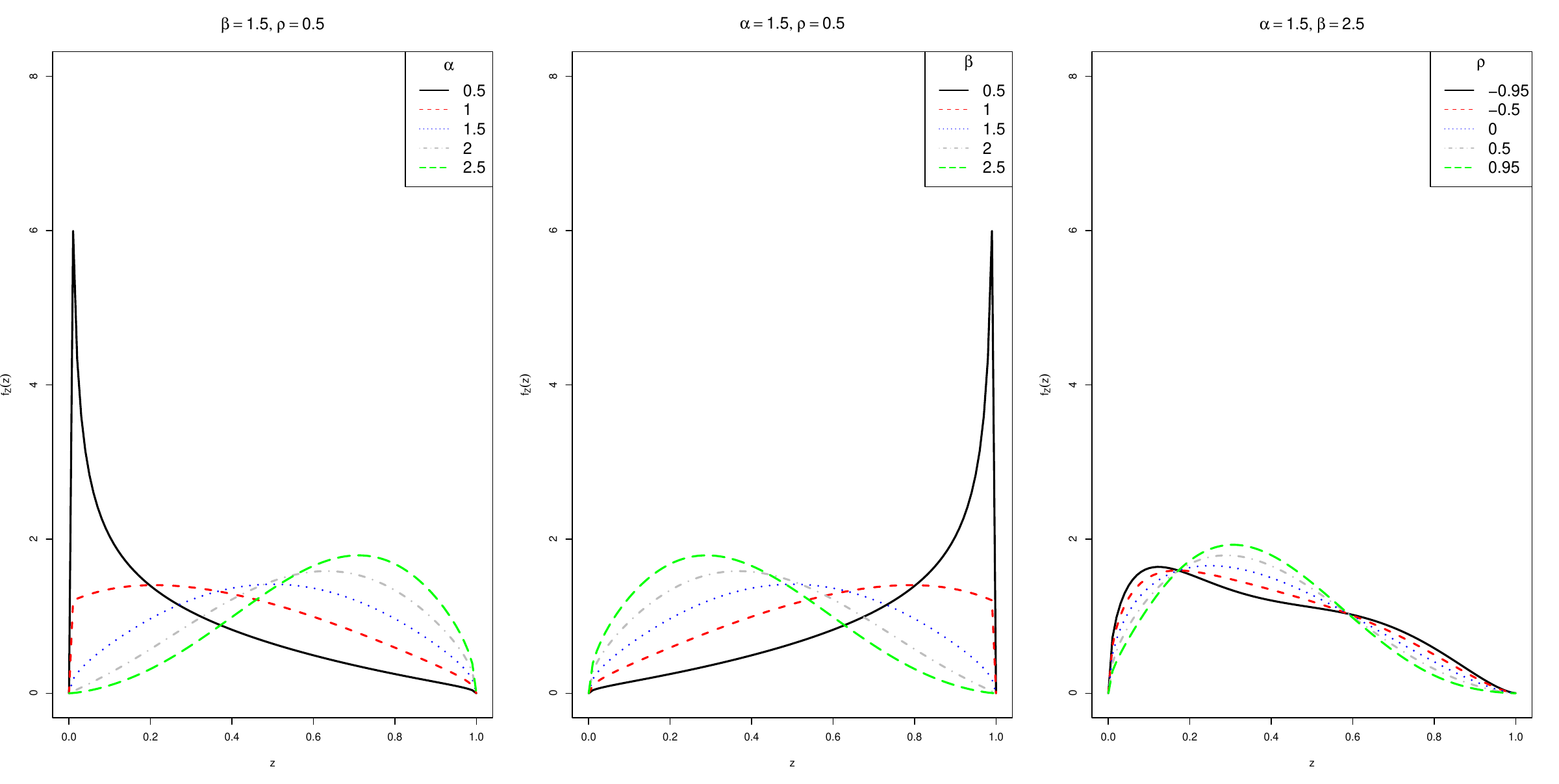}
	\caption{Plot of the PDF $f_Z$ with varying parameters $\alpha$ (left), $\beta$ (middle) and $\rho$ (right).}
	\label{fig:pdf}
\end{figure}
	
\begin{figure}[H]
	\centering
    \includegraphics[width=0.9\linewidth, height=9cm]{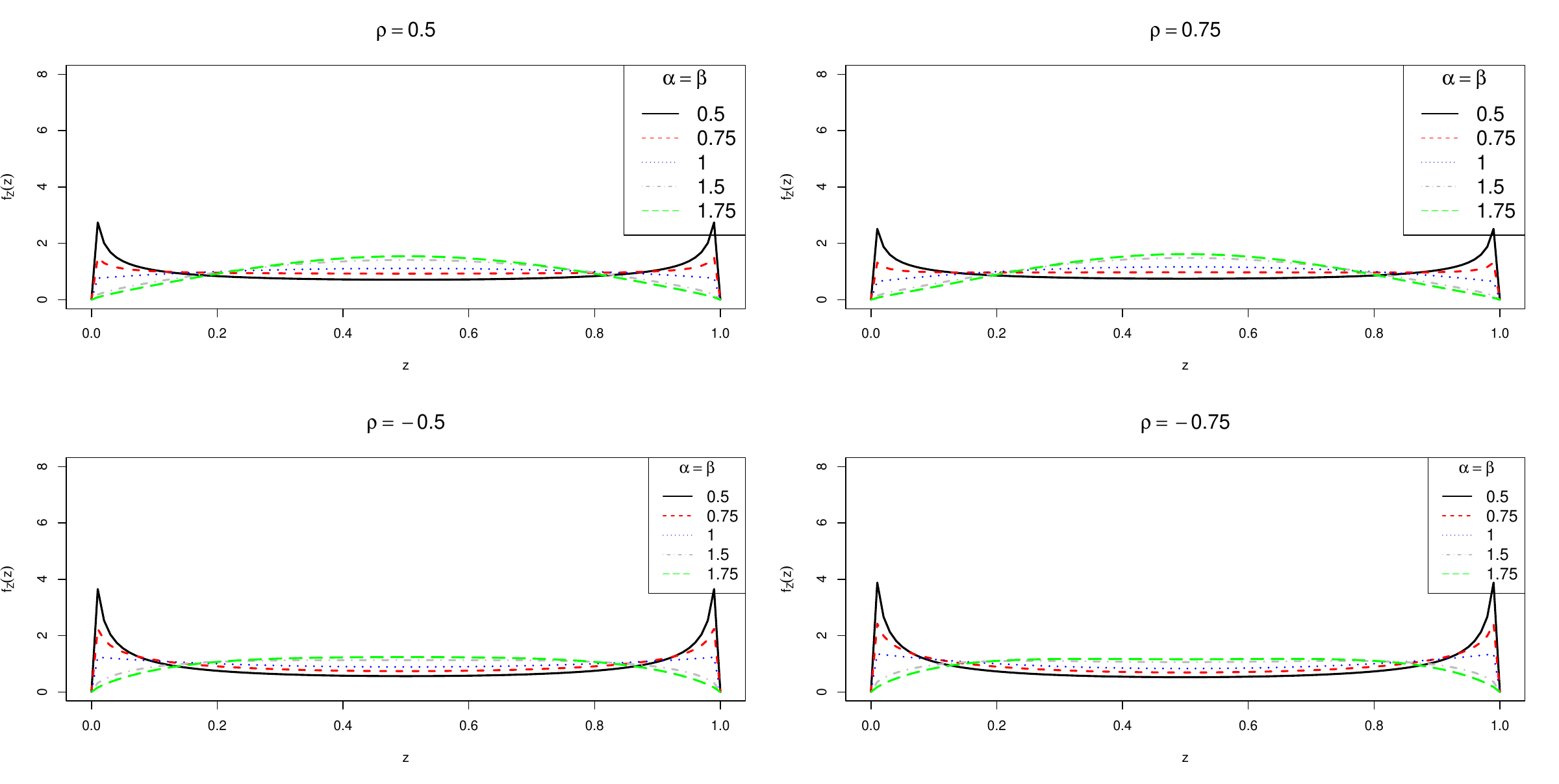}
	\caption{Plot of the PDF $f_Z$ with varying parameters $\alpha=\beta$ for $\rho\in\{-0.5, -0.75, 0.5, 0.75\}$.}
	\label{fig:pdf_symm}
\end{figure}

\begin{figure}[H]
	\centering
     \includegraphics[width=1.0\linewidth, height=10cm]{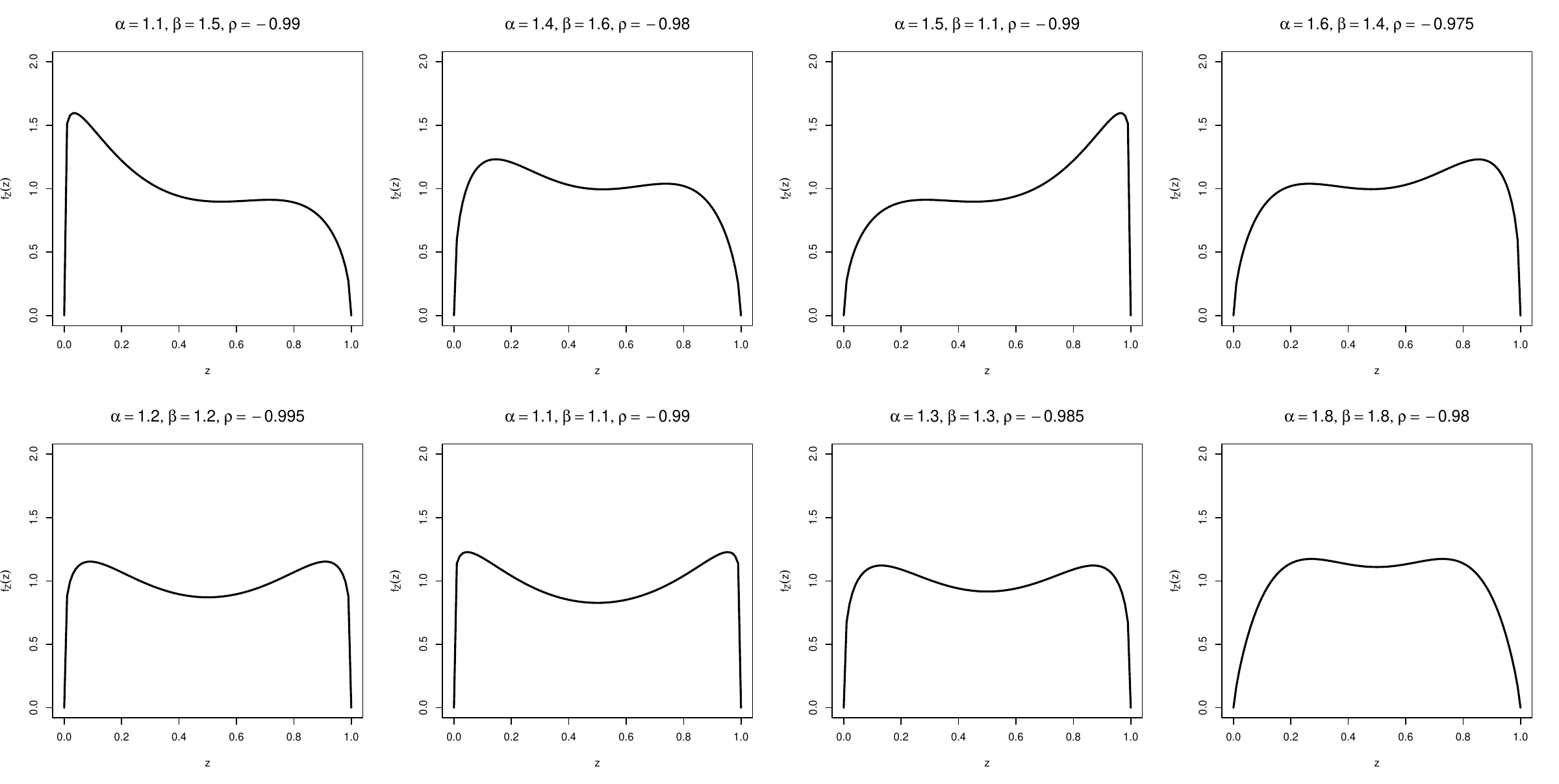}
	\caption{Plot of the PDF $f_Z$ with varying parameters $\alpha$, $\beta$ and $\rho$, illustrating bimodality.}
	\label{fig:pdf_bimodal}
\end{figure}

In Figures \ref{fig:pdf}--\ref{fig:pdf_bimodal} we note that:
\begin{itemize}
    \item for $\alpha >\beta$, the shapes are concentrated on the right side of the interval $[0,1]$;
    
    \item for $\alpha <\beta$ , the concentration is in the left side;
    
    \item when $\rho \approx -1$, bimodality can be observed. However, the two modes are not completely separated. After the first mode, the frequency does not drop near zero before rising again toward the second mode.

\end{itemize}

\begin{remark}
  Usual techniques for establishing unimodality or bimodality rely on analyzing the sign of the PDF’s derivative and its roots \cite[see, e.g., Theorem 2.1 in ][]{vila2024model}. However, the PDF \eqref{pdf-main} involves the special functions ${}_2F_1$ and $F_2$, which can make root analysis either impossible or extremely difficult from an analytical approach.
\end{remark}

{\color{black}

As closed-form conditions are unattainable, a more systematic numerical exploration of the bimodality region is introduced in Algorithm \ref{algo:bimodality}.
The algorithm can be run over a pre-specified parameter grid $\Theta=\{\boldsymbol{\theta}^{(1)},\cdots,\boldsymbol{\theta}^{(k)}\}$. Furthermore, an additional step can be incorporated to evaluate the depth of the valley between the two detected peaks.
The parameters shown in Figure \ref{fig:pdf_bimodal} were selected after several rounds of testing and refinement of Algorithm \ref{algo:bimodality}.

\begin{algorithm}[htb!]
\caption{Searching for bimodality}
\label{algo:bimodality}
\begin{algorithmic}[1]
    \Statex \textbf{Input:} a vector $\boldsymbol{\theta}\in(0,\infty)\times(0,\infty)\times[-1,1]$  ;
    \Statex \textbf{Output:} points of possible bimodality (if they exist).

 \State Fix the values $z_1, z_2, \cdots, z_n \in[0,1]$;

\State Define the lists
$$ P \gets \texttt{NULL};  V \gets \texttt{NULL};$$
\State Compute $\Delta_i f = f(z_{i+1};\boldsymbol{\theta})- f(z_i;\boldsymbol{\theta})$, for $i=1,\cdots, n-1$;
\For{$i=2,\cdots,n-1$}
   \If{$\Delta_{i-1} > 0 ~\&~ \Delta_{i} < 0$}
            \State $P \gets z_i$;
        \EndIf
   \If{$\Delta_{i-1} < 0 ~\&~ \Delta_{i} > 0$}
            \State $V \gets z_i$;
        \EndIf        
    \EndFor
\If{$\#P\geq 2$}
\State Compute
$$p_{1} = \arg\max_{z_i\in P} f(z_i; \boldsymbol{\theta}),~~ p_{2} = \arg\max_{z_i\in P;~~  z_i\neq p_{1}} f(z_i; \boldsymbol{\theta})~~\text{and}~~v_* = \arg\min_{p_2<z_i<p_1}f(z_i; \boldsymbol{\theta});$$
\State \textbf{Return} $p_1$, $p_2$, and $v_*$.
\Else
    \State \textbf{Return} ``unimodal''.
\EndIf
\end{algorithmic}
\end{algorithm}
}

	\section{Some basic properties}\label{sec:properties}
    \noindent
    
	In this section we establish some basic mathematical properties of the extended beta distribution.

	\subsection{Symmetry}\label{Symmetry}
	\noindent
    
Let $Z\sim {\rm EB}(\boldsymbol{\theta})$.
A simple observation shows that  (for $0<w<1/2$)
	\begin{align}\label{pdf-main-1}
	&f_Z(z;\boldsymbol{\theta})\big\vert_{z={1\over 2}-w}
	=
	(1+\rho) \, 
	{({1\over 2}-w)^{\alpha-1} ({1\over 2}+w)^{\beta-1} \over 	
		\textrm{B}(\alpha,\beta)}
	\nonumber
	\\[0,2cm]
&-
\rho\,
{2\Gamma(2\alpha+\beta)\over \alpha\Gamma^2(\alpha)\Gamma(\beta)} 
\,
{({1\over 2}-w)^{2\alpha-1}({1\over 2}+w)^{\beta-1}\over({3\over 2}-w)^{2\alpha+\beta}}
\,
_2F_1\left(1,2\alpha+\beta;1+\alpha;{1-2w\over 3-2w}\right)
			\nonumber
	\\[0,2cm]
	&-
    \rho\,
{2\Gamma(\alpha+2\beta)
	\over \beta \Gamma(\alpha)\Gamma^2(\beta)} 
\,
{({1\over 2}-w)^{\alpha-1}({1\over 2}+w)^{2\beta-1}\over ({3\over 2}+w)^{\alpha+2\beta}}\,
_2F_1\left(1,\alpha+2\beta;1+\beta;{1+2w\over 3+2w}\right)
			\nonumber
	\\[0,2cm]
	&+
    \rho\,
	{4^{1-\alpha-\beta} \Gamma(2\alpha+2\beta)\over \alpha \beta \Gamma^2(\alpha)\Gamma^2(\beta)}\,
 {
 \left({1\over 2}-w\right)^{2\alpha-1} \left({1\over 2}+w\right)^{2\beta-1}
 	\,
 	F_2\left(2\alpha+2\beta,1,1;\alpha+1,\beta+1;{1-2w\over 4}, {1+2w\over 4}\right)
    }
\end{align}
and
	\begin{align}\label{pdf-main-2}
	&f_Z(z;\boldsymbol{\theta})\big\vert_{z={1\over 2}+w}
	=
	(1+\rho) \, 
	{({1\over 2}+w)^{\alpha-1} ({1\over 2}-w)^{\beta-1} \over 	
		\textrm{B}(\alpha,\beta)}
	\nonumber
	\\[0,2cm]
	&-
    \rho\,
{2\Gamma(2\alpha+\beta)\over \alpha\Gamma^2(\alpha)\Gamma(\beta)} 
\,
{({1\over 2}+w)^{2\alpha-1}({1\over 2}-w)^{\beta-1}\over({3\over 2}+w)^{2\alpha+\beta}}
\,
_2F_1\left(1,2\alpha+\beta;1+\alpha;{1+2w\over 3+2w}\right)
			\nonumber
	\\[0,2cm]
	&-
    \rho\,
{2\Gamma(\alpha+2\beta)
	\over \beta \Gamma(\alpha)\Gamma^2(\beta)} 
\,
{({1\over 2}+w)^{\alpha-1}({1\over 2}-w)^{2\beta-1}\over ({3\over 2}-w)^{\alpha+2\beta}}\,
_2F_1\left(1,\alpha+2\beta;1+\beta;{1-2w\over 3-2w}\right)
			\nonumber
	\\[0,2cm]
	&+
    \rho\,
	{4^{1-\alpha-\beta} \Gamma(2\alpha+2\beta)\over \alpha \beta \Gamma^2(\alpha)\Gamma^2(\beta)}\,
 {
 \left({1\over 2}+w\right)^{2\alpha-1} \left({1\over 2}-w\right)^{2\beta-1}
 	\,
 	F_2\left(2\alpha+2\beta,1,1;\alpha+1,\beta+1;{1+2w\over 4}, {1-2w\over 4}\right)
    }.
\end{align}
As the the Appell F2 double hypergeometric function is symmetric, that is, $F_2(a,b_1,b_2;c_1,c_2;x,y)=F_2(a,b_1,b_2;c_1,c_2;y,x)$, by taking $\alpha=\beta$ in the identities \eqref{pdf-main-1} and \eqref{pdf-main-2}, we have (for $0<w<1/2$)
\begin{align*}
	f_Z(z;\boldsymbol{\theta})\big\vert_{z={1\over 2}-w}
	=
	f_Z(z;\boldsymbol{\theta})\big\vert_{z={1\over 2}+w}.
\end{align*}
That is, the EB PDF in \eqref{pdf-main} is symmetric around
$z_0 = 1/2$ provided $\alpha=\beta$ (see Figure \ref{fig:pdf_symm}). Moreover, in this case, the median and the mean
of a EB distribution both occur at $z_0 = 1/2$.

	\subsection{EB model arising as a ratio of correlated random variables}
	\label{EB model arising as a ratio}
	\noindent
    
In this subsection, we establish a fundamental property of the EB model, in which the random variable $Z$ in \eqref{pdf-main} is represented as a ratio  of the form 
 \begin{align}\label{rep-stoch-1-1}
	Z\stackrel{d}{=}{X\over X+Y},
\end{align}
where $\stackrel{d}{=}$ means equality in distribution and the random vector $(X, Y)^\top$ follows a Morgenstern's bivariate distribution (with gamma margins) \citep{Lai1978}.
 
{\color{black}
Define the bivariate function $C:[0,1]\times[0,1] \rightarrow [0,1]$ as
\begin{equation}\label{eq:copula_Morgenstern}
    C(u_1,u_2) = u_1u_2\left[ 1+ \rho(1-u_1)(1-u_2)\right], ~~(u_1,u_2)\in[0,1]\times[0,1] ~\mbox{and}~-1\leqslant \rho\leqslant 1. 
\end{equation}
Although the copula $C(\cdot,\cdot)$ defined above has limited dependence flexibility, it was adopted mainly due to its analytical tractability. In order to provide richer dependence structures, we have included in the real-data analysis (Section \ref{sec:applications}) some additional flexible extensions of the copula $C(\cdot,\cdot)$.
}
 
Following \cite{Lai1978}, a random vector $(X, Y)^\top$ has a  Morgenstern's bivariate  distribution if its cumulative distribution function (CDF) is of the form
\begin{align}\label{Morgenstern-cdf}
	F_{X,Y}(x,y)
    {\color{black}
    \, =
    C(F_X(x),F_Y(y))
    }
	=
	F_X(x)F_Y(y)\big[1+\rho\{1-F_X(x)\}\{1-F_Y(y)\}\big],
	\quad -1\leqslant\rho\leqslant 1,
\end{align}
having $F_X(x)$ and $F_Y(y)$ as marginal distribution functions. It is clear that the corresponding joint PDF of $(X, Y)^\top$ is given by 
 \begin{align}\label{Morgenstern}
 	f_{X,Y}(x,y)
 	=
 	f_X(x)f_Y(y)
 	\big[1+\rho\{2F_X(x)-1\}\{2F_Y(y)-1\}\big].
 \end{align}
 Note that when $\rho=0$, $X$ and $Y$ are independent. 

In what follows, we verify that the random variable $Z$ in \eqref{pdf-main} satisfies the stochastic representation \eqref{rep-stoch-1-1}, where $(X, Y)^\top$ has Morgenstern's bivariate distribution \eqref{Morgenstern-cdf}-\eqref{Morgenstern}, and $X$ and $Y$ have gamma distributions with the same rate, more precisely, $X~\sim{\rm Gamma}(\alpha,\theta)$ and $Y~\sim{\rm Gamma}(\beta,\theta)$, $\alpha>0,\beta>0,\theta>0$.

By fact, using the Jacobian method, we have (for $s=1/z-1$ and  $0<z<1$)
 \begin{align*}
 	f_{{X\over X+Y}}(z)
 	= 
 	(s+1)^2
 	\int_0^\infty
 	x f_{X,Y}(x,sx)
 	{\rm d}x.
 \end{align*}
 Using \eqref{Morgenstern} in the above identity, we have the following
 \begin{align}\label{iden-pdf-ratio}
 	f_{{X\over X+Y}}(z)
 	&=
 	(1+\rho)
 	(s+1)^2
 	\int_0^\infty
 	x
 	f_X(x) f_Y(sx)
 	{\rm d}x
 	-
 	2
 	\rho
 	(s+1)^2
 	\int_0^\infty
 	[xf_X(x) f_Y(sx)]
 	F_X(x)
 	{\rm d}x
 	\nonumber
 	\\[0,2cm]
 	&-
 	2
 	\rho
 	(s+1)^2
 	\int_0^\infty
 	[xf_X(x) f_Y(sx)]
 	F_Y(sx)
 	{\rm d}x
 	+
 	4
 	\rho
 	(s+1)^2
 	\int_0^\infty
 	[xf_X(x) f_Y(sx)] 
 	 F_X(x) F_Y(sx)
 	{\rm d}x
 	\nonumber
 	\\[0,2cm]
 	&\equiv
 	(1+\rho)
 	K_1(z)
 	 	-
 	\rho
 	K_2(z)
 	-
 	\rho
 	K_3(z)
 	+
 	\rho
 	K_4(z),
 \end{align}
 where
 \begin{align}\label{def-K}
 			\begin{array}{llll}
&K_1(z)\equiv  	\displaystyle
(s+1)^2
\int_0^\infty
x
f_X(x) f_Y(sx)
{\rm d}x,
\\[0,5cm]
&K_2(z)\equiv \displaystyle
2
(s+1)^2
\int_0^\infty
[xf_X(x) f_Y(sx)]
F_X(x)
{\rm d}x,
\\[0,5cm]
&K_3(z)\equiv \displaystyle
2
(s+1)^2
\int_0^\infty
[xf_X(x) f_Y(sx)]
F_Y(sx)
{\rm d}x,
\\[0,5cm]
&K_4(z)\equiv  	\displaystyle
4
(s+1)^2
\int_0^\infty
[xf_X(x) f_Y(sx)] 
F_X(x) F_Y(sx)
{\rm d}x.
		\end{array}
 \end{align}
 
 	Since $X~\sim{\rm Gamma}(\alpha,\theta)$, $Y~\sim{\rm Gamma}(\beta,\theta)$ and
 \begin{align*}
 	x
 	f_X(x;\alpha,\theta) f_Y(sx;\beta,\theta)
 	=
 	{\theta^{\alpha+\beta} s^{\beta-1}\over \Gamma(\alpha)\Gamma(\beta)} \, 
 	x^{\alpha+\beta-1} \exp[-(1+s)\theta x],
 \end{align*}
 we have
 \begin{align*}
 &K_1(z)
 =  	
 (s+1)^2 \,
 	{\theta^{\alpha+\beta} s^{\beta-1}\over \Gamma(\alpha)\Gamma(\beta)} 
 \int_0^\infty
x^{\alpha+\beta-1} \exp[-(1+s)\theta x]
 {\rm d}x,
 \\[0,2cm]
 &K_2(z)=
 2
 (s+1)^2 \,
 {\theta^{\alpha+\beta} s^{\beta-1}\over \Gamma^2(\alpha)\Gamma(\beta)} 
 \int_0^\infty
x^{\alpha+\beta-1} \exp[-(1+s)\theta x]
\gamma(\alpha,\theta x)
 {\rm d}x,
 \\[0,2cm]
 &K_3(z)=
 2
 (s+1)^2 \,
 {\theta^{\alpha+\beta} s^{\beta-1}\over \Gamma(\alpha)\Gamma^2(\beta)} 
 \int_0^\infty 
x^{\alpha+\beta-1} \exp[-(1+s)\theta x]
 \gamma(\beta,\theta s x)
 {\rm d}x,
 \\[0,2cm]
 &K_4(z)= 	
 4
 (s+1)^2 \,
 {\theta^{\alpha+\beta} s^{\beta-1}\over \Gamma^2(\alpha)\Gamma^2(\beta)} 
 \int_0^\infty 
x^{\alpha+\beta-1} \exp[-(1+s)\theta x]
 \gamma(\alpha,\theta x)\gamma(\beta,\theta sx)
 {\rm d}x.
 \end{align*}
 
 It is clear that
 \begin{align}\label{K1-def}
 	K_1(z)
 	=
 	{z^{\alpha-1} (1-z)^{\beta-1}\over B(\alpha,\beta)}.
 \end{align}
 
 By using the identity of
Proposition \ref{prop-app-3}:
\begin{align}\label{prop-app-3-1}
 	\int_0^\infty 
    x^{a-1}
    \exp(-sx)\gamma(b,\theta x){\rm d}x
    =
    {\theta^b\Gamma(a+b)\over b(s+\theta)^{a+b}}
\,_{2}F_{1}\left(a+b,1;b+1;{\theta\over s+\theta}\right),
\end{align}
 we have
 \begin{align}\label{K2-def}
K_2(z)
%
=
{2\Gamma(2\alpha+\beta)\over \alpha\Gamma^2(\alpha)\Gamma(\beta)} 
\,
{z^{2\alpha-1}(1-z)^{\beta-1}\over(1+z)^{2\alpha+\beta}}
\,
_2F_1\left(1,2\alpha+\beta;1+\alpha;{z\over 1+z}\right)
 \end{align}
 and
 \begin{align}\label{K3-def}
 K_3(z)
=
{2\Gamma(\alpha+2\beta)
	\over \beta \Gamma(\alpha)\Gamma^2(\beta)} 
\,
{z^{\alpha-1}(1-z)^{2\beta-1}\over (2-z)^{\alpha+2\beta}}\,
_2F_1\left(1,\alpha+2\beta;1+\beta;{1-z\over 2-z}\right).
 \end{align}
 
 Furthermore, by using the identity of Proposition \ref{prop-app-2}:
\begin{multline}\label{Prudnikov2002-2}
 	\int_0^\infty 
    x^{a-1}
    \exp(-sx)\gamma(b,\theta x)\gamma(c,\xi x){\rm d}x
    \\[0,2cm]
 	=
 	{\theta^b \xi^c \Gamma(a+b+c)\over bc
    (s+\theta+\xi)^{a+b+c}} \,
 	F_2\left(a+b+c,1,1;b+1,c+1;{\theta\over s+\theta+\xi},{\xi\over s+\theta+\xi}\right),
\end{multline}
 we obtain 
 \begin{align}\label{K4-def}
 K_4(z)
 =
{4^{1-\alpha-\beta} \Gamma(2\alpha+2\beta)\over \alpha\beta \Gamma^2(\alpha)\Gamma^2(\beta)}\,
 {
 z^{2\alpha-1} (1-z)^{2\beta-1}
 	\,
 	F_2\left(2\alpha+2\beta,1,1;\alpha+1,\beta+1;{z\over 2}, {1-z\over 2}\right)
 }.
 \end{align}
 
 By replacing \eqref{K1-def}, \eqref{K2-def}, \eqref{K3-def} and \eqref{K4-def} in \eqref{iden-pdf-ratio}, we get (for $0<z<1$)
 \begin{align*}
 	f_{{X\over X+Y}}(z) 	
 	&=
 	(1+\rho)\,
 	{z^{\alpha-1} (1-z)^{\beta-1}\over B(\alpha,\beta)}
 	\\[0,2cm]
 	&-
    \rho\,
{2\Gamma(2\alpha+\beta)\over \alpha\Gamma^2(\alpha)\Gamma(\beta)} 
\,
{z^{2\alpha-1}(1-z)^{\beta-1}\over(1+z)^{2\alpha+\beta}}
\,
_2F_1\left(1,2\alpha+\beta;1+\alpha;{z\over 1+z}\right)
 	 	\\[0,2cm]
 	&-
    \rho\,
{2\Gamma(\alpha+2\beta)
	\over \beta \Gamma(\alpha)\Gamma^2(\beta)} 
\,
{z^{\alpha-1}(1-z)^{2\beta-1}\over (2-z)^{\alpha+2\beta}}\,
_2F_1\left(1,\alpha+2\beta;1+\beta;{1-z\over 2-z}\right)
 	 	\\[0,2cm]
 	&+
    \rho\,
{4^{1-\alpha-\beta} \Gamma(2\alpha+2\beta)\over \alpha \beta \Gamma^2(\alpha)\Gamma^2(\beta)}\,
 {
 z^{2\alpha-1} (1-z)^{2\beta-1}
 	\,
 	F_2\left(2\alpha+2\beta,1,1;\alpha+1,\beta+1;{z\over 2}, {1-z\over 2}\right)
    }
    \\[0,2cm]
 	&=	f_Z(z;\boldsymbol{\theta}).
 \end{align*}
 
 Therefore, we have proven that
 \begin{align}\label{eq-pdfs}
 	 	f_{{X\over X+Y}}(z)=f_Z(z;\boldsymbol{\theta}), \quad 0<z<1.
 \end{align}
In other words, the stochastic representation for $Z$ given in \eqref{rep-stoch-1-1} is valid.

{Figure \ref{fig:jointPDF} shows the joint PDF of Morgenstern's bivariate distribution with Gamma margins and contour lines of the joint density, for fixed $\alpha=1.1$ and
$\beta=1.5$ and varying $\rho\in\{-0.99, 0, 0.75\}$.
Note that when $\rho =-0.99$, a possible local maximum appears near the global maximum. This corresponds to the same parameter configuration that produced bimodality in one of the plots in Figure \ref{fig:pdf_bimodal}. } 

\begin{figure}[!htbp]
    \centering
    \subfloat[$\rho=-0.99$]{\includegraphics[width=.5\linewidth]{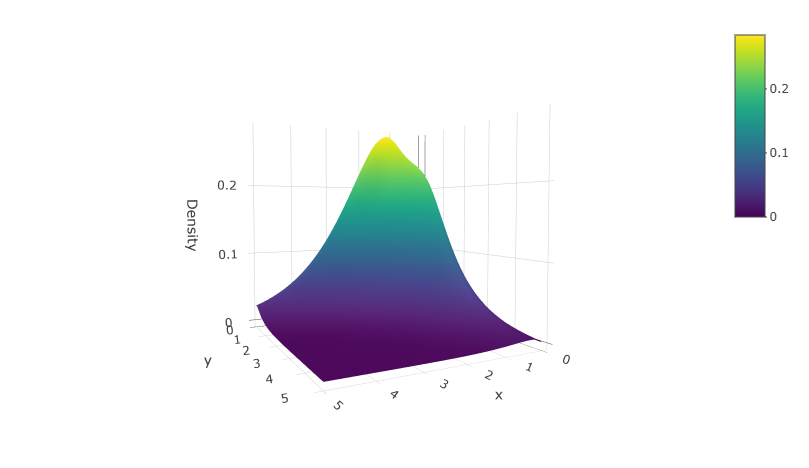}}
    \subfloat[$\rho=-0.99$]{\includegraphics[width=.5\linewidth]{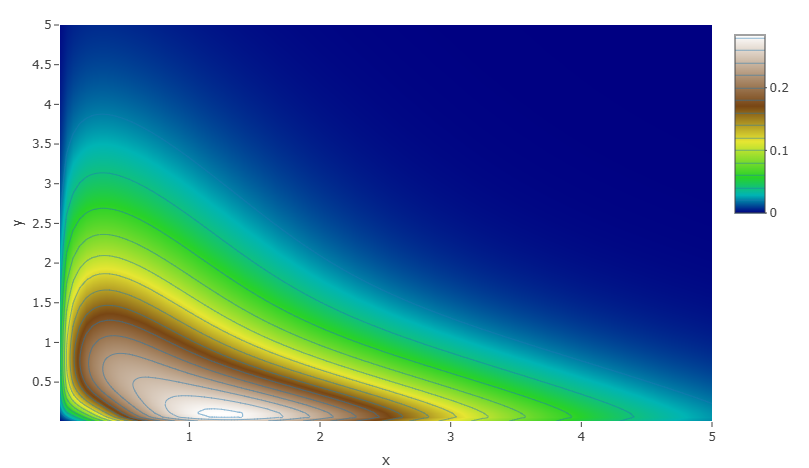}} \\
    \subfloat[$\rho=0$]{\includegraphics[width=.5\linewidth]{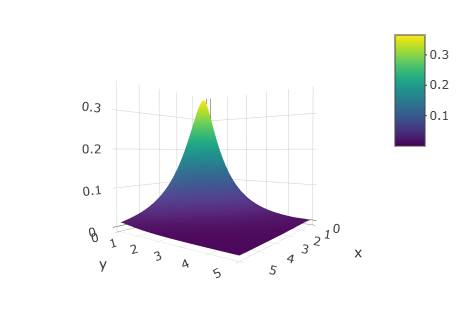}}
    \subfloat[$\rho=0$]{\includegraphics[width=.5\linewidth]{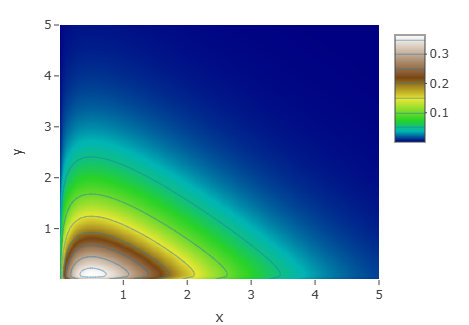}}\\
        \subfloat[$\rho=0.75$]{\includegraphics[width=.5\linewidth]{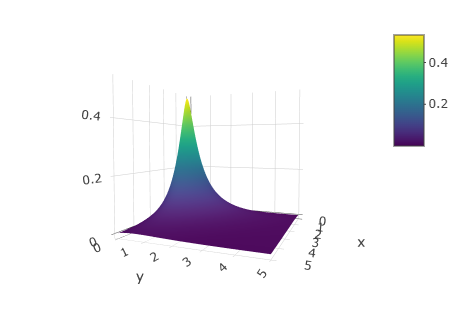}}
    \subfloat[$\rho=0.75$]{\includegraphics[width=.5\linewidth]{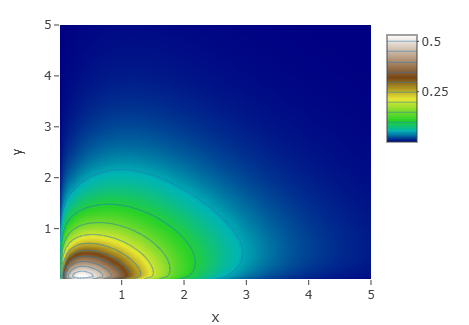}} 
    \caption{Joint PDF of Morgenstern's bivariate distribution with Gamma margins and contour lines of the joint density for $\rho=-0.99$ in (a) and (b), $\rho=0$ in (c) and (d), and $\rho=0.75$ in (e) and (f). $\alpha=1.1$ and
$\beta=1.5$.}
    \label{fig:jointPDF}
\end{figure}

\subsection{Other characterizations of EB model}
\noindent

Let $X$ and $Y$ be two (absolutely) continuous positive random variables with CDFs $F_X$ and $F_Y$, respectively. Based on $F_X$ and $F_Y$ and its respective PDFs $f_X$ and $f_Y$, we define two new densities $g_X$ and $g_Y$ as follows:
\begin{align} \label{def-g}
	g_X(x)=2f_X(x)F_X(x), 
	\quad
	g_Y(y)=2f_Y(y)F_Y(y),
	\quad x, y>0.
\end{align}

%

\begin{proposition}\label{remark-1}
	The non-negative functions $f_1, f_2,f_3$ and $f_4$ defined as (for $s=1/z-1$ and $0<z<1$)
	\begin{align}\label{def-f-1}
		\begin{array}{llll}
			&f_1(z)
			\equiv
			\displaystyle
			(s+1)^2
			\int_0^\infty
			x
			f_X(x) f_Y(sx)
			{\rm d}x,
			\\[0,5cm]
			&f_2(z)
			\equiv		\displaystyle
			(s+1)^2
			\int_0^\infty
			x
			g_X(x) f_Y(sx)
			{\rm d}x,
			\\[0,5cm]
			&f_3(z)
			\equiv	\displaystyle
			(s+1)^2
			\int_0^\infty
			x
			f_X(x) 
			g_Y(sx)
			{\rm d}x,
			\\[0,5cm]
			&f_4(z)
			\equiv		\displaystyle
			(s+1)^2
			\int_0^\infty
			x
			g_X(x) g_Y(sx)
			{\rm d}x,
		\end{array}
	\end{align}
	are PDFs.  In the above, $g_X$ and $g_Y$ are as in \eqref{def-g}.
\end{proposition}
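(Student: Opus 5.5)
The plan is to recognize each $f_i$ as the density of a ratio $U/(U+V)$ for independent positive random variables $U,V$, and then appeal to the Jacobian computation already used in Subsection~\ref{EB model arising as a ratio}. First, $g_X$ and $g_Y$ in \eqref{def-g} are genuine PDFs. Indeed, $g_X\geqslant 0$ because $f_X,F_X\geqslant 0$. Moreover, since $F_X$ is absolutely continuous with $F_X'=f_X$ a.e.,
\begin{align*}
\int_0^\infty 2f_X(x)F_X(x)\,{\rm d}x=\big[F_X^2(x)\big]_{0}^{\infty}=1 .
\end{align*}
The same holds for $g_Y$. Probabilistically, $g_X$ is the density of $\max(X_1,X_2)$ for two i.i.d.\ copies of $X$.

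Next, take independent positive random variables $U$ and $V$ with densities $h_U$ and $h_V$, and set $W=U/(U+V)\in(0,1)$. Using the same Jacobian argument that produced the formula for $f_{X/(X+Y)}$ (namely $V/U=s$ with $s=1/z-1$, and $|{\rm d}s/{\rm d}z|=(s+1)^2$), we get
\begin{align*}
f_W(z)=(s+1)^2\int_0^\infty x\,h_U(x)h_V(sx)\,{\rm d}x,\qquad 0<z<1 .
\end{align*}
This is the earlier computation with the joint density replaced by the product $h_U(x)h_V(y)$. To be self-contained, one can check it directly. The ratio $R=V/U$ has density $\int_0^\infty x\,h_U(x)h_V(sx)\,{\rm d}x$, and $W=1/(1+R)$ is a monotone transformation of $R$.

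The four cases then follow by choosing the pair $(h_U,h_V)$:
\begin{itemize}
\item $(f_X,f_Y)$ gives $f_1$;
\item $(g_X,f_Y)$ gives $f_2$;
\item $(f_X,g_Y)$ gives $f_3$;
\item $(g_X,g_Y)$ gives $f_4$.
\end{itemize}
In each case $f_i$ is the density of a $(0,1)$-valued random variable, so it is non-negative and integrates to one.

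The only delicate point is justifying that $\int_0^1 f_i(z)\,{\rm d}z=1$ rigorously, rather than just formally. For this I would argue directly by Tonelli, since all integrands are non-negative. Substituting $s=1/z-1$ gives ${\rm d}z=-{\rm d}s/(s+1)^2$, so
\begin{align*}
\int_0^1 f_i(z)\,{\rm d}z=\int_0^\infty\!\!\int_0^\infty x\,h_U(x)h_V(sx)\,{\rm d}x\,{\rm d}s .
\end{align*}
Exchanging the order of integration and putting $y=sx$ in the inner integral yields
\begin{align*}
\int_0^\infty h_U(x)\,{\rm d}x\int_0^\infty h_V(y)\,{\rm d}y=1 .
\end{align*}
Non-negativity of each $f_i$ is immediate from the definitions.
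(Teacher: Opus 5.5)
Your proof is correct and takes essentially the same route as the paper. The paper proves the claim for arbitrary densities $p,q$ on $(0,\infty)$ by exchanging the order of integration and substituting $y=(1/z-1)x$, which merges your two substitutions into one, and then specializes to the four pairs. Your explicit check that $g_X$ and $g_Y$ integrate to one via $F_X^2$ supplies a step the paper takes for granted, and your ratio interpretation is the content of Proposition~\ref{prop-1}.
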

\begin{proof}
	Let	$p$ and $q$ be two arbitrary densities with positive support. In what follows, we prove that the function $g$ defined as (for $s=1/z-1$ and $0<z<1$)
	\begin{align*}
		g(z)\equiv 
		(s+1)^2
		\int_0^\infty
		x p(x) q(sx)
		{\rm d}x	
	\end{align*}
	is a PDF.
	Indeed, since
	\begin{align*}
		\int_0^1 
		g(z) 
		{\rm d}z
		=
		\int_0^\infty
		x p(x)
		\left[
		\int_{0}^{1}
		{1\over z^2}\,
		q\left(\left({1\over z}-1\right)x\right)
		{\rm d}z
		\right]
		{\rm d}x,
	\end{align*}
	taking the change of variable $y=(1/z-1)x$, the right-hand  integral is written as	
	\begin{align*}
		&=
		\int_0^\infty
		x
		p(x) 
		\left[
		{1\over x}
		\int_{0}^{\infty}
		q(y)
		{\rm d}y
		\right]
		{\rm d}x
		=1,
	\end{align*}
	where in the last equality, we have used the fact that $p$ and $q$ are PDFs.
	
	Hence, taking (i) $p=f_X$ and $q=f_Y$, (ii) $p=g_X$ and $q=f_Y$, (iii) $p=f_X$ and $q=g_Y$, and (iv) $p=g_X$ and $q=g_Y$,  it follows that $f_1, f_2,f_3$ and $f_4$ in \eqref{def-f-1} are PDFs. 
\end{proof}

Since $f_i(z)=K_i(z)$, $i=1,2,3,4$, where $K_i$ is as in \eqref{def-K}, from \eqref{iden-pdf-ratio}, we have (for $0<z<1$)
		\begin{align}\label{id-pds}
		 	f_{{X\over X+Y}}(z)
		 	=
		 	 	(1+\rho)
		 	f_1(z)
		 	-
		 	\rho
		 	f_2(z)
		 	-
		 	\rho
		 	f_3(z)
		 	+
		 	\rho
		 	f_4(z).
		\end{align}
		
If furthermore,	 $(X, Y)^\top$ follows a Morgenstern's bivariate  distribution (with gamma margins), then,	by \eqref{eq-pdfs} and \eqref{id-pds}, we have
\begin{align*}
	f_Z(z;\boldsymbol{\theta})
	=
	f_{{X\over X+Y}}(z)
	=
	\sum_{i=1}^{4} \pi_if_i(z),  
	\quad 0<z<1,
\end{align*}
where $\pi_1=1+\rho$, $\pi_2=\pi_3=-\rho$ and $\pi_4=\rho$.
In other words, if $Z_i$ is a continuous positive random variable having  {\color{black} CDF $F_{Z_i}(z)$}, $0<z<1$, $i=1,2,3,4$, where  $X~\sim{\rm Gamma}(\alpha,\theta)$ and $Y~\sim{\rm Gamma}(\beta,\theta)$, then {\color{black} \(Z\) admits a signed mixture representation; that is, its CDF can be written as the following linear combination:
\begin{align}\label{rep-stoch}
   	F_Z(z;\boldsymbol{\theta})
	=
	F_{{X\over X+Y}}(z)
	=
	\sum_{i=1}^{4} \pi_iF_{Z_i}(z),  
	\quad 0<z<1.
\end{align}
Furthermore, note that the above representation corresponds to a proper convex mixture only when \(\rho = 0\).
}
		
		\begin{proposition}\label{prop-1}
			Suppose that 
			\begin{itemize}
				\item[(a)] $X_1$ and $Y_1$ are independent variables having densities $f_X(x)$ and $f_Y(x)$, respectively;
				\item[(b)] $X_2$ and $Y_2$ are independent variables having densities $g_X(x)$ and $f_Y(x)$, respectively; 
				\item[(c)]  $X_3$ and $Y_3$ are independent variables having densities $f_X(x)$ and $g_Y(x)$, respectively;
				\item[(d)]  $X_4$ and $Y_4$ are independent variables having densities $g_X(x)$ and $g_Y(x)$, respectively.
			\end{itemize}
			If $Z_i$ has PDF $f_i(z)$, $i=1,2,3,4$, given in \eqref{def-f-1}, then
			\begin{align*}
				Z_i={X_i\over X_i+Y_i}, \quad i=1,2,3,4.
			\end{align*}
		\end{proposition}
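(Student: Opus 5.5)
The statement should be read as an equality in distribution, $Z_i\stackrel{d}{=}X_i/(X_i+Y_i)$. The plan is to compute the density of $X_i/(X_i+Y_i)$ directly and check that it coincides with $f_i$ in \eqref{def-f-1}. I would first prove one general fact and then specialize it to the four cases.

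\textbf{General fact.} Let $P$ and $Q$ be independent positive random variables with densities $p$ and $q$. Then the density of $P/(P+Q)$ at $z\in(0,1)$ equals
\[
(s+1)^2\int_0^\infty x\,p(x)\,q(sx)\,{\rm d}x, \qquad s=1/z-1 .
\]
This is the same Jacobian computation used just before \eqref{iden-pdf-ratio}, now with the joint density $p(x)q(y)$. I would prove it via the CDF. For $0<z<1$, the event $P/(P+Q)\le z$ is the event $Q\ge sP$, since $s$ is a decreasing function of $z$. Hence
\[
\mathbb{P}\Big({P\over P+Q}\le z\Big)=\int_0^\infty p(x)\int_{sx}^\infty q(y)\,{\rm d}y\,{\rm d}x .
\]
Differentiating in $z$ uses ${\rm d}s/{\rm d}z=-1/z^2=-(s+1)^2$, which produces exactly the displayed formula. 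The interchange of differentiation and integration needs a brief justification, either by dominated convergence or by integrating the claimed density over $(0,z]$ and applying Fubini, as in the proof of Proposition \ref{remark-1}.

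\textbf{Legitimacy of the densities.} Before specializing, I need $g_X$ and $g_Y$ from \eqref{def-g} to be genuine densities. This holds because
\[
\int_0^\infty 2f_X F_X\,{\rm d}x=\big[F_X^2\big]_0^\infty=1,
\]
and the same computation applies to $g_Y$. In fact $g_X$ is the density of $\max$ of two i.i.d.\ copies of $X$, so the independent pairs in (b)–(d) exist.

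\textbf{Specialization.} Apply the general fact with $(p,q)$ equal to $(f_X,f_Y)$, $(g_X,f_Y)$, $(f_X,g_Y)$ and $(g_X,g_Y)$. This gives that $X_i/(X_i+Y_i)$ has density $f_i$ for $i=1,2,3,4$. Since $Z_i$ also has density $f_i$, and a density determines the law, we conclude $Z_i\stackrel{d}{=}X_i/(X_i+Y_i)$.

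The only mildly delicate step is justifying the differentiation under the integral sign in the general fact. Routing the argument through Fubini, rather than differentiating directly, avoids this issue.
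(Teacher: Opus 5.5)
Your proof is correct and has the same structure as the paper's: derive the density of $P/(P+Q)$ for independent positive $P,Q$ and specialize to the four pairs. The only difference is technique: you get that density from the CDF (via Fubini, as in the proof of Proposition~\ref{remark-1}), while the paper applies the bivariate Jacobian to $(Z_i,U_i)=(X_i/(X_i+Y_i),Y_i)$, integrates out $U_i$, and needs the substitution $u=sx$ to reach the form in \eqref{def-f-1}. Your explicit check that $g_X,g_Y$ integrate to one, being the laws of the maximum of two i.i.d.\ copies, is a point the paper leaves implicit.
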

		\begin{proof}
			The proof follows by using the Jacobian method and then by
			determining the marginal with respect to $Z_i$ of the random vector $(Z_i,U_i)^\top$, where $Z_i = X_i/(X_i + Y_i)$ and $U_i = Y_i$, $i=1,2,3,4$.
		\end{proof}
		
		By combining \eqref{rep-stoch} and Proposition \ref{prop-1}, with $(X, Y)^\top$ following a Morgenstern's bivariate  distribution  (with gamma margins), we get
        {\color{black}
        \begin{align}\label{rep-stoch-1}
   	F_Z(z;\boldsymbol{\theta})
	=
	F_{{X\over X+Y}}(z)
	=
	\sum_{i=1}^{4} \pi_iF_{{X_i\over X_i+Y_i}}(z),  
	\quad 0<z<1.
\end{align}
}


	\subsection{The cumulative distribution function}\label{subsec:cdf}
\noindent

Using the stochastic representation of $Z\sim {\rm EB}(\boldsymbol{\theta})$, given in \eqref{rep-stoch-1-1}, by the Law of total expectation we have (for $s=1/z-1$ and  $0<z<1$)
\begin{align}\label{id-cdf}
	F_Z(z;\boldsymbol{\theta})
	=
	\mathbb{P}\left({X\over X+Y}\leqslant z\right)
	=
	\mathbb{P}\left(s X\leqslant Y\right)
	&=
	1-
	\mathbb{P}\left(Y\leqslant s X\right)
	\nonumber
	\\[0,2cm]
	&=
	1
	-
	\int_0^\infty 
	\mathbb{P}\left(Y\leqslant s X\vert X=x \right)
	f_X(x)
	{\rm d}x
	\nonumber
		\\[0,2cm]
	&=
	1
	-
	\int_0^\infty 
	\int_0^{sx}
	f_{X,Y}(x,y)
	{\rm d}y
	{\rm d}x.
\end{align}

Since $(X, Y)^\top$ has Morgenstern's bivariate  distribution \eqref{Morgenstern-cdf}-\eqref{Morgenstern} (with gamma margins) (see Subsection 	\ref{EB model arising as a ratio}), the above identities are briefly written as
\begin{align}\label{int-1}
	F_Z(z;\boldsymbol{\theta})
	=
	1
-
\int_0^\infty
	f_X(x)
	\left\{
	F_Y(sx)+\rho[2F_X(x)-1]
	\left[\int_0^{sx}2f_Y(y)F_Y(y){\rm d}y-F_Y(sx)\right]
	\right\}
	{\rm d}x.
\end{align}

As
$Y~\sim{\rm Gamma}(\beta,\theta)$ {\color{black} and ${\rm d}F_Y^2(y)=2f_Y(y)F_Y(y){\rm d}y$}, 
we obtain
\begin{align}\label{int-interm}
	\int_0^{sx}2f_Y(y)F_Y(y){\rm d}y
    {\color{black}
    \, =
   F_Y^2(sx) 
   =
   \left[{\gamma (\beta ,\theta s x)\over \Gamma(\beta)}\right]^2.
    }
\end{align}

As $X~\sim{\rm Gamma}(\alpha,\theta)$ and $Y~\sim{\rm Gamma}(\beta,\theta)$, by using \eqref{int-interm}, $F_Z(z;\boldsymbol{\theta})$ in \eqref{int-1} is written as
%
\begin{align*}
F_Z(z;\boldsymbol{\theta})
&
=
1-
	{\theta ^{\alpha}\over\Gamma(\alpha)} 
	\int_0^\infty
	x^{\alpha -1}
	\exp(-\theta x)
	\left[
	{\gamma(\beta,\theta sx)\over \Gamma(\beta)}
	+
	\rho\left\{2\, {\gamma(\alpha,\theta x)\over \Gamma(\alpha)}-1\right\}
{\color{black}
    {\gamma(\beta,\theta sx)\over \Gamma(\beta)}
	\left\{
	{\gamma(\beta,\theta sx)\over \Gamma(\beta)}
    -
    1
	\right\}
    }
	\right]
	{\rm d}x
\\[0.2cm]
& {\color{black}
=
1
+
(\rho-1)\,
\frac{\theta^\alpha}{\Gamma(\alpha)\Gamma(\beta)}
\int_0^\infty
x^{\alpha-1} \exp(-\theta x)
\gamma(\beta,\theta s x)\,{\rm d}x
}
\\[0.2cm]
&
{\color{black}
-
\rho \,
\frac{2\theta^\alpha}{\Gamma^2(\alpha)\Gamma^2(\beta)}
\int_0^\infty
x^{\alpha-1} \exp(-\theta x)
\gamma(\alpha,\theta x)
\gamma^2(\beta,\theta s x)\,{\rm d}x
}
\\[0.2cm]
&
{\color{black}
+
\rho \,
\frac{2\theta^\alpha}{\Gamma^2(\alpha)\Gamma(\beta)}
\int_0^\infty
x^{\alpha-1} \exp(-\theta x)
\gamma(\alpha,\theta x)
\gamma(\beta,\theta s x)\,{\rm d}x
}
\\[0.2cm]
&
{\color{black}
+
\rho \,
\frac{\theta^\alpha}{\Gamma(\alpha)\Gamma^2(\beta)}
\int_0^\infty
x^{\alpha-1} \exp(-\theta x)
\gamma^2(\beta,\theta s x)\,{\rm d}x,
\quad 
s=1/z-1.
}
\end{align*}
By using the identities in \eqref{prop-app-3-1} and
\eqref{Prudnikov2002-2},
we have (for $0<z<1$)
{\color{black}
\begin{align*}
F_Z(z;\boldsymbol{\theta})
&=
1
+
(\rho-1)\,
\frac{\Gamma(\alpha+\beta)} {\beta\,\Gamma(\alpha)\Gamma(\beta)} \, 
(1-z)^\beta z^{\alpha}
\,{}_2F_1\!\left(\alpha+\beta,1;\beta+1;1-z\right)
\\[0.2cm]
&
+
\rho \,
\frac{2\Gamma(2\alpha+\beta)}{\alpha\beta\,\Gamma^2(\alpha)\Gamma(\beta)} \, 
(1-z)^\beta z^{2\alpha}
\,
F_2\!\left(2\alpha+\beta,1,1;\alpha+1,\beta+1;z,1-z\right)
\\[0.2cm]
&
+
\rho \,
\frac{\Gamma(\alpha+2\beta)}{\beta^2\,\Gamma(\alpha)\Gamma^2(\beta)} \,
(1-z)^{2\beta}
\left(\frac{z}{2-z}\right)^{\alpha+2\beta}
\,
F_2\!\left(\alpha+2\beta,1,1;\beta+1,\beta+1;
\frac{1-z}{2-z},\frac{1-z}{2-z}\right)
\\[0.2cm]
&
-
\rho \,
\frac{2}{\Gamma^2(\alpha)\Gamma^2(\beta)}
\int_0^\infty
u^{\alpha-1}\exp({-u})
\gamma(\alpha,u)
\gamma^2\!\left(\beta,\frac{1-z}{z}\,u\right) {\rm d}u.
\end{align*}
}
%
By using the well-known identity
\begin{align*}
B_x(a,b)={x^a(1-x)^b\over a}\, _2F_1(1,a+b;1+a;x),
\end{align*}
where $B_x(a,b)=\int_0^x t^{a-1}(1-t)^{b-1}{\rm d} t$ is the incomplete beta function,
we write $F_Z(z{\color{black} ;\boldsymbol{\theta}})$ as follows (for $0<z<1$)
\begin{align}\label{cdf-unit}
	F_Z(z;\boldsymbol{\theta})
	&=
	1
	+
	(\rho-1)
	{I_{1-z}(\beta,\alpha)} 
	\nonumber
	\\[0,2cm]
	&
    %
{\color{black}
+
\rho \,
\frac{2\Gamma(2\alpha+\beta)}{\alpha\beta\,\Gamma^2(\alpha)\Gamma(\beta)} \, 
(1-z)^\beta z^{2\alpha}
\,
F_2\!\left(2\alpha+\beta,1,1;\alpha+1,\beta+1;z,1-z\right)
}
\nonumber
\\[0.2cm]
&
{\color{black}
+
\rho \,
\frac{\Gamma(\alpha+2\beta)}{\beta^2\,\Gamma(\alpha)\Gamma^2(\beta)} \,
(1-z)^{2\beta}
\left(\frac{z}{2-z}\right)^{\alpha+2\beta}
\,
F_2\!\left(\alpha+2\beta,1,1;\beta+1,\beta+1;
\frac{1-z}{2-z},\frac{1-z}{2-z}\right)
}
\nonumber
\\[0.2cm]
&
{\color{black}
-
\rho \,
\frac{2}{\Gamma^2(\alpha)\Gamma^2(\beta)}
\int_0^\infty
u^{\alpha-1}\exp({-u})
\gamma(\alpha,u)
\gamma^2\!\left(\beta,\frac{1-z}{z}\,u\right) {\rm d}u,
}
\end{align}
where $I_z(a,b)=B_{z}(a,b)/\textrm{B}(a,b)$ is the  regularized incomplete beta function. {\color{black}To the best of our knowledge, the above integral does not admit a closed-form expression in terms of known special functions; therefore, it must be evaluated numerically.}

\begin{remark}\label{remark:beta_particular}
	When $\rho=0$, by using the well-known identity 
	$
		{I_{z}(\alpha,\beta)} 
		=
		1-
	{I_{1-z}(\beta,\alpha)},
$
	we have (for $0<z<1$)
	\begin{align*}
	F_Z(z;\boldsymbol{\theta})
=
{I_{z}(\alpha,\beta)},
	\end{align*}
which is a very well-known result in the literature.
\end{remark}

\subsection{Moment-generating function and moments}
\noindent

Given that $Z\sim {\rm EB}(\boldsymbol{\theta})$  has unitary support, its moment-generating function (MGF), $M_Z(t)$, and raw moments, $\mu_n$, $n\in\mathbb{N}$,  exist and are finite, particularly within an open interval encompassing zero. Then  $M_Z(t)$ may be expanded in a
power series about $0$ in the following form
\begin{align*}
M_Z(t)=1+\sum_{n=1}^{\infty} {\mu_n\over n!}\, t^n.
\end{align*}

In what follows, we derive a closed-form expression for  $\mu_n$. Indeed, by using the well-known formula for moments of positive order $\mathbb{E}(X^p)=p\int_{0}^{\infty} x^{p-1}\mathbb{P}(X>x){\rm d}x$ of positive random variables {\color{black}(which simplifies to $\mathbb{E}(X^p)=p\int_{0}^{1} x^{p-1}\mathbb{P}(X>x){\rm d}x$ for random variables with support $(0,1)$, since in this case $\mathbb{P}(X>x)=0$ for $x\geqslant 1$)}, we have
\begin{align*}
	\mu_n
	=
	n\int_{0}^{1} z^{n-1}[1-F_Z(z;\boldsymbol{\theta})]{\rm d}z,
\end{align*}
where $F_Z(z;\boldsymbol{\theta})$ is as in \eqref{cdf-unit}. So, from \eqref{cdf-unit}, 
\begin{align}\label{raw-moments}
	\mu_n
	&=
	(1-\rho)n\int_{0}^{1} z^{n-1} {I_{1-z}(\beta,\alpha)} {\rm d}z
	\nonumber
	\\[0,2cm]
	&
    %
    {\color{black}
-
\rho \,
\frac{2n \Gamma(2\alpha+\beta)}{\alpha\beta\,\Gamma^2(\alpha)\Gamma(\beta)} \, 
\int_0^1
(1-z)^\beta z^{2\alpha+n-1}
\,
F_2\!\left(2\alpha+\beta,1,1;\alpha+1,\beta+1;z,1-z\right)
{\rm d}z
}
\nonumber
\\[0.2cm]
&
{\color{black}
-
\rho \,
\frac{n\Gamma(\alpha+2\beta)}{\beta^2\,\Gamma(\alpha)\Gamma^2(\beta)} \,
\int_0^1
(1-z)^{2\beta}
\left(\frac{z}{2-z}\right)^{\alpha+2\beta}
z^{n-1}
\,
F_2\!\left(\alpha+2\beta,1,1;\beta+1,\beta+1;
\frac{1-z}{2-z},\frac{1-z}{2-z}\right)
{\rm d}z
}
\nonumber
\\[0.2cm]
&
{\color{black}
+
\rho \,
\frac{2n}{\Gamma^2(\alpha)\Gamma^2(\beta)}
\int_0^1
z^{n-1}
\int_0^\infty
u^{\alpha-1}\exp({-u})
\gamma(\alpha,u)
\gamma^2\!\left(\beta,\frac{1-z}{z}\,u\right) {\rm d}u
{\rm d}z.
}
\end{align}
For the particular scenario where $\alpha=\beta$, we obtain $\mu_1=1/2$ (see Subsection \ref{Symmetry}). Due to the integral's complexity in \eqref{raw-moments}, numerical methods are requisite for its computation.
{\color{black} Since closed-form moments are unavailable, Table \ref{tab:moments} provides numerical approximations for $\mu_n$ (with $n=1,\ldots,10,50,100$) for different choices of $\rho$ in the special case when $\alpha=\beta=1$.

\begin{table}[H]
	\caption{{\color{black} Moments values for different choices of $\rho$ ($\alpha=\beta=1)$.}}
\centering
\begin{tabular}{cccccccccc}
  \hline
$\rho$ & $-0.95$ & $-0.9$ & $-0.75$ & $-0.5$ & $0$ & $0.5$ & $0.75$ & $0.9$ & $0.95$ \\ 
  \hline
$\mu_1$ & 0.500 & 0.500 & 0.500 & 0.500 & 0.500 & 0.500 & 0.500 & 0.500 & 0.500 \\ 
  $\mu_2$ & 0.348 & 0.347 & 0.345 & 0.341 & 0.333 & 0.326 & 0.322 & 0.319 & 0.319 \\ 
  $\mu_3$ & 0.272 & 0.271 & 0.268 & 0.262 & 0.250 & 0.238 & 0.232 & 0.229 & 0.228 \\ 
  $\mu_4$ & 0.225 & 0.224 & 0.220 & 0.213 & 0.200 & 0.187 & 0.180 & 0.176 & 0.175 \\ 
  $\mu_5$ & 0.193 & 0.192 & 0.188 & 0.181 & 0.167 & 0.153 & 0.146 & 0.142 & 0.140 \\ 
  $\mu_6$ & 0.169 & 0.168 & 0.164 & 0.157 & 0.143 & 0.129 & 0.122 & 0.118 & 0.116 \\ 
  $\mu_7$ & 0.151 & 0.150 & 0.146 & 0.139 & 0.125 & 0.111 & 0.104 & 0.100 & 0.099 \\ 
  $\mu_8$ & 0.137 & 0.135 & 0.131 & 0.124 & 0.111 & 0.098 & 0.091 & 0.087 & 0.086 \\ 
  $\mu_9$ & 0.125 & 0.123 & 0.119 & 0.113 & 0.100 & 0.087 & 0.081 & 0.077 & 0.075 \\ 
  $\mu_{10}$ & 0.115 & 0.113 & 0.110 & 0.103 & 0.091 & 0.078 & 0.072 & 0.068 & 0.067 \\ 
  $\mu_{50}$ & 0.028 & 0.027 & 0.026 & 0.024 & 0.020 & 0.015 & 0.013 & 0.012 & 0.011 \\ 
  $\mu_{100}$ & 0.014 & 0.014 & 0.013 & 0.012 & 0.010 & 0.008 & 0.006 & 0.006 & 0.006 \\ 
   \hline
\end{tabular}
	\label{tab:moments}
\end{table}
}

\section{Estimation}\label{esti}
\noindent

{\color{black}
\subsection{Maximum likelihood estimation}
}
\noindent

The log-likelihood function for $\boldsymbol{\theta} = (\alpha,\beta,\rho)^\top$ based on a sample of $n$ independent observations is given by
\begin{align}\label{loglik}
\ell(\boldsymbol{\theta})
&=
\sum_{i=1}^{n}
\log
\Bigg[
(1+\rho) \,
\frac{z_i^{\alpha-1}(1-z_i)^{\beta-1}}
{\mathrm{B}(\alpha,\beta)}
\nonumber 
\\[0,2cm] \nonumber
&\quad
-
\rho \,
\frac{2\Gamma(2\alpha+\beta)}
{\alpha\Gamma^2(\alpha)\Gamma(\beta)}
\frac{z_i^{2\alpha-1}(1-z_i)^{\beta-1}}
{(1+z_i)^{2\alpha+\beta}}
\,
_2F_1\!\left(1,2\alpha+\beta;1+\alpha;\frac{z_i}{1+z_i}\right)
\\[0,2cm] \nonumber
&\quad
-
\rho \,
\frac{2\Gamma(\alpha+2\beta)}
{\beta\Gamma(\alpha)\Gamma^2(\beta)}
\frac{z_i^{\alpha-1}(1-z_i)^{2\beta-1}}
{(2-z_i)^{\alpha+2\beta}}
\,
_2F_1\!\left(1,\alpha+2\beta;1+\beta;\frac{1-z_i}{2-z_i}\right)
\\[0,2cm] 
&\quad
+
\rho \,
\frac{4^{1-\alpha-\beta}\Gamma(2\alpha+2\beta)}
{\alpha\beta\Gamma^2(\alpha)\Gamma^2(\beta)}
z_i^{2\alpha-1}(1-z_i)^{2\beta-1} \,
F_2\!\left(
2\alpha+2\beta,1,1;
\alpha+1,\beta+1;
\frac{z_i}{2},
\frac{1-z_i}{2}
\right)
\Bigg]. 
\end{align}

The maximum likelihood estimator (MLE)
$\boldsymbol{\widehat{\theta}} = (\widehat{\alpha},\widehat{\beta},\widehat{\rho})^\top$ of $\boldsymbol{\theta} = (\alpha,\beta,\rho)^\top$
is obtained by the maximization of the log-likelihood function (\ref{loglik}). However, it is
not possible to derive an analytical solution for the MLE
$\boldsymbol{\widehat{\theta}}$; therefore, we must require
a numerical solution using some optimization algorithm such as Newton–Raphson
and quasi-Newton. But, computations of likelihood $\ell(\boldsymbol{\theta})$ can be extremely costly due to the dependence of PDF \eqref{pdf-main} and CDF \eqref{cdf-unit} on special functions such as the Gauss hypergeometric ${}_2F_1$ and Appell $F_2$. Moreover, optimization methods may not converge properly.

{\color{black}
\begin{remark}
The Fisher information matrix associated with the parameter vector 
\(\boldsymbol{\theta}=(\alpha,\beta,\rho)^\top\) is defined by
\begin{equation*}
    I(\boldsymbol{\theta})=\mathbb E\left[
\nabla \log f_Z(Z;\boldsymbol{\theta})\,
(\nabla \log f_Z(Z;\boldsymbol{\theta}))^\top
\right],
\end{equation*}
where $\nabla=(\partial/\partial \alpha, \partial/\partial \beta, \partial/\partial \rho)^\top$ is the gradient operator and $f_Z(z;\boldsymbol{\theta})$ is the EB PDF defined in \eqref{pdf-main}.

Observe that, when \(\rho\to 0\),
\[
f_Z(z;\boldsymbol{\theta})
\to
f_0(z;\alpha,\beta)
=
\frac{z^{\alpha-1}(1-z)^{\beta-1}}
{\mathrm B(\alpha,\beta)},
\quad 0<z<1,
\]
that is, the classical Beta distribution. Hence, the corresponding Fisher information matrix reduces to
\[
I_0(\alpha,\beta)
=
\begin{pmatrix}
\psi_1(\alpha)-\psi_1(\alpha+\beta)
&
-\psi_1(\alpha+\beta)
\\[0.2cm]
-\psi_1(\alpha+\beta)
&
\psi_1(\beta)-\psi_1(\alpha+\beta)
\end{pmatrix},
\]
where \(\psi_1(\cdot)\) denotes the trigamma function.


On the other hand, for the cases $0<\vert \rho\vert\leqslant 1$ is unknown, the Fisher information matrix is given by
\begin{equation}\label{eq:Fisher_information}
  I(\boldsymbol{\theta})
= \left( \int_0^1
\frac{\partial \log f_Z(z;\boldsymbol{\theta})}{\partial\theta_i}
\frac{\partial \log f_Z(z;\boldsymbol{\theta})}{\partial\theta_j}
f_Z(z;\boldsymbol{\theta}){\rm d}z \right)_{1\leq i,j\leq 3}, ~~\theta_1\equiv\alpha, \theta_2\equiv\beta, \theta_3\equiv\rho.
\end{equation}

Since the density \(f_Z(z;\boldsymbol{\theta})\) involves Gauss hypergeometric and Appell functions, the Fisher information matrix \(I(\boldsymbol{\theta})\) does not admit explicit closed-form expressions. Therefore, following \citet{EfronHinkley1978}, it is preferable to use the observed Fisher information, defined as the negative Hessian matrix, since it is directly computed from the log-likelihood function and typically provides more accurate standard error approximations in finite samples.
\end{remark}
}

{\color{black}
\subsection{Two-stage pseudolikelihood}
}
\noindent

{\color{black} Aiming to overcome the computational challenges of the MLE, we provide two alternative estimation approaches. Our first} strategy is as follows: given a random sample from $(X, Y)^\top$, we first fit the margins as $\operatorname{Gamma} (\alpha, 1)$ and $\operatorname{Gamma} (\beta, 1)$, and then 
{\color{black}
we estimate $\rho$ in a second step, done using the two-stage pseudolikelihood (PML) estimation. In this case, given an observed sample $(x_i, y_i)$, $i=1,\cdots, n$, the PML of $\widehat{\rho}$ is given by:
\begin{equation}\label{mletheta}
    \widehat{\rho} = \arg\max_\rho\sum_{i=1}^n \log c(F(x_i), G(y_i); \rho),
\end{equation}
where $c(\cdot, \cdot;\rho)$ is the copula density and the maximization is taken over the respective parameter space of $\rho$.
Although this procedure does not guarantee the global optimum of the joint MLE, it offers computational advantages in terms of cost and provides a good alternative for obtaining parameter estimates (cf. \cite{lawless2011comparison, lima2024assessing}).}

Thus, comparing $X$ and $Y$ through $Z=X/(X+Y)$, we obtain a natural estimator for the parameters of $Z$.
Naturally, we can first use the estimates from the $\operatorname{Beta} (\alpha, \beta )$ model and then estimate $\rho$ in a second stage, thus improving the initial Beta fit. 
{\color{black} The Algorithm \ref{algo:beta_estimation} summarizes it in the two steps.
These steps are effective in scenarios where the Beta model provides a good initial fit to the data.

\begin{algorithm}[htb!]
\caption{Estimation of $\boldsymbol{\theta}$ based on improving Beta distribution}
\label{algo:beta_estimation}
\begin{algorithmic}[1]
    \Statex \textbf{Input:} a random sample $\mathbf{Z} = (Z_1, \ldots, Z_n)^\top \in [0,1]^n$.
    \Statex \textbf{Output:}  $\widehat{\boldsymbol{\theta}} = (\hat{\alpha}, \hat{\beta}, \hat{\rho})^\top$.

 \State Using the $\operatorname{Beta}(\alpha, \beta)$ estimators, compute
 $$(\hat{\alpha}, \hat{\beta})^\top = (\hat{\alpha}(\mathbf{Z}), \hat{\beta}(\mathbf{Z}))^\top;$$

\State  Estimate $\rho$ using an appropriate optimization algorithm:
\begin{equation}\label{eq:rho_est1}
        \widehat{\rho} \in \arg\max_{\rho\in[0,1]} \ell(\rho),
    \end{equation}
    where $\ell(\rho) := \ell(\boldsymbol{\theta})\mid_{(\alpha, \beta)=(\hat{\alpha}, \hat{\beta})}$;

  \State \textbf{Return} $\widehat{\boldsymbol{\theta}} = (\hat{\alpha}, \hat{\beta}, \hat{\rho})^\top$.
    
\end{algorithmic}
\end{algorithm}


When the unit data is obtained from $Z = {X}/{(X+Y)}$, then, using Theorem 2 in \cite{Lai1978}, we obtain the estimator
\begin{equation}\label{eq:rho_est2}
    \widehat{\rho} = \frac{\widehat{\operatorname{corr}}(X,Y) \hat{\sigma}_x \hat{\sigma}_y}{\hat{\beta}_x \hat{\beta}_y},
\end{equation}
where $\sigma_x = \operatorname{Var} X$, $\sigma_y = \operatorname{Var} Y$, $\beta_x = \int_0^\infty G(x; \alpha,1) \left[1-G(x; \alpha,1)\right] {\rm d}x $,
and 
$$\beta_y = \int_0^\infty G(x; \beta,1) \left[1-G(x; \beta,1)\right] {\rm d}x, $$
in which $G(\cdot; \theta, 1)$ denotes the CDF of a Gamma model with shape $\theta$ and scale 1. {\color{black} It is easy to see that
\begin{equation*}
     \widehat{\rho} = \frac{\pi}{\alpha\beta} \frac{ \widehat{\operatorname{cov}}(X,Y) \Gamma(\alpha+1)\Gamma(\beta+1)}{\Gamma(\alpha+1/2)\Gamma(\beta+1/2)}.
\end{equation*}
Consequently, the Algorithm \ref{algo:gamma_estimation} can be used. If necessary, $(X,Y)$ are rescaled to ensure that they share the same scale parameter.

\begin{algorithm}[htb!]
\caption{Estimation of $\boldsymbol{\theta}$ based on improving Beta distribution}
\label{algo:gamma_estimation}
\begin{algorithmic}[1]
    \Statex \textbf{Input:} a random sample $(\mathbf{X}, \mathbf{Y}) = ((X_1,Y_1)^\top, \ldots, (X_n, Y_n)^\top)^\top \in \left((0,\infty) \times (0,\infty)\right)^n$.
    \Statex \textbf{Output:}  $\widehat{\boldsymbol{\theta}} = (\hat{\alpha}, \hat{\beta}, \hat{\rho})^\top$.

 \State Using the $\operatorname{Gamma}(\alpha, \theta)$ estimators, compute the shape parameters
 $$(\hat{\alpha}, \hat{\beta})^\top = (\hat{\alpha}(\mathbf{X}), \hat{\beta}(\mathbf{Y}))^\top;$$

\State  Estimate $\rho$ using \eqref{mletheta} or \eqref{eq:rho_est2};

  \State \textbf{Return} $\widehat{\boldsymbol{\theta}} = (\hat{\alpha}, \hat{\beta}, \hat{\rho})^\top$.
    
\end{algorithmic}
\end{algorithm}

\begin{remark}
    The second step \eqref{mletheta} in Algorithm  \ref{algo:gamma_estimation} requires optimization procedures. A natural initial guest is \eqref{eq:rho_est2}.
    Using the Spearman correlation $\rho_S = \frac{\rho}{3}\in\left[-1/3, 1/3\right],$ 
    we can choose the starting values $\rho_0 \approx 3 \hat{\rho}_S$. Similarly, we could use the Kendall estimate ($\tau = 2/9 \rho \in [-2/9, 2/9] $).
\end{remark}

}}

\section{Simulation study}\label{sec:simulation}

\subsection{Random sample generator}
\noindent


{\color{black} From \eqref{Morgenstern-cdf}, the joint distribution function \(F_{X,Y}(x,y)\) is obtained by composing the copula \(C(\cdot,\cdot)\), defined in \eqref{eq:copula_Morgenstern}, with the marginal Gamma CDFs \(F_X(x)\) and \(F_Y(y)\), where \(X\sim \mathrm{Gamma}(\alpha,\theta)\) and \(Y\sim \mathrm{Gamma}(\beta,\theta)\). That is,} 
$$ F_{X,Y}(x,y) = C(F_X(x), F_Y(y)), ~~x,y>0.$$

Aiming to simulate a random sample $Z_1, \ldots, Z_n$ from the extended Beta PDF \eqref{pdf-main}, we use the Algorithm \ref{algo:random_sample}, which simulates $Z = {X}/{(X+Y)} \sim f_Z(\cdot;\boldsymbol{\theta})$ based on a simulation of $(X,Y)\sim F_{X,Y}$. For a further discussion on this algorithm, we refer the reader to \cite{mai2017simulating}.

\begin{algorithm}[htb!]
\caption{Computational sampling of the extended Beta distribution}
\label{algo:random_sample}
\begin{algorithmic}[1]
    \Statex \textbf{Input:} $\bm\theta = (\alpha, \beta, \rho)^\top \in (0,\infty)^2\times[-1,1]$.
    \Statex \textbf{Output:}  $Z\sim f_Z(\cdot; \bm\theta)$.

 \State Simulate $U_2 \sim U[0, 1]$.

\State  Compute the partial derivative 
$$F_{U_1 | U_2} (u_1) := \left. \frac{\partial}{\partial u_2} C(u_1, u_2)\right|_{u_2 = U_2} = u_1\left[ 1+\rho(1-u_1)(1-2u_2) \right],~ u_1 \in [0, 1].$$

\State  Compute the generalized inverse $F_{U_1|U_2}^{-1}$ numerically:
$$F_{U_1|U_2}^{-1} (v) := \inf\left\{u_1 > 0; F_{U_1|U_2} (u_1) \geq v\right\},~ v \in (0, 1),$$

\State  Get a sample $V \sim U[0, 1]$, independent of $U_2$.

\State  Define $U_1 := F_{U_1| U_2}^{-1} (V)$ and take $(U_1, U_2)$ as the random vector of the copula $C$.

\State Compute $(X,Y)\sim F_{X,Y}$ as
$$ (X, Y)^\top = (F_X^{-1}(U_1), F_Y^{-1}(U_2))^\top, $$
where $F_{X,Y}$ is given in \eqref{Morgenstern-cdf} and $F_X$ and $F_Y$ denote the CDFs of Gamma models as described in Subsection \ref{subsec:cdf}.

  \State \textbf{Return} $Z=\frac{X}{X+Y}$.
    
\end{algorithmic}
\end{algorithm}

\subsection{R implementation and required packages}\label{subsec:packages}
\noindent

For this study, all implementations were performed in R software \citep{Rsoftware}.

{
We implement the Appell function $F_2$ in R using the double series \eqref{eq:F2}. The convergence region of \eqref{eq:F2} is sufficient to implement PDF \eqref{pdf-main} and CDF \eqref{cdf-unit} of the EB model. For other function implementations, a wider range may be required. In such cases, numerical integral representations can be employed. Some additional formulas for the Appell function $F_2$ can be found in \cite{brychkov2014some} and \cite{Ananthanarayan2023}. 
 }
The generalized hypergeometric function is computed using \textsf{hyperg\_2F1} from the \textsf{gsl} package. Thus, the probability density function (PDF) \eqref{pdf-main} can be implemented in R.

Referring to the Algorithm \ref{algo:random_sample}, the numerical inverse is required. We achieve this by the function \textsf{uniroot}.

A gradient-based method can be applied to optimize the log-likelihood function. The gradient of the objective function is obtained using the function \textsf{grad} in the package \textsf{numDeriv}.

\begin{remark}
    Another simulation algorithm could be proposed by directly inverting the CDF \eqref{cdf-unit} and applying it to a standard uniform random variable.
\end{remark}

\subsection{Simulation results {\color{black}for the two-stage pseudolikelihood estimator}}\label{sec_simulations}
\noindent

To evaluate the performance of the {\color{black}  two-stage pseudolikelihood (PML)} estimator $\hat{\bm \theta}$, we fix several values of the parameters $\bm\theta=(\alpha,\beta, \rho)^\top$, and then generate $N=1,000$ Monte-Carlo samples with sample size $n\in\{30, 100, 300, 1000\}$, of the random variable $Z\sim {\rm EB}(\alpha,\beta, \rho)$.
We analyze the {\color{black} PML estimates $\hat{\bm \theta}$ (Algorithm \ref{algo:gamma_estimation})} through relative bias (RB), and root mean squared error (RMSE).

We use the gradient descent algorithm to achieve the minimum of the minus log-likelihood function \cite[cf.][]{shalev2014understanding} to obtain the {\color{black}PML} estimates for the EB distribution. 
The gradient of the objective function can be obtained numerically, as described in Subsection \ref{subsec:packages}.

For the simulation, the following procedure was carried out:

\begin{algorithm}[H]
\caption{Assessing {\color{black}PML} estimates via Monte Carlo simulation}
\label{algo:mc}
\begin{algorithmic}[1]
    \Statex \textbf{Input:} $\bm\theta = (\alpha, \beta, \rho)^\top \in (0,\infty)^2\times[-1,1]$, $n$ (sample size), $N$ (number of Monte Carlo simulations)
    \Statex \textbf{Output:}  RB and RMSE of ML estimates.

\For{$j = 1$ to $N$}
   \State Simulate a Monte-Carlo sample
     $\mathbf{Z}^{(j)}=(Z_1^{(j)}, \cdots, Z_n^{(j)})$ via Algorithm \ref{algo:random_sample}.

\State  Compute $\hat{\boldsymbol{\theta}}^{(j)} = \hat{\boldsymbol{\theta}}^{(j)}(\mathbf{Z}^{(j)})$, {\color{black} using Algorithm \ref{algo:gamma_estimation}.}

 \EndFor

\State Compute the RB and RMSE as follows:
     $$RB(\hat{\theta}_k) = \frac{1}{N}\sum_{j=1}^N \frac{\hat{\theta}^{(j)}_k - \theta_k}{\theta_k}, ~~k=1,2,3,~\mbox{and}~ \theta_1=\alpha, \theta_2=\beta, \theta_3=\rho, $$
     and 
     $$RMSE(\hat{\theta}_k) = \frac{1}{N}\sum_{j=1}^N \left(\hat{\theta}^{(j)}_k - \theta_k\right)^2.$$

  \State \textbf{Return} $RB(\hat{\theta}_k)$ and $RMSE(\hat{\theta}_k)$ for $k=1,2,3,~\mbox{and}~ \theta_1=\alpha, \theta_2=\beta, \theta_3=\rho$.
    
\end{algorithmic}
\end{algorithm}

The simulation results are presented in Figures \ref{fig:sim_neg} and \ref{fig:sim_posi}.
The {\color{black}PML} estimator shows good behavior with minimal RB and low RMSE. Furthermore, it is clear that increasing the sample size $n$ leads RB and RMSE to approach zero.
This demonstrates the precision of the {\color{black}PML} estimator for the UF distribution, even with small sample sizes. Additionally, we observe that the RMSE of $\rho$ estimates is less affected by variations in sample size.
In general, for $n \geq 100$, the RB is below 0.10 and converges to zero as $n$ increases.  
Although relatively low (in absolute values), the relative bias of $\rho$ exhibited a different sign compared to the estimates of the parameters $\alpha$ and $\beta$.

Regarding the RMSE, large values were observed in some cases for small sample sizes. However, the value decreased when larger samples were taken ($n\in\{300, 1000\}$).  
The RMSE of the $\beta$ estimates was high ($\geq 0.5$) when $n = 30$, in both cases $\rho = -0.5$ and $\rho = 0.5 $. When $n = 100$, the values decreased to 0.20, stabilizing below 0.10 only for $n \geq 300$.  

In Figure \ref{fig:hist_MC_results}, we fixed $\bm\theta^\top = (2, 3, 0.5)$ and plotted the histogram of Monte Carlo simulation results of estimates. When $n=30$ (top), the estimates of the three parameters were right-skewed.  
We can see that as $n$ increases, the figures become symmetric, showing the good asymptotic behavior of the estimates. The same behavior occurs with the other parameters studied.

\begin{figure}[H]
	\centering
        \includegraphics[width=0.9\linewidth, height=7cm]{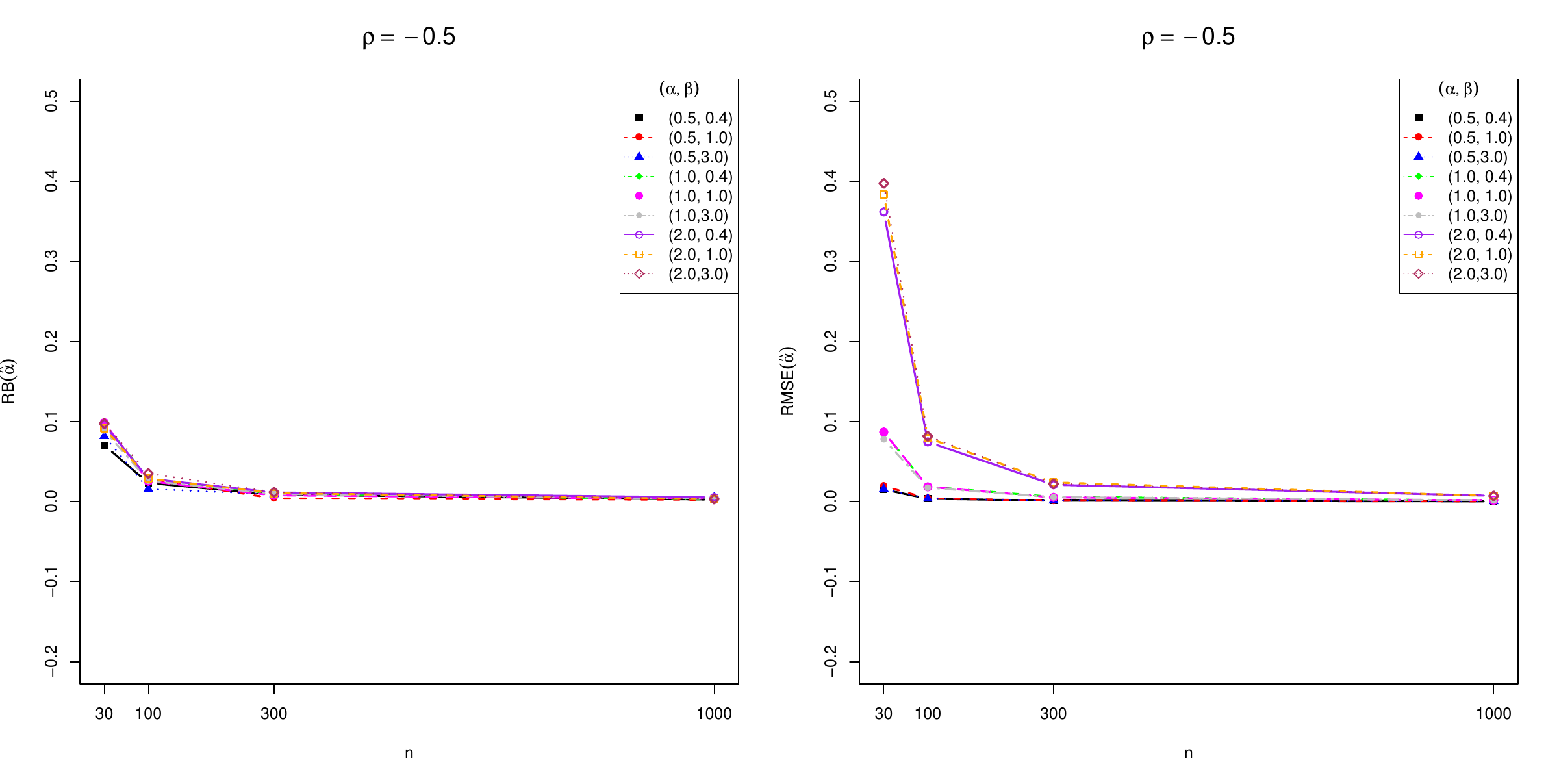}
            \includegraphics[width=0.9\linewidth, height=7cm]{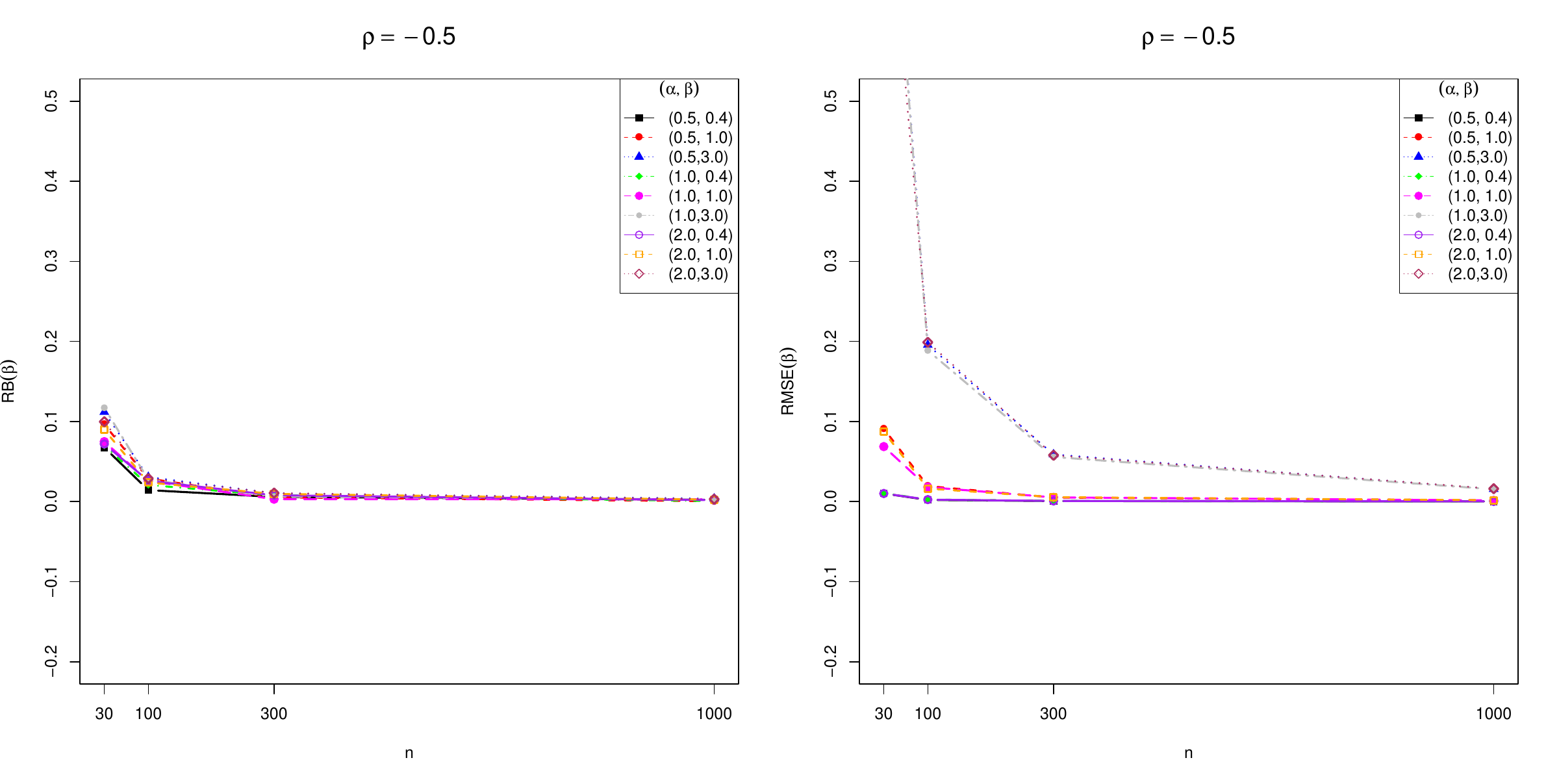}
	\includegraphics[width=0.9\linewidth, height=7cm]{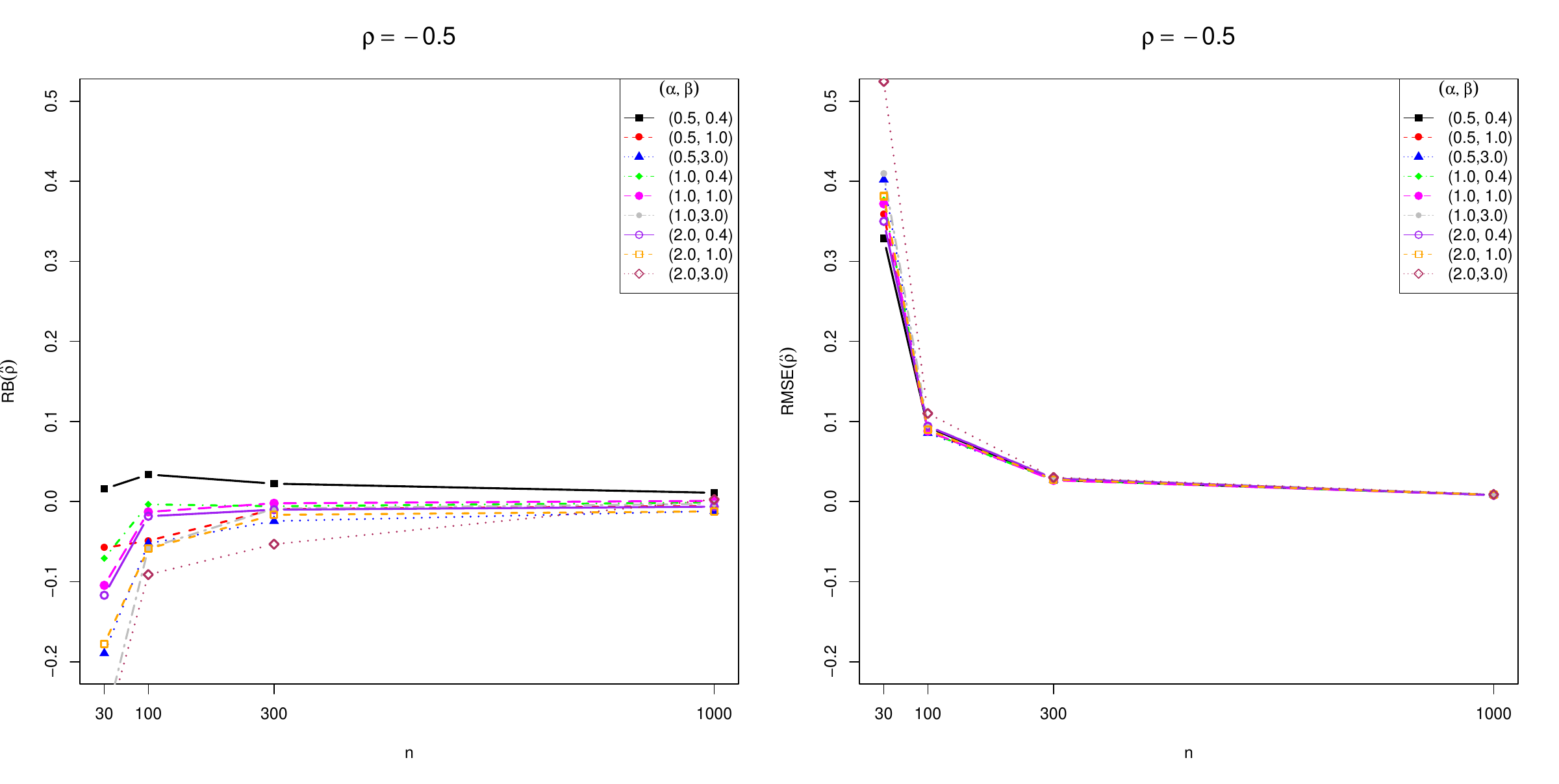}
	\caption{RB (left) and RMSE (right) for the {\color{black}PML} estimates of $\alpha$ (top), $\beta$ (middle), and $\rho$ (bottom), with varying $(\alpha, \beta)$ and fixing negative $\rho=-0.5$.}
	\label{fig:sim_neg}
\end{figure}

\begin{figure}[H]
	\centering
        \includegraphics[width=0.9\linewidth, height=7cm]{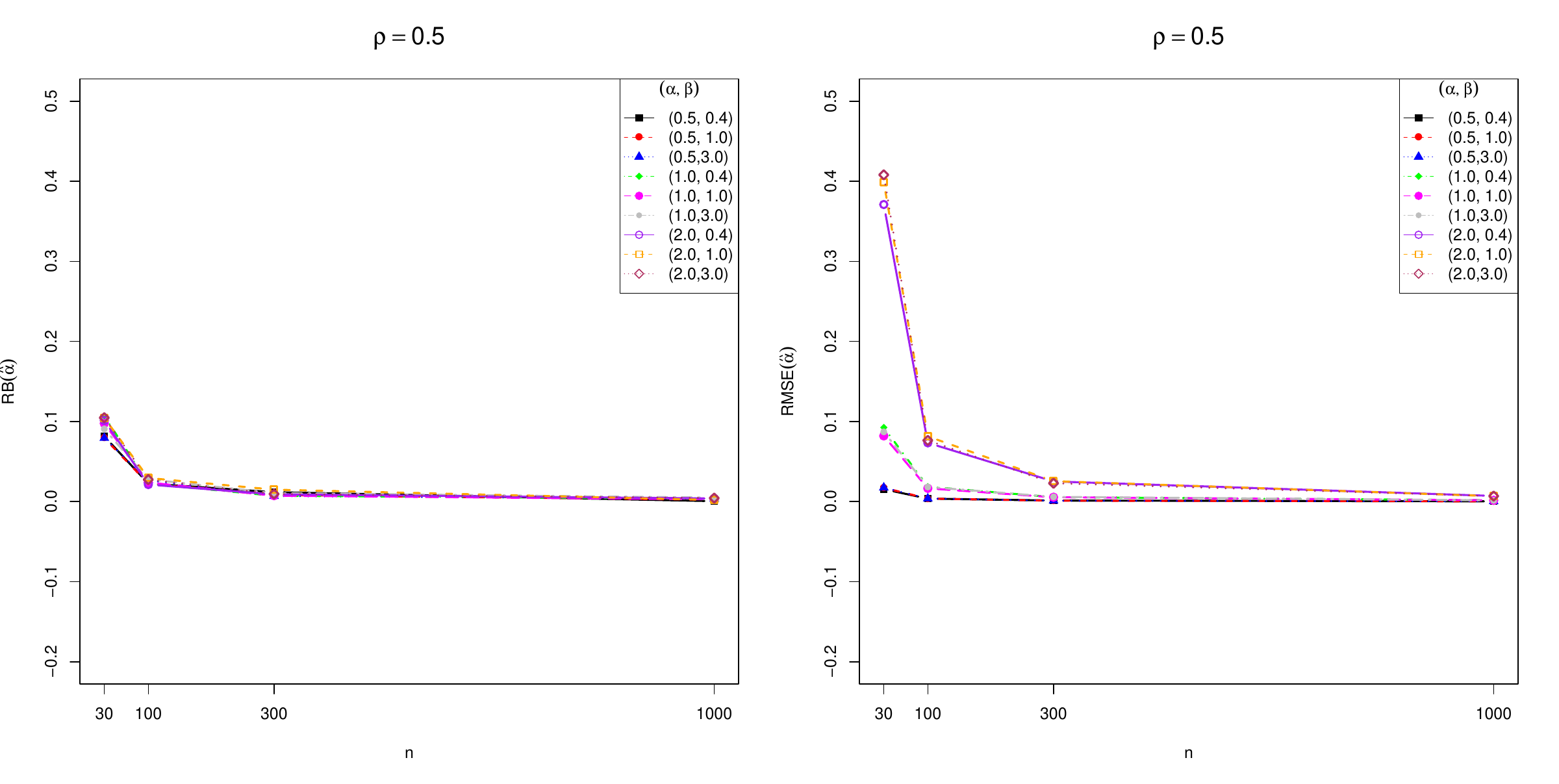}
            \includegraphics[width=0.9\linewidth, height=7cm]{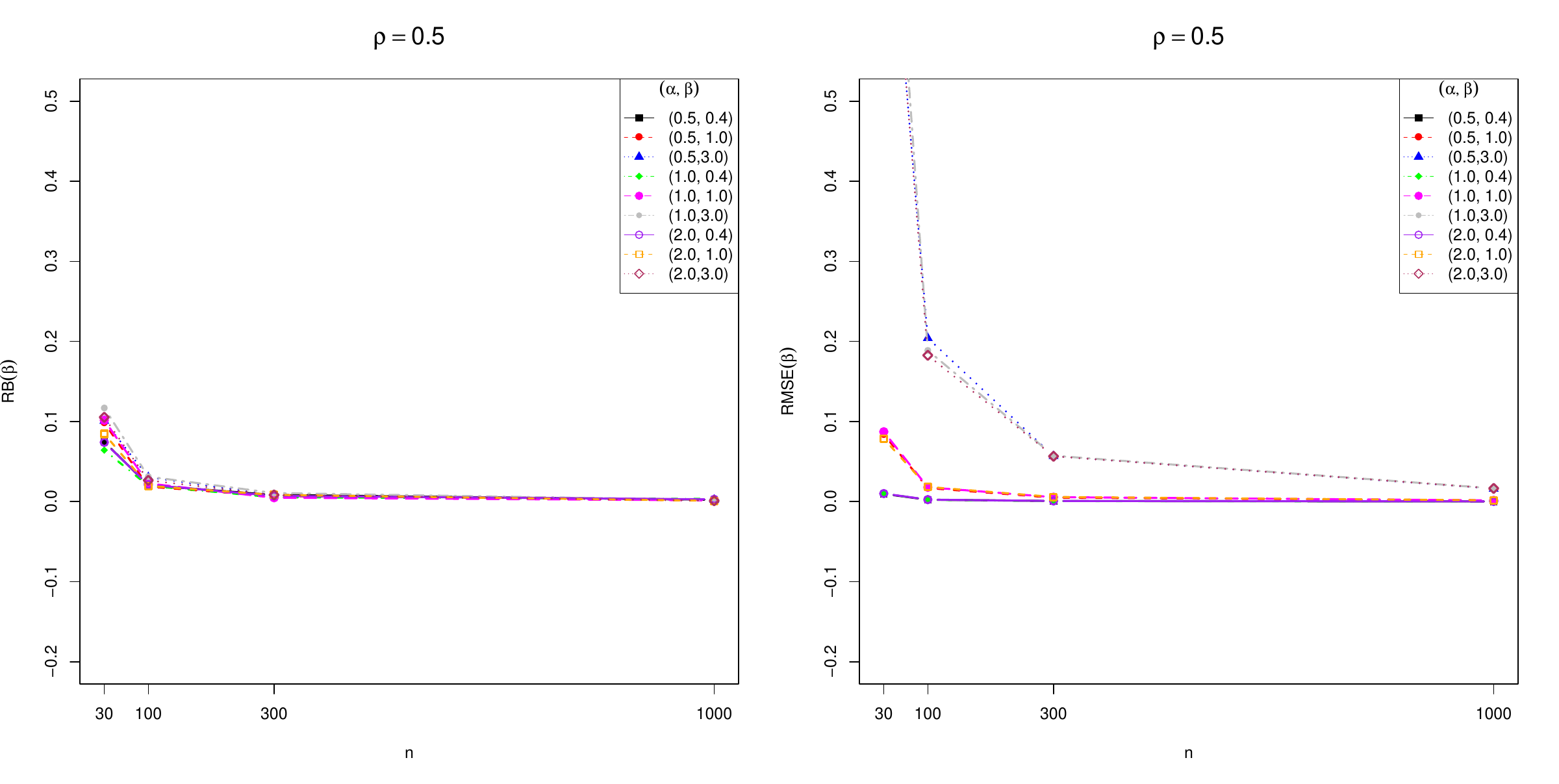}
	\includegraphics[width=0.9\linewidth, height=7cm]{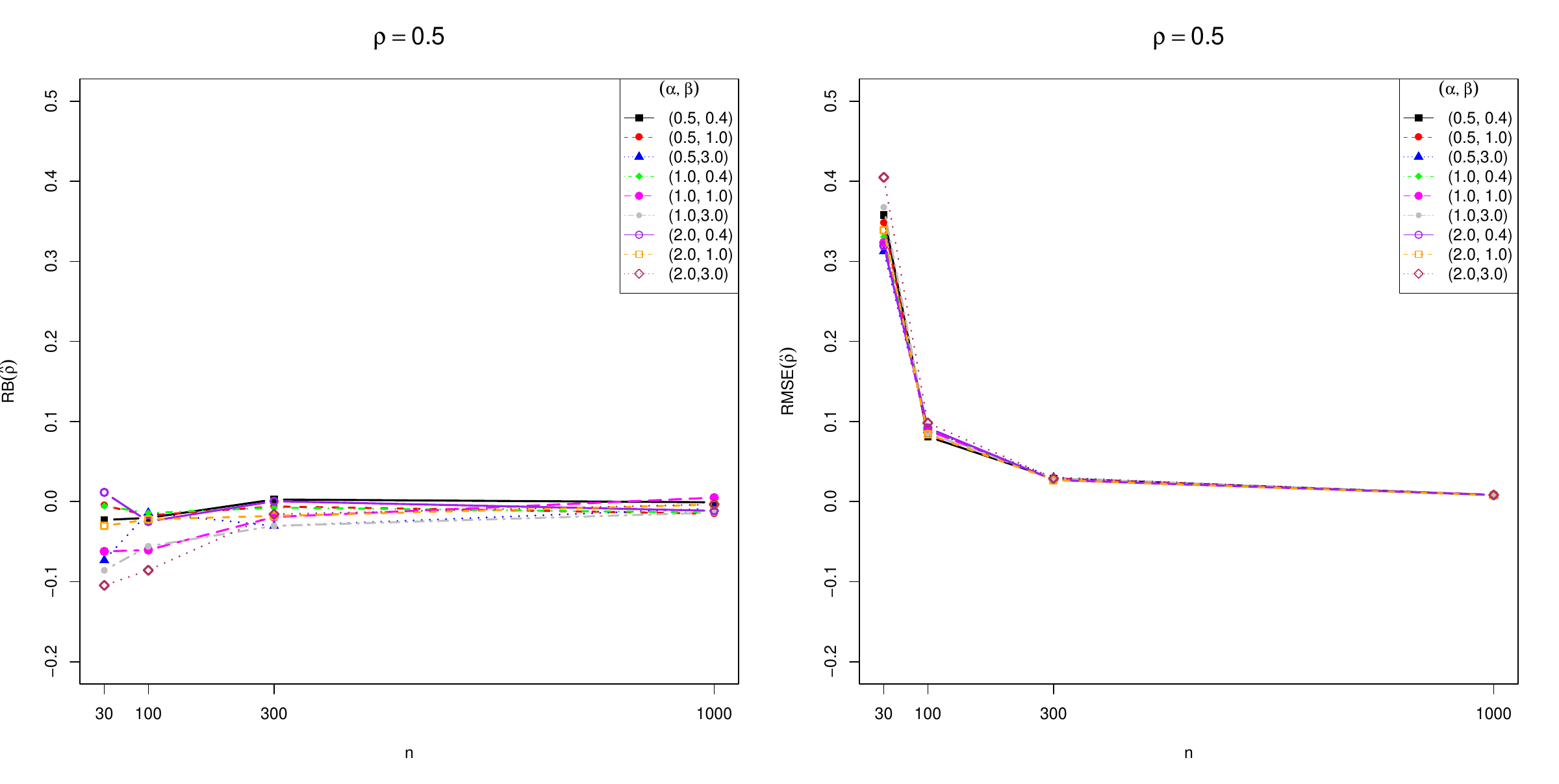}
	\caption{RB (left) and RMSE (right) for the {\color{black}PML} estimates of $\alpha$ (top), $\beta$ (middle), and $\rho$ (bottom), with varying $(\alpha, \beta)$ and fixing positive $\rho=0.5$.}
	\label{fig:sim_posi}
\end{figure}

\begin{figure}[H]
	\centering
        \includegraphics[width=0.9\linewidth, height=5cm]{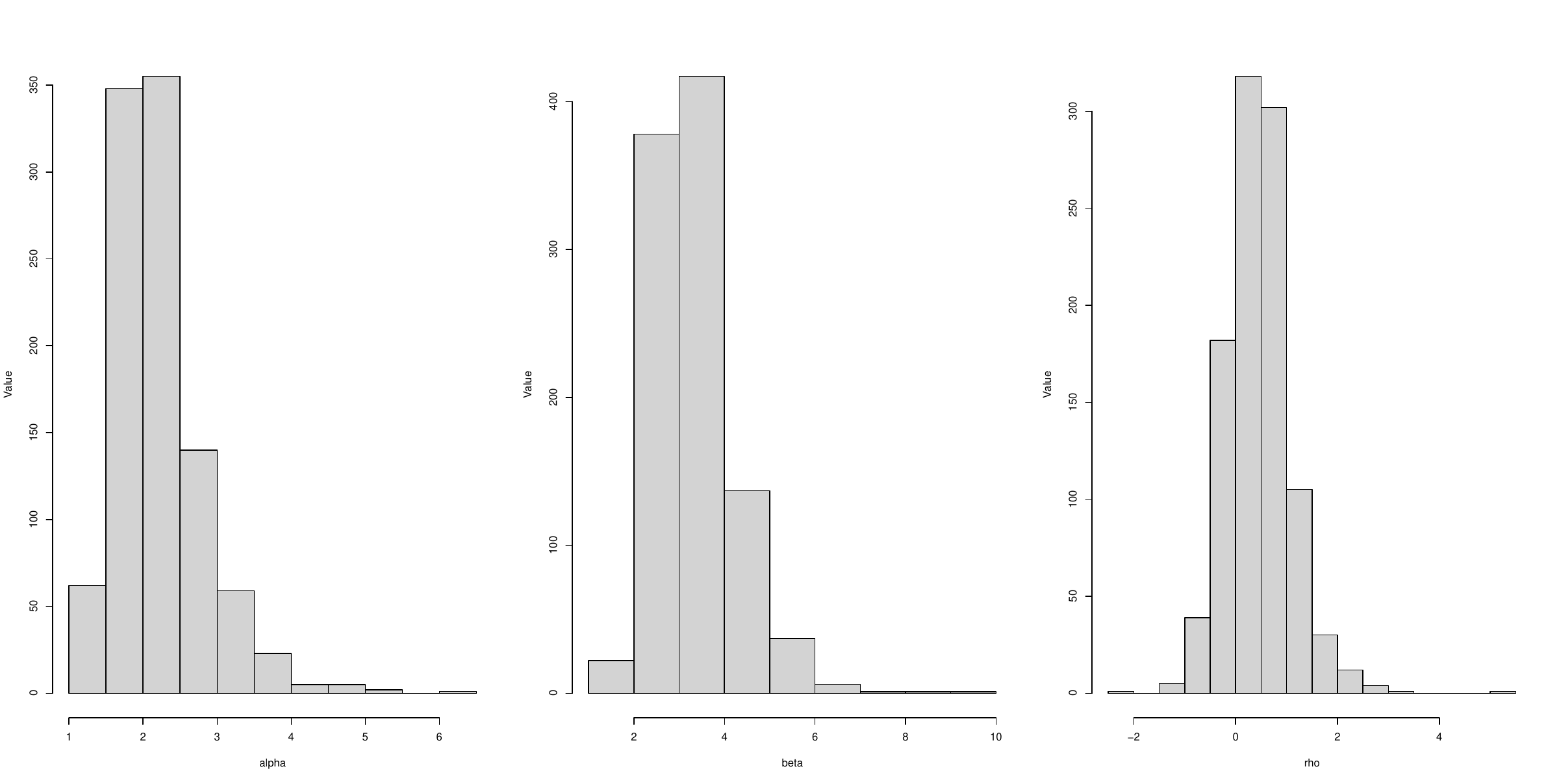}
            \includegraphics[width=0.9\linewidth, height=5cm]{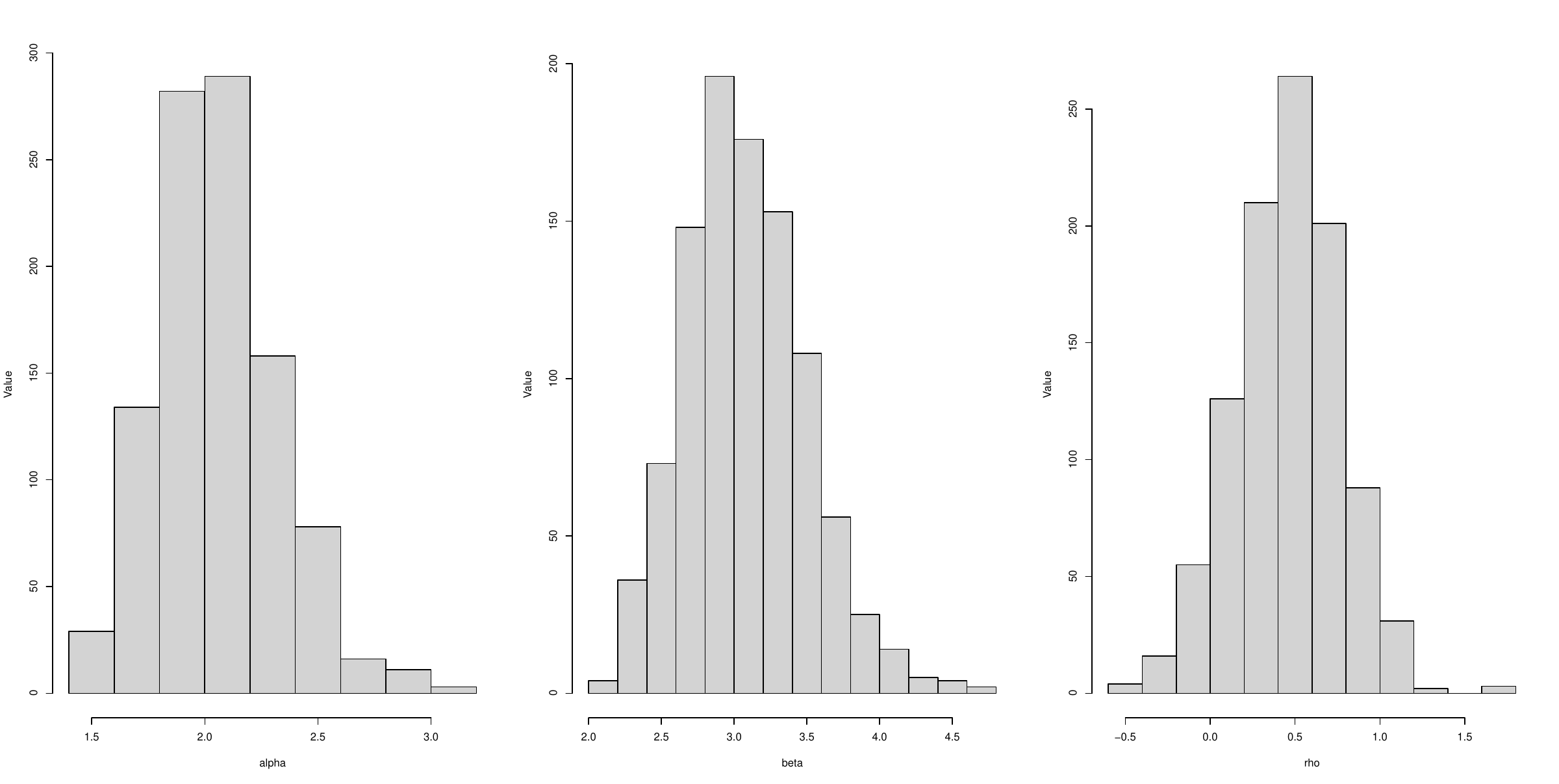}
            \includegraphics[width=0.9\linewidth, height=5cm]{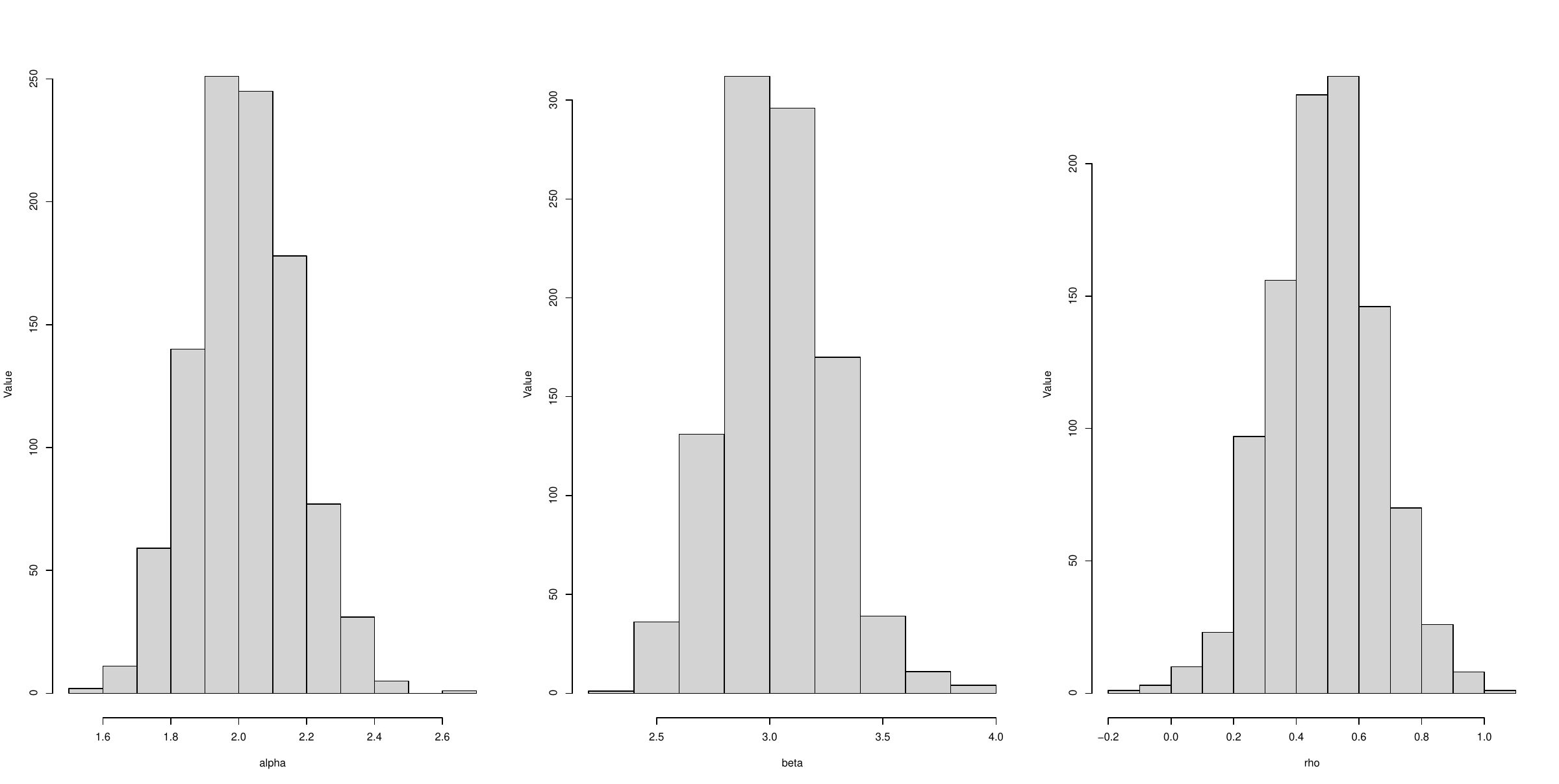}
            \includegraphics[width=0.9\linewidth, height=5cm]{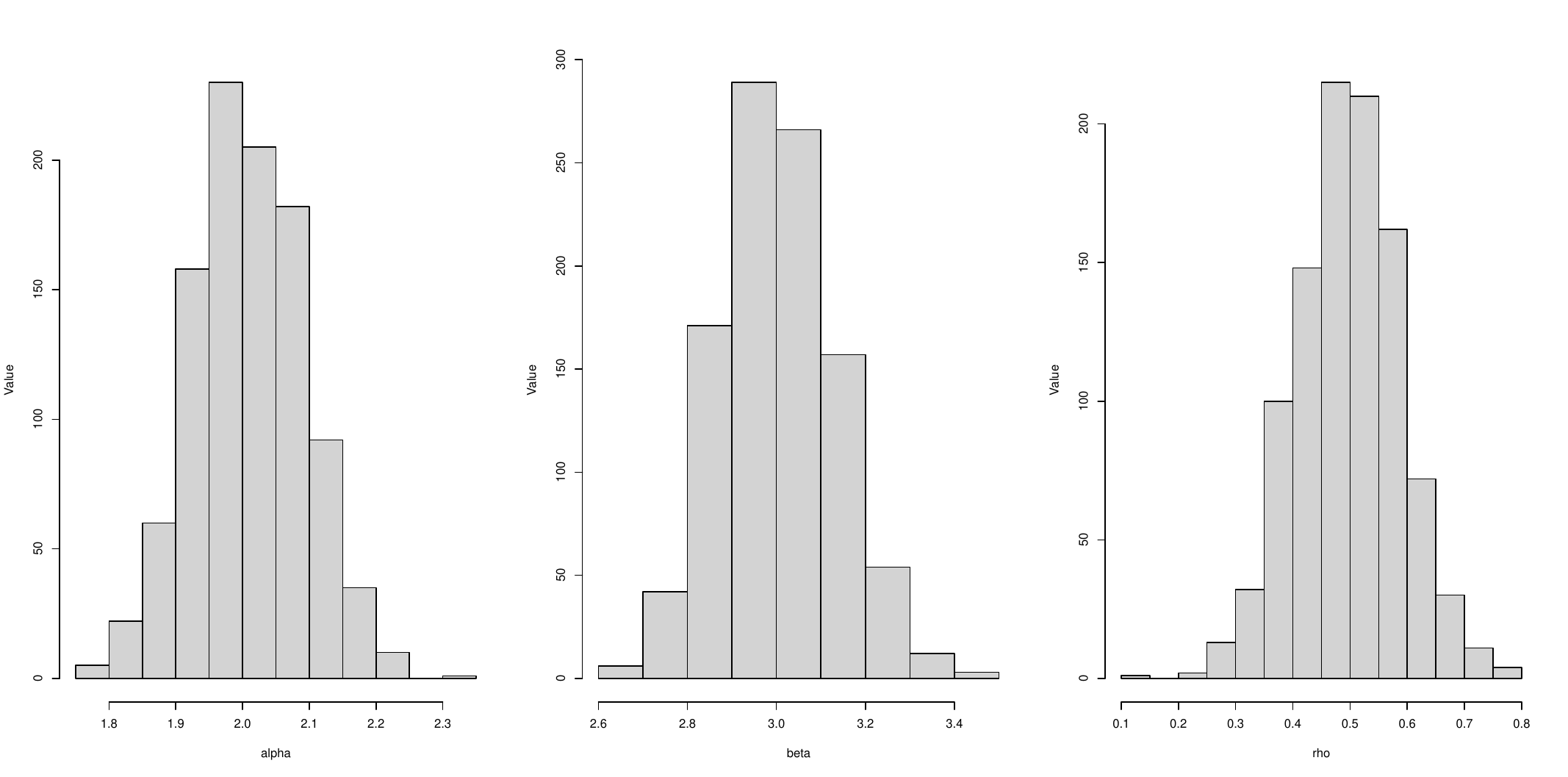}
	\caption{Histograms of Monte Carlo results for the {\color{black}PML} estimates of $\alpha$ (left), $\beta$ (middle), and $\rho$ (right), with increasing sample size $n = 30$ (top), 100, 300, and 1000 (bottom).}
	\label{fig:hist_MC_results}
\end{figure}

{\color{black}
Although gradient descent is a simple optimization algorithm, it may not be the most suitable choice for bounded-parameter likelihoods. Figure \ref{fig:algo_comparison} compares the estimation of $\rho$ using gradient descent and the \textsf{L-BFGS-B} method. The two approaches exhibit similar behavior, except in the extreme cases when $\rho = \pm 0.95$, when the \textsf{L-BFGS-B} performs better for large sample sizes.

\begin{figure}[H]
	\centering
            \includegraphics[width=1.0\linewidth, height=12cm]{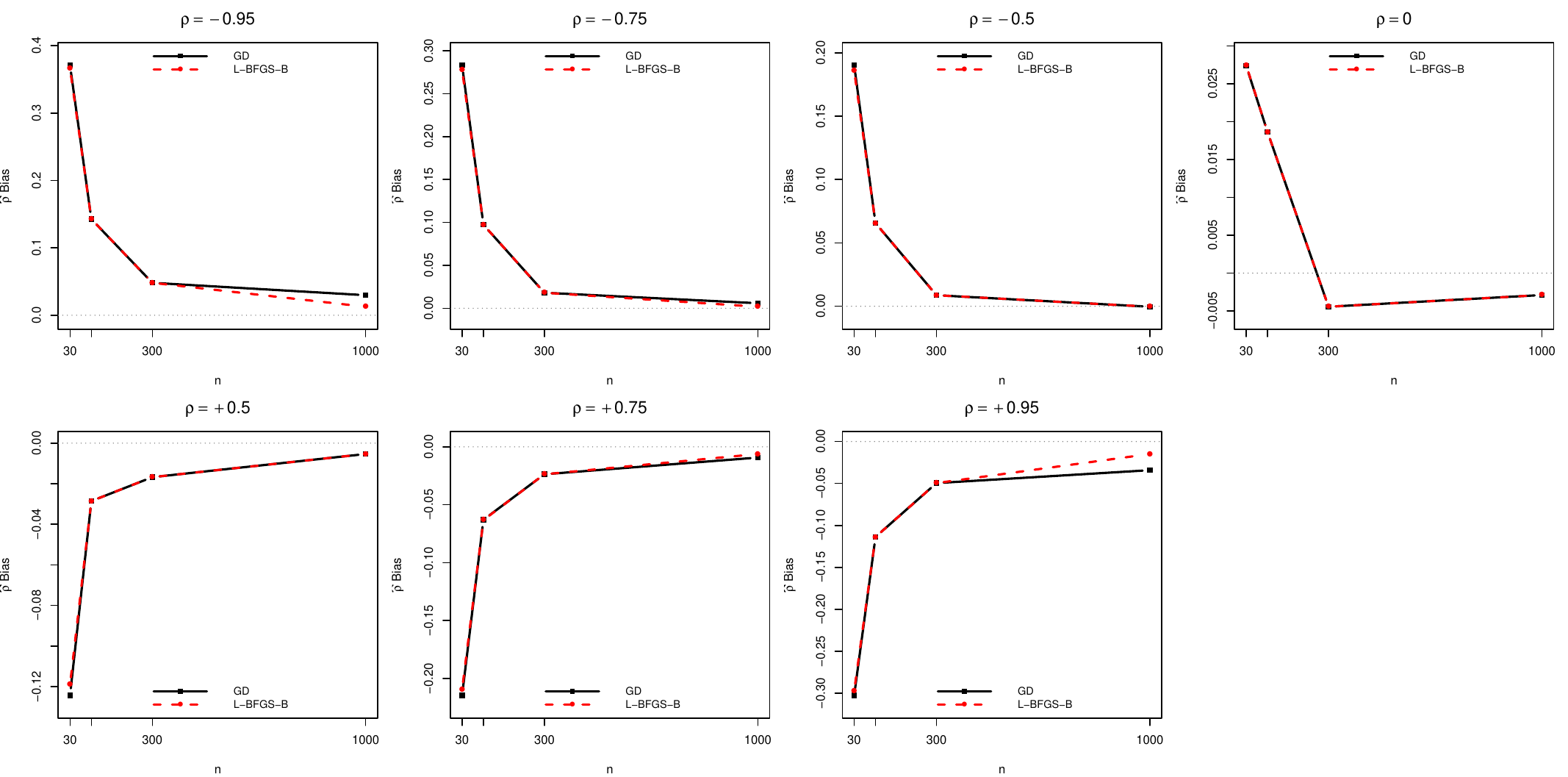}
	\caption{{\color{black}Monte Carlo results for the Bias decay of PML estimates of $\rho\in\{\pm0.95, \pm0.75, \pm0.5, 0\}$, with increasing sample size $n = 30, 100, 300, 1000$.}}
	\label{fig:algo_comparison}
\end{figure}
}

{\color{black}
\subsection{Identifiability through Fisher information}
\noindent

In the previous subsection, some parameter configurations, such as $\boldsymbol{\theta}=(2,3,0.5)^\top$, were associated with relatively large RMSE values for small sample sizes. Motivated by this behavior, we investigate the presence of weak non-identifiability in a neighborhood of $\boldsymbol{\theta}$.

To this end, we numerically evaluate the Fisher information matrix \eqref{eq:Fisher_information} for $\boldsymbol{\theta}_0=(2,3,\rho)^\top$, varying $\rho\in(-1,1)$. Specifically, we analyze the following indicators of weak non-identifiability:
the proximity of the smallest eigenvalue to zero ($\lambda_{\min}\approx 0$); the condition number ($\kappa\left(I(\boldsymbol{\theta}_0)\right) = \lambda_{max}/\lambda_{min}\uparrow\infty$); and the proximity of the determinant values to zero ($\det\left(I(\boldsymbol{\theta}_0)\right)\approx0$). 
Tables \ref{tab:fisher} and \ref{tab:fisher_zero} present the results for $|\rho|\approx 1$ and $|\rho|\approx 0$, respectively. As $|\rho| \uparrow 1$, $\lambda_{\min}$ appears to remain essentially
unchanged with respect to $\rho$, while
$\kappa\!\left(I(\boldsymbol{\theta}_0)\right)$ and
$\det\!\left(I(\boldsymbol{\theta}_0)\right)$ remain above the thresholds of $10^{-6}$ and below $10^{6}$, respectively. Therefore, these measures do not indicate non-identifiability. However, these measures exhibit different scales for $\rho \in \{-0.75, 0, 0.75\}$. Then, a more detailed investigation was conducted in the neighborhood of $|\rho| \approx 0$, which likewise revealed no indication of weak non-identifiability.

\begin{table}[H]
\centering
\caption{{\color{black} Determinant, condition number, and minimum eigenvalue of
         $I(\boldsymbol{\theta}_0)$ for $\boldsymbol{\theta}_0 = (2,3, \rho)$, varying $\rho\in\{\pm 0.99, \pm 0.95, \pm 0.90, \pm 0.75, 0 \}$.}}
\label{tab:fisher}
\begin{tabular}{rrrr}
\hline
$\rho$ & $\det\bigl(I(\boldsymbol{\theta}_0)\bigr)$ & $\kappa\bigl(I(\boldsymbol{\theta}_0)\bigr)$ &
         $\lambda_{\min}$ \\
\hline
$-0.99$ & $3.083 \times 10^{-4}$ &   51.16 & 0.00849 \\
$-0.95$ & $3.006 \times 10^{-4}$ &   53.03 & 0.00824 \\
$-0.90$ & $2.911 \times 10^{-4}$ &   55.54 & 0.00793 \\
$-0.75$ & $2.640 \times 10^{-4}$ &   64.44 & 0.00701 \\
$ 0.00$ & $1.398 \times 10^{-4}$ &  184.68 & 0.00295 \\
$+0.75$ & $4.066 \times 10^{-5}$ & 1261.85 & 0.00056 \\
$+0.90$ & $4.958 \times 10^{-5}$ & 1242.86 & 0.00060 \\
$+0.95$ & $6.078 \times 10^{-5}$ & 1083.94 & 0.00071 \\
$+0.99$ & $7.461 \times 10^{-5}$ &  934.03 & 0.00083 \\
\hline
\end{tabular}
\end{table}

\begin{table}[ht]
\centering
\caption{{\color{black} Determinant and condition number of
         $I(\boldsymbol{\theta}_0)$ for $\boldsymbol{\theta}_0 = (2,3, \rho)$, varying $\rho\in\{\pm 0.10, \pm 0.05, \pm 0.01, 0 \}$.}}
\label{tab:fisher_zero}
\begin{tabular}{rrr}
\hline
$\rho$ & $\det\bigl(I(\boldsymbol{\theta}_0)\bigr)$ & $\kappa\bigl(I(\boldsymbol{\theta}_0)\bigr)$ \\
\hline
$-0.10$ & $1.560 \times 10^{-4}$ & 154.19 \\
$-0.05$ & $1.479 \times 10^{-4}$ & 168.40 \\
$-0.01$ & $1.414 \times 10^{-4}$ & 181.24 \\
$ 0$ & $1.398 \times 10^{-4}$ & 184.68 \\
$+0.01$ & $1.382 \times 10^{-4}$ & 188.22 \\
$+0.05$ & $1.317 \times 10^{-4}$ & 203.44 \\
$+0.10$ & $1.236 \times 10^{-4}$ & 225.18 \\
\hline
\end{tabular}
\end{table}

}

{\color{black}
\subsection{Simulation results for the MLE}
\noindent

As highlighted in the previous subsections, the parameter region $\boldsymbol{\theta} = (2,3,\rho)$ with $\rho \in (-1,1)$ poses a considerable challenge, exhibiting high RMSE for small sample sizes. This configuration is therefore adopted for the subsequent analysis.

We now evaluate the MLE obtained by maximizing the log-likelihood function (\ref{loglik}). The optimization was performed using the ``L-BFGS-B'' method implemented in the \textsf{optim} function. For this purpose, we consider the 
following performance measures: Bias, RMSE, coverage probability (CP), computational time in seconds (CT), and the percentage of non-convergence (PNC).

Table \ref{tab:mle} reports the MLE results for $\rho \in \{0, \pm 0.75, \pm 0.5\}$, based on $M = 1,000$ Monte Carlo 
replications with sample sizes $n \in \{30, 50, 100\}$. 
Despite the high computational cost, the bias and RMSE remain relatively large, even for $n=100$, particularly for the estimates of $\rho$. The performance deteriorates further for large positive values of $\rho \in \{0.75, 0.95\}$. The PNC is low, indicating that the algorithms converge reliably, although the sample sizes are insufficient to ensure accurate estimation. The CP values are consistently high, suggesting that the Hessian-based confidence intervals are overly wide.

\begin{table}[!htbp]
\caption{ {\color{black} Monte Carlo results for the MLE of $\boldsymbol{\theta}=(2,3,\rho)$, $\rho\in\{0, \pm 0.75, \pm 0.95\}$, with increasing sample size $n\in\{30, 50, 100\}$.}}
\centering
\begin{tabular}{lcrcccc}
  \hline
Parameter & $n$ & Bias & RMSE & CP & CT  & PNC \\ 
  \hline
$\alpha=2$ & 30 & 0.05600 & 0.58956 & 0.9905 & 767.20 & 0.00 \\
 & 50  & $-$0.02656 & 0.43703 & 0.9857 & 1411.89 & 0.20 \\
  &100  & $-$0.06042 & 0.32979 & 0.9843 & 2351.14 & 0.30 \\ 
  $\beta=3$& 30 & 0.08018 & 0.88569 & 0.9893 & 767.20 & 0.00 \\
   &  50  & $-$0.05671 & 0.69583 & 0.9879 & 1411.89 & 0.20 \\ 
  & 100  & $-$0.10326 & 0.52213 & 0.9823 & 2351.14 & 0.30 \\ 
 $\rho=-0.95$& 30 & 0.33787 & 0.70863 & 0.9964 & 767.20 & 0.00 \\ 
 &  50 & 0.35338 & 0.71369 & 0.9890 & 1411.89 & 0.20 \\ 
  & 100 & 0.29689 & 0.62501 & 0.9864 & 2351.14 & 0.30 \\ 
   \hline
$\alpha=2$&30 & 0.11124 & 0.63604 & 0.9917 & 748.53 & 0.00 \\ 
&  50 & 0.02210 & 0.46759 & 0.9890 & 1255.88 & 0.20 \\ 
&    100 & $-$0.02313 & 0.35448 & 0.9864 & 2426.31 & 0.20 \\ 
 $\beta=3$&   30 & 0.16851 & 0.95772 & 0.9917 & 748.53 & 0.00 \\ 
&     50 & 0.02206 & 0.74557 & 0.9868 & 1255.88 & 0.20 \\ 
&       100 & $-$0.04233 & 0.56078 & 0.9832 & 2426.31 & 0.20 \\ 
$\rho=-0.75$&  30 & 0.21429 & 0.70100 & 0.9964 & 748.53 & 0.00 \\ 
 & 50  & 0.23314 & 0.70115 & 0.9901 & 1255.88 & 0.20 \\ 
  &100  & 0.19733 & 0.63401 & 0.9843 & 2426.31 & 0.20 \\ 
   \hline
$\alpha=2$&30 & 0.35182 & 0.79184 & 0.9964 & 854.45 & 0.20 \\ 
&  50 & 0.25137 & 0.64850 & 0.9946 & 1414.72 & 0.30 \\ 
&    100& 0.12106 & 0.47688 & 0.9903 & 2775.91 & 0.40 \\ 
 $\beta=3$& 30 & 0.56009 & 1.22529 & 0.9976 & 854.45 & 0.20 \\
&   50 & 0.39668 & 1.03399 & 0.9946 & 1414.72 & 0.30 \\ 
&    100& 0.19406 & 0.75783 & 0.9924 & 2775.91 & 0.40 \\ 
$\rho=0$&  30 & -0.24408 & 0.82535 & 1.0000 & 854.45 & 0.20 \\ 
  &50 & $-$0.20709 & 0.78711 & 1.0000 & 1414.72 & 0.30 \\ 
  &100 & $-$0.10590 & 0.72040 & 0.9968 & 2775.91 & 0.40 \\ 
   \hline
$\alpha=2$&30 & 0.51064 & 0.87196 & 1.0000 & 866.50 & 0.10 \\
&  50 & 0.43773 & 0.75617 & 1.0000 & 1459.34 & 0.20 \\ 
 & 100 & 0.32610 & 0.61528 & 0.9814 & 2966.69 & 0.20 \\ 
 $\beta=3$& 30 & 0.81821 & 1.36724 & 1.0000 & 866.50 & 0.10 \\
&   50 & 0.69083 & 1.17904 & 0.9986 & 1459.34 & 0.20 \\
&     100 & 0.52790 & 0.98317 & 0.9787 & 2966.69 & 0.20 \\ 
 $\rho=0.75$ &30 & -0.46352 & 0.92941 & 1.0000 & 866.50 & 0.10 \\ 
  &50 & $-$0.46818 & 0.90648 & 1.0000 & 1459.34 & 0.20 \\ 
  &100 & $-$0.40425 & 0.81429 & 0.9322 & 2966.69 & 0.20 \\ 
   \hline
$\alpha=2$&30 & 0.58795 & 0.94845 & 1.0000 & 895.58 & 0.10 \\ 
&  50 & 0.47393 & 0.78774 & 0.9984 & 1470.57 & 0.30 \\ 
&    100  & 0.36103 & 0.62723 & 0.9829 & 2868.99 & 0.50 \\ 
$\beta=3$&  30 & 0.94834 & 1.50294 & 1.0000 & 895.58 & 0.10 \\
&    50 & 0.75136 & 1.23516 & 0.9935 & 1470.57 & 0.30 \\ 
&      100 & 0.58462 & 1.00159 & 0.9720 & 2868.99 & 0.50 \\ 
$\rho=0.95$&  30  & -0.53292 & 0.92458 & 1.0000 & 895.58 & 0.10 \\ 
 & 50  & $-$0.48614 & 0.85692 & 0.9967 & 1470.57 & 0.30 \\ 
 & 100  & $-$0.42980 & 0.76656 & 0.9068 & 2868.99 & 0.50 \\ 
   \hline
\end{tabular}
\label{tab:mle}
\end{table}

Table \ref{tab:init_sensitivity} presents the sensitivity of the MLE to different starting values. For the true parameter $\boldsymbol{\theta} = (2,3,0)^\top$, $M = 1,000$ Monte Carlo replications were conducted using five fixed initializations: the true value $(2,3,0)^\top$, a value close to the true parameter $(2.5,3.5,0.2)^\top$, a 
standard parameter $(1,1,0)^\top$, a distant parameter $(0.5,0.5,0.8)^\top$, and an opposite direction $(5,5,-0.8)^\top$. The Bias and RMSE were not influenced by the choice of initial value.

\begin{table}[H]
\caption{{\color{black} Monte Carlo results for the MLE of $\boldsymbol{\theta}=(2,3,0)$, with fixed sample size $n= 100$, and varying the initial values.}}
\centering
\begin{tabular}{ccrccc}
  \hline
 Initial values & Parameter & Bias & RMSE & CP & PNC \\ 
  \hline
 $(2,3,0)^\top$ & $\alpha$ & 0.12106 & 0.47688 & 0.9903 & 0.40 \\ 
  & $\beta$ & 0.19406 & 0.75783 & 0.9924 & 0.40 \\ 
   & $\rho$ & $-$0.10590 & 0.72040 & 0.9968 & 0.40 \\ 
   \hline
  $(2.5,3.5,0.2)^\top$ & $\alpha$ & 0.12386 & 0.48056 & 0.9892 & 0.30 \\ 
   & $\beta$ & 0.19880 & 0.76614 & 0.9914 & 0.30 \\ 
   & $\rho$ & $-$0.10771 & 0.71892 & 0.9968 & 0.30 \\ 
   \hline
   $(1,1,0)^\top$ & $\alpha$ & 0.11381 & 0.47975 & 0.9902 & 0.40 \\ 
   & $\beta$ & 0.18071 & 0.76330 & 0.9924 & 0.40 \\ 
   & $\rho$ & $-$0.09876 & 0.72179 & 0.9967 & 0.40 \\ 
   \hline
   $(0.5,0.5,0.8)^\top$ & $\alpha$ & -0.09965 & 0.43539 & 0.9884 & 0.40 \\ 
   & $\beta$ & $-$0.17015 & 0.68812 & 0.9913 & 0.40 \\ 
   & $\rho$ & 0.29312 & 0.77951 & 0.9956 & 0.40 \\ 
  \hline
  $(5,5,-0.8)^\top$ & $\alpha$ & 0.13310 & 0.50888 & 0.9882 & 0.30 \\ 
   & $\beta$ & 0.20552 & 0.77643 & 0.9903 & 0.30 \\ 
   & $\rho$ & $-$0.11266 & 0.72013 & 0.9968 & 0.30 \\ 
   \hline
\end{tabular}
\label{tab:init_sensitivity}
\end{table}
}

{\color{black}
\subsection{Comparison of estimators}
\noindent

Consider a random sample $(\mathbf{X}, \mathbf{Y}) = ((X_1,Y_1)^\top, \ldots, (X_n, Y_n)^\top)^\top$, where $(X,Y)$  follows a Morgenstern’s bivariate distribution with gamma margins \eqref{Morgenstern-cdf}. Then, we can generate the sample $\mathbf{Z}=(Z_1,\cdots,Z_n)\sim\operatorname{EB}(\boldsymbol{\theta})$ from the ratio \eqref{rep-stoch-1-1}. In this subsection, we compare the MLE results in Table \ref{tab:mle} with the two-stage PMLE, using the Algorithm \ref{algo:gamma_estimation}.

Table \ref{tab:twostep_rho} shows that the PMLE (Algorithm~\ref{algo:gamma_estimation}) exhibits behavior similar to the MLE in the estimation of $\alpha$ and $\beta$. However, the PMLE was more accurate in the estimation of $\rho$, while requiring considerably less 
computational time. On the other hand, the coverage probability for $\rho$ performs worse when compared with the MLE results. In Section~\ref{sec:applications}, a real-data application is presented for a large sample size. Due to the computational cost, the PMLE emerges as the preferable approach.

\begin{table}[H]
\caption{ {\color{black} Monte Carlo results for the PMLE of $\boldsymbol{\theta}=(2,3,\rho)$, $\rho\in\{0, \pm 0.75, \pm 0.95\}$, with increasing sample size $n\in\{30, 50, 100\}$.}}
\centering
\begin{tabular}{lcccccc}
  \hline
Parameter & $n$ & Bias & RMSE & CP & CT  & PNC \\ 
  \hline
$\alpha=2$ & 30 & 0.20498 & 0.61157 & 0.9569 & 1.34 & 0.20 \\  & 50 & 0.12671 & 0.42604 & 0.9639 & 1.80 & 0.30 \\ 
&100 & 0.06193 & 0.27979 & 0.9567 & 3.13 & 0.60 \\

$\beta=3$ &  30 & 0.29971 & 0.94568 & 0.9619 & 1.34 & 0.20 \\  & 50 & 0.16644 & 0.64704 & 0.9619 & 1.80 & 0.30 \\ 
& 100 & 0.07672 & 0.43218 & 0.9437 & 3.13 & 0.60 \\

 $\rho=-0.95$& 30 & 0.36592 & 0.64695 & 0.4749 & 1.34 & 0.20 \\ 
  &50 & 0.24496 & 0.45351 & 0.5416 & 1.80 & 0.30 \\ 
  &100  & 0.14261 & 0.29062 & 0.5775 & 3.13 & 0.60 \\ 
   \hline
$\alpha=2$ & 30 & 0.20647 & 0.61553 & 0.9529 & 1.39 & 0.20 \\ & 50 & 0.12889 & 0.43260 & 0.9599 & 1.92 & 0.30 \\  
& 100 & 0.06171 & 0.28041 & 0.9567 & 3.48 & 0.60 \\

$\beta=3$ &   30& 0.29971 & 0.94568 & 0.9619 & 1.39 & 0.20 \\  & 50 & 0.16644 & 0.64704 & 0.9619 & 1.92 & 0.30 \\   & 100 & 0.07672 & 0.43218 & 0.9437 & 3.48 & 0.60 \\

 $\rho=-0.75$&  30  & 0.27773 & 0.61676 & 0.6202 & 1.39 & 0.20 \\ 
&   50  & 0.17840 & 0.45499 & 0.6901 & 1.92 & 0.30 \\ 
&   100 & 0.09727 & 0.30915 & 0.8048 & 3.48 & 0.60 \\ 
   \hline
$\alpha=2$ & 30 &  0.21488 & 0.63130 & 0.9549 & 1.39 & 0.30 \\  & 50 &  0.13538 & 0.45079 & 0.9548 & 2.27 & 0.40 \\  &100 &  0.06143 & 0.28454 & 0.9538 & 3.67 & 0.50 \\

 $\beta=3$ &  30 & 0.30062 & 0.94596 & 0.9619 & 1.39 & 0.30 \\  & 50  & 0.16670 & 0.64736 & 0.9618 & 2.27 & 0.40 \\  &100  & 0.07704 & 0.43204 & 0.9437 & 3.67 & 0.50 \\

 $\rho=0$& 30  & 0.02746 & 0.55469 & 0.8195 & 1.39 & 0.30 \\ 
 & 50  & 0.02814 & 0.44205 & 0.8886 & 2.27 & 0.40 \\ 
 & 100  & 0.01863 & 0.31960 & 0.9156 & 3.67 & 0.50 \\ 
   \hline
$\alpha=2$ & 30 & 0.22374 & 0.64224 & 0.9599 & 1.50 & 0.20 \\   &50 & 0.13919 & 0.45725 & 0.9569 & 2.08 & 0.30 \\ 
  &100  & 0.06213 & 0.28982 & 0.9467 & 3.75 & 0.50 \\ 

 $\beta=3$ &  30  & 0.30077 & 0.94560 & 0.9619 & 1.50 & 0.20 \\  
 &50  & 0.16744 & 0.64766 & 0.9619 & 2.08 & 0.30 \\   &100  & 0.07691 & 0.43177 & 0.9437 & 3.75 & 0.50 \\

$\rho=0.75$&  30   & -0.20944 & 0.50820 & 0.6293 & 1.50 & 0.20 \\ 
&  50  & -0.12734 & 0.37565 & 0.7141 & 2.08 & 0.30 \\ 
 & 100  & -0.06316 & 0.25994 & 0.8040 & 3.75 & 0.50 \\ 
   \hline
$\alpha=2$ &30 & 0.22562 & 0.64469 & 0.9580 & 1.70 & 0.10 \\ 
 & 50 & 0.13936 & 0.45787 & 0.9559 & 2.14 & 0.20 \\  
 &100 & 0.06227 & 0.29133 & 0.9448 & 3.96 & 0.30 \\ 
 
 $\beta=3$ &  30  & 0.30171 & 0.94593 & 0.9620 & 1.70 & 0.10 \\ 
 &50  & 0.16743 & 0.64736 & 0.9619 & 2.14 & 0.20 \\ 
  & 100  & 0.07771 & 0.43243 & 0.9438 & 3.96 & 0.30 \\ 

 $\rho=0.95$ & 30  & -0.29735 & 0.51256 & 0.5035 & 1.70 & 0.10 \\ 
 & 50 & -0.19327 & 0.36092 & 0.5561 & 2.14 & 0.20 \\ 
  &100  & -0.11379 & 0.22801 & 0.5807 & 3.96 & 0.30 \\ 
   \hline
\end{tabular}

\label{tab:twostep_rho}
\end{table}

}

{\color{black}
\subsection{Computational issues}
\noindent

We conclude this section by highlighting some computational challenges related to the implementation and validation of both the model and its estimators.

\begin{itemize}
\item Series truncation and evaluation accuracy: The double series \eqref{eq:F2}, which defines the Appell $F_2$ function for $|x| + |y| < 1$, can be computationally demanding. For our purposes, we restrict the double summation to a maximum of 200 terms per index and adopt a tolerance parameter of $\varepsilon = 10^{-12}$. To ensure accuracy, we compared selected configurations with the \textsf{mpmath.appellf2} function in Python, obtaining 
differences below $10^{-6}$ in all tests. In addition, some well-known identities were verified, yielding satisfactory results.

\item Numerical issues encountered in practice: The PDF and CDF of the EB model can be computationally intensive when applied to large datasets or in likelihood-based estimation procedures. Pre-computed lookup tables for special functions can provide an efficient alternative for practical implementations.

\item Extreme parameter values: When $|\rho| \uparrow 1$, the estimation yields poorer results. The model may become inapplicable when the data $(X,Y)$ are strongly associated, leading to the generation of a ratio random variable $Z$.

\end{itemize}
These computational challenges and limitations do not compromise the consistent results established for the PMLE and the EB model.
}

\section{Real-world data analysis}\label{sec:applications}
\noindent

In this section, we present an application of the proposed model to real data for illustrative purposes. To ensure replicability of our results and provide support for future users of the EB model, all R codes are publicly available at \url{https://github.com/FSQuintino/extended_beta_model.git} (accessed on 07 Jan 2026). {\color{black}A comparison with the well-known Beta and Kumaraswamy models is presented. We also consider a broad range of models with one, two, three, and four parameters, including bimodal distributions that have become well established in the recent literature. For the reader's convenience, in Appendix \ref{ap:competitor_models} we present the PDFs of the competing models.} 




\subsection{{\color{black} Income-consumption data}}
\noindent

The dataset comes from the Bank of Italy's Survey on Household Income and Wealth from the year 2008. 
We utilized data from the RFAM08 data set (with income as the $Y$ variable) and the RISFAM08 data set (with expenditures as the $C$ variable), as outlined in the Survey on Household Income and Wealth 2008 data description file. We excluded any data entries where income or consumption was negative or unavailable. The data were recently reported in \cite{vila2025novel} and the references therein, and are available at \url{https://www.bancaditalia.it/statistiche/tematiche/indagini-famiglie-imprese/bilanci-famiglie/documentazione/ricerca/ricerca.html?min_anno_pubblicazione=2008&max_anno_pubblicazione=2008} (accessed on 20 November 2025).

In this study, we analyze {\color{black} income-consumption data} by constructing a new variable $Z=Y/(X+Y)$ based on \eqref{eq:ratio_model}, where $X$ and $Y$ represent expenditures and income. Values of $Z < 1/2$ indicate that the family ended the year with more expenditures than income. If $Z > 1/2$, the opposite is true. The case $Z = 1/2$ means expenditures and income are equal.

{\color{black} We are interested in modeling $Z$
 because it is a one-dimensional bounded variable, easy to interpret, and it summarizes the income–consumption profile of families in a single value.
The expected behavior is a unit distribution with left skewness. A full characterization of $Z$
 allows us to describe families more effectively and offers an alternative way to study copula-based stress–strength probabilities $\mathbb{P}(X<Y)$, avoiding the limitations of Morgenstern's copula in cases of weak dependence. We then show that the EB model fits $Z$ appropriately.}


\subsection{{\color{black} Evaluating the EB for the ratio data}}
\noindent

In Table \ref{tab:datasets_descritive}, we present the following descriptive statistics: the sample size ($n$); the minimum value observed in the dataset (Min.);
the first quartile value (1st Qu.); the median value (Median); the average value (Mean); the third quartile value (3rd Qu.); the maximum value observed (Max.); the standard deviation (Sd.); the coefficient of symmetry (CS); the coefficient of kurtosis (CK).

\begin{table}[H]
\centering
\caption{{ Descriptive statistics for the {\color{black} income-consumption} dataset.}}
\label{tab:datasets_descritive}
\begin{tabular}{cccccccccc}
\hline
  $n$  &   Min.  & 1st Qu. & Median & Mean       & 3rd Qu. & Max.   & Sd         & CS          & CK \\    
\hline
   7957 & 0.00 & 0.51 & 0.55 & 0.56 & 0.61 & 0.94 & 0.09 & $-$0.42 & 3.34 \\ 
\hline
\end{tabular}
\end{table}


To estimate the EB parameters for dataset $Z$, we applied \eqref{eq:rho_est2}.
%
%
{\color{black}Consider the ratio variable $$Z=\frac{Y}{Y+X}.$$}
%
Table \ref{tab:models_fit} gives the estimates, the Akaike information criterion (AIC), and the Bayesian information criterion (BIC) for the fitted models. 
Since the AIC and BIC values are smaller for the EB model than for the traditional Beta and Kumaraswamy models, this new model seems to be a competitive fit for these data. 
{\color{black} Among the recently proposed unit distributions reported in Table \ref{tab:models_fit}, only the Unit-Inverse Gaussian (UIG) and Unit-Fréchet (UF) distributions exhibit information criterion values lower than those of the EB model. The Bimodal Beta (BBeta) distribution also provides a competitive fit, with information criterion values close to those obtained for the EB model. However, the histogram and empirical CDF (ECDF) plots, together with the fitted PDFs and CDFs (respectively, Figures \ref{fig:hist_models} and \ref{fig:ecdf_models}), indicate that the UIG's performance is inferior to that of the EB, UF, and BBeta models.
}


\begin{table}[H]
\centering
\caption{Parameter estimates and model selection (AIC and BIC) for the {\color{black} income-consumption}  dataset.}
\label{tab:models_fit}
\begin{tabular}{ccccc}
  \hline
Model & $\hat{\boldsymbol{\theta}}^\top$  & $\ell(\hat{\bm \theta})$ & AIC & BIC \\ 
  \hline 
   EB & (16.10, 12.84, 0.78) & 7922.92 & { $-$15839.84} & { $-$15818.90} \\ 
  {\color{black} Topp-Leone} & 4.26 & 7204.62 & $-$14407.25 & $-$14400.27 \\ 
  {\color{black} Unit-Lindley} & 1.09 & 3620.73 & $-$7239.47 & $-$7232.49 \\ 
   Beta & (16.10, 12.84)   & 7832.86 & $-$15661.72 & $-$15647.76 \\ 
 Kumaraswamy & (6.46, 28.86)   & 7795.81 & $-$15587.63 & $-$15573.66 \\ 
      {\color{black}Logit-Normal} & (0.37, 0.87) & 4340.08 & $-$8676.16 & $-$8662.19 \\ 
  {\color{black}Unit-Gompertz} & (1.81, 0.75) & 2069.28 & $-$4134.56 & $-$4120.60 \\ 
  {\color{black}Unit-Inverse Gaussian} & (1.37, 5.10) & 12011.25 & $-$24018.49 & $-$24004.53 \\
 {\color{black} Unit-Weibull} & (2.63, 2.48) & 5737.09 & $-$11470.19 & $-$11456.22 \\
 {\color{black}Unit-Frechet} & (1.25, 5.52, 0.1949) & 8381.26 & $-$16756.52 & $-$16735.57 \\ 
{\color{black}UBBS} & (7.15, 0.76, 18.60) & 4673.84 & $-$9341.68 & $-$9311.76 \\     {\color{black}Bimodal Beta} & (14.86, 13.27, 15.23, 17.46) & 7874.68 & $-$15741.36 & $-$15713.44 \\ 
  \hline
\end{tabular}
\end{table}

{\color{black} \begin{remark}
The parameter estimates of the EB model reported in Table \ref{tab:models_fit} were obtained using the two-stage pseudolikelihood approach (Algorithm \ref{algo:beta_estimation}), requiring only 0.8467 minutes of computation time. Although MLE can also be employed, its computational cost becomes substantial for large datasets. In the present application, MLE required 413.8503 minutes to report results. While MLE increased the log-likelihood from $\ell(\widehat{\boldsymbol{\theta}}^{PMLE})=7922.9$ to $\ell(\widehat{\boldsymbol{\theta}}^{MLE})=195973.8$, the ``L-BFGS-B'' algorithm failed to converge, returning parameter estimates located on the boundary of the search space.
\end{remark} }

\begin{figure}[H]
	\centering
    	\includegraphics[width=0.9\linewidth, height=9cm]{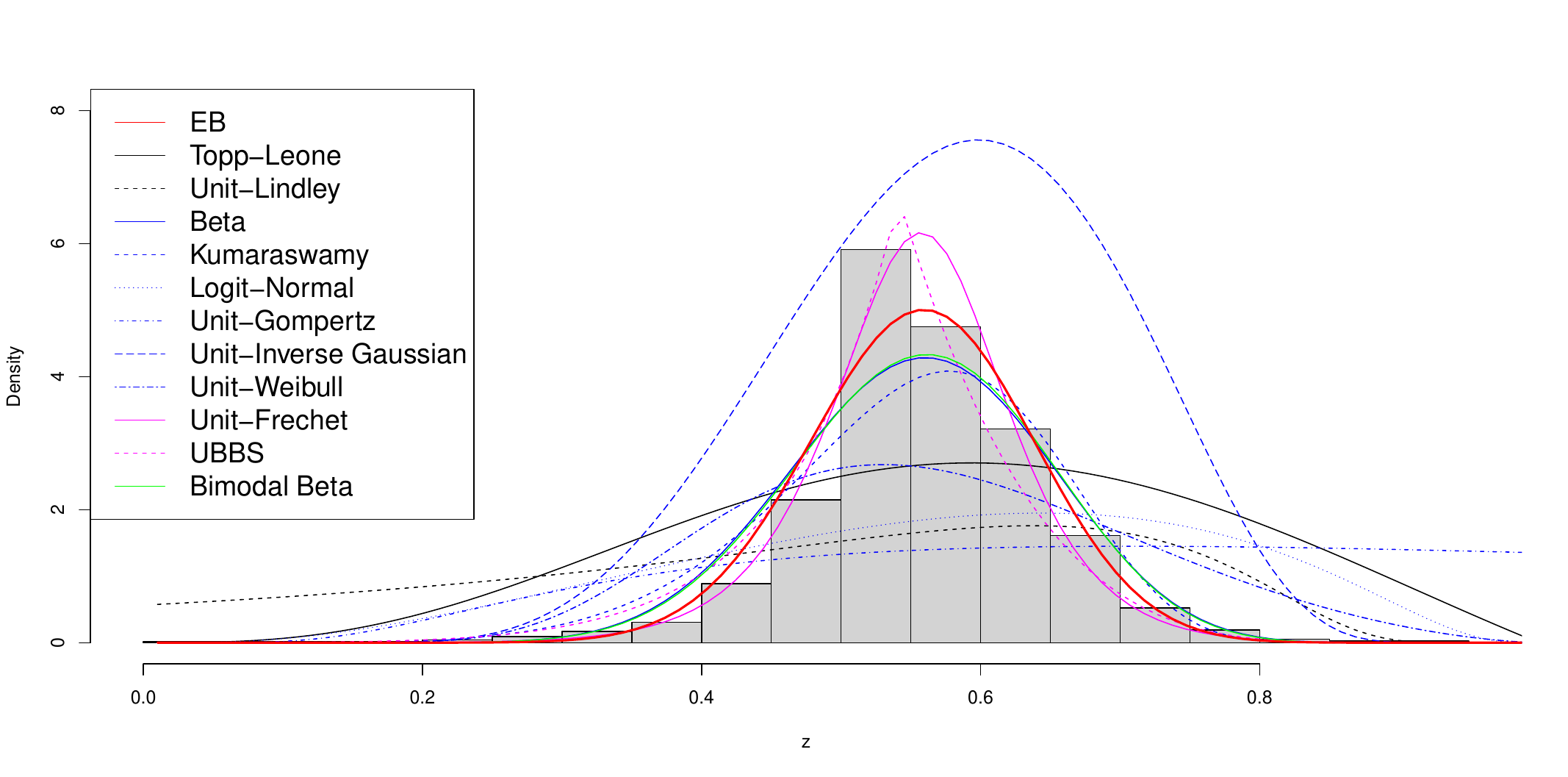}
	\caption{{\color{black} Histogram with the EB density (red) and competing one-parameter (black), two-parameter (blue), three-parameter (magenta), and four-parameter (green) models fitted to the income-consumption dataset.}}
	\label{fig:hist_models}
\end{figure}
\begin{figure}[H]
	\centering
	\includegraphics[width=1.0\linewidth, height=12cm]{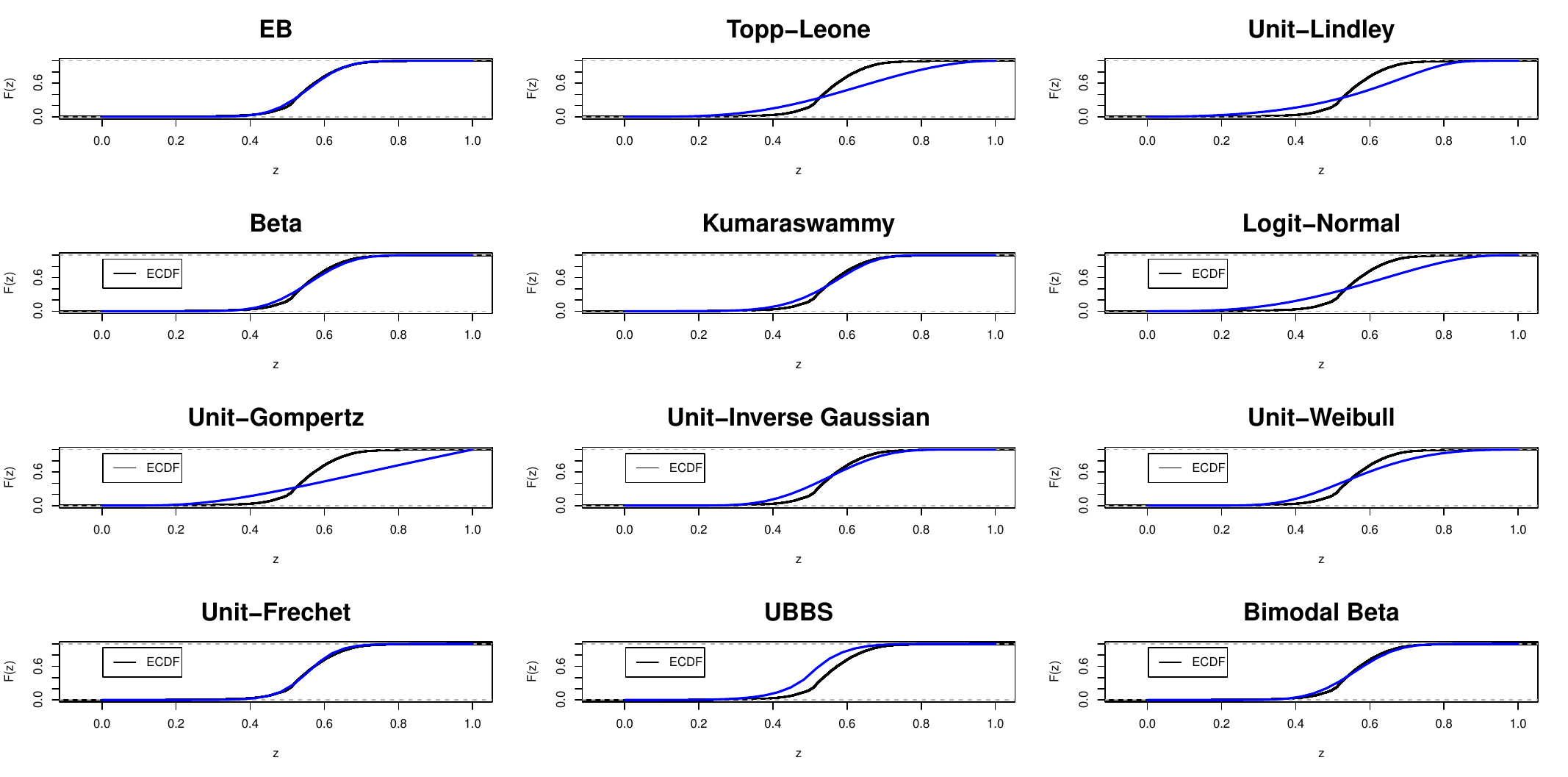}
	\caption{{\color{black}Empirical CDF (ECDF) compared with EB and competing models for the income-consumption dataset.}}
	\label{fig:ecdf_models}
\end{figure}

\begin{figure}[H]
	    \centering
	    \includegraphics[width=1.0\linewidth]{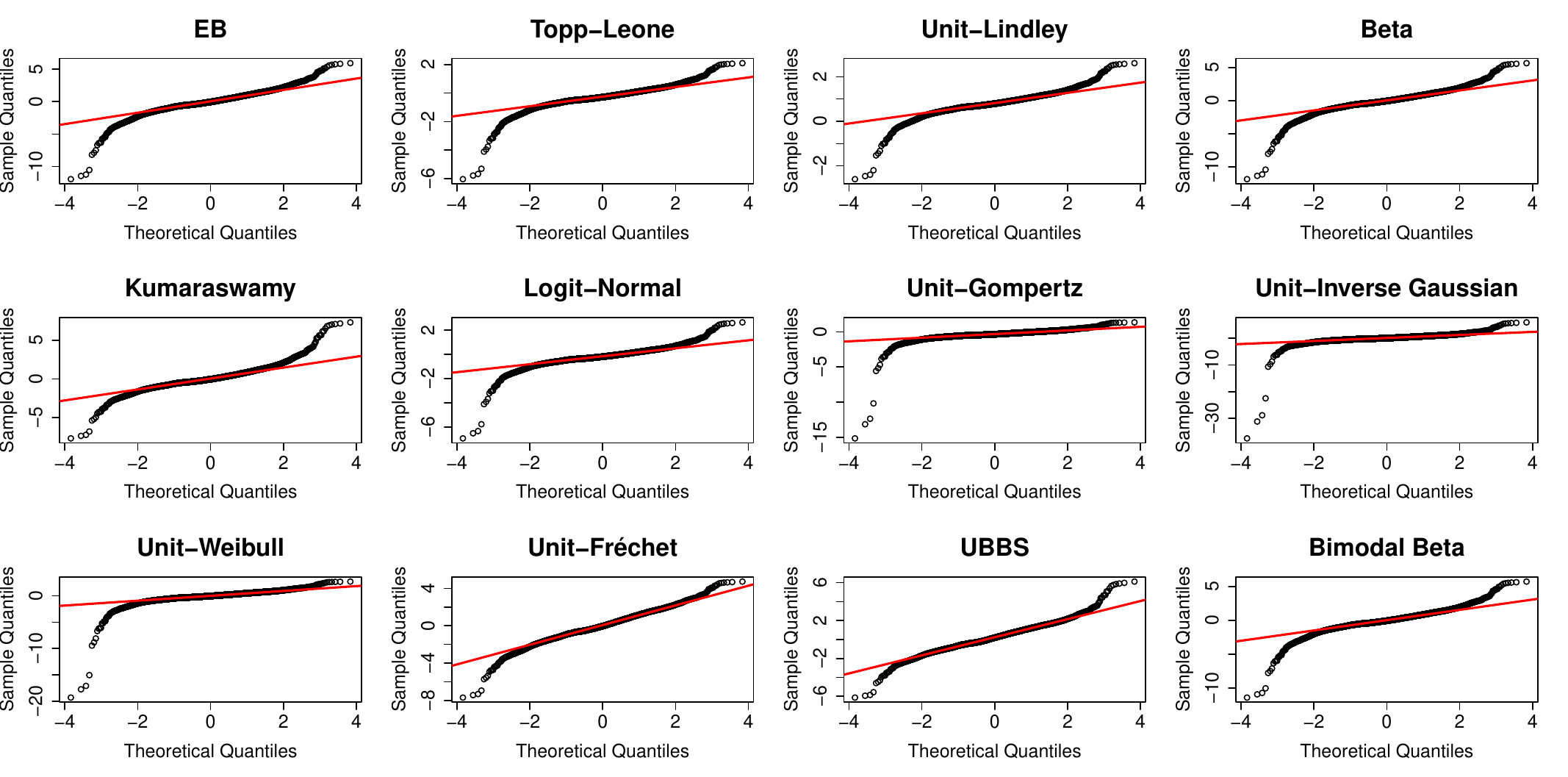}
	    \caption{Quantile-Quantile plot displaying randomized quantile residuals from fitted models for income-consumption data.}
	    \label{fig:qqplot}
	\end{figure}
\begin{figure}[H]
	    \centering
	    \includegraphics[width=1.0\linewidth]{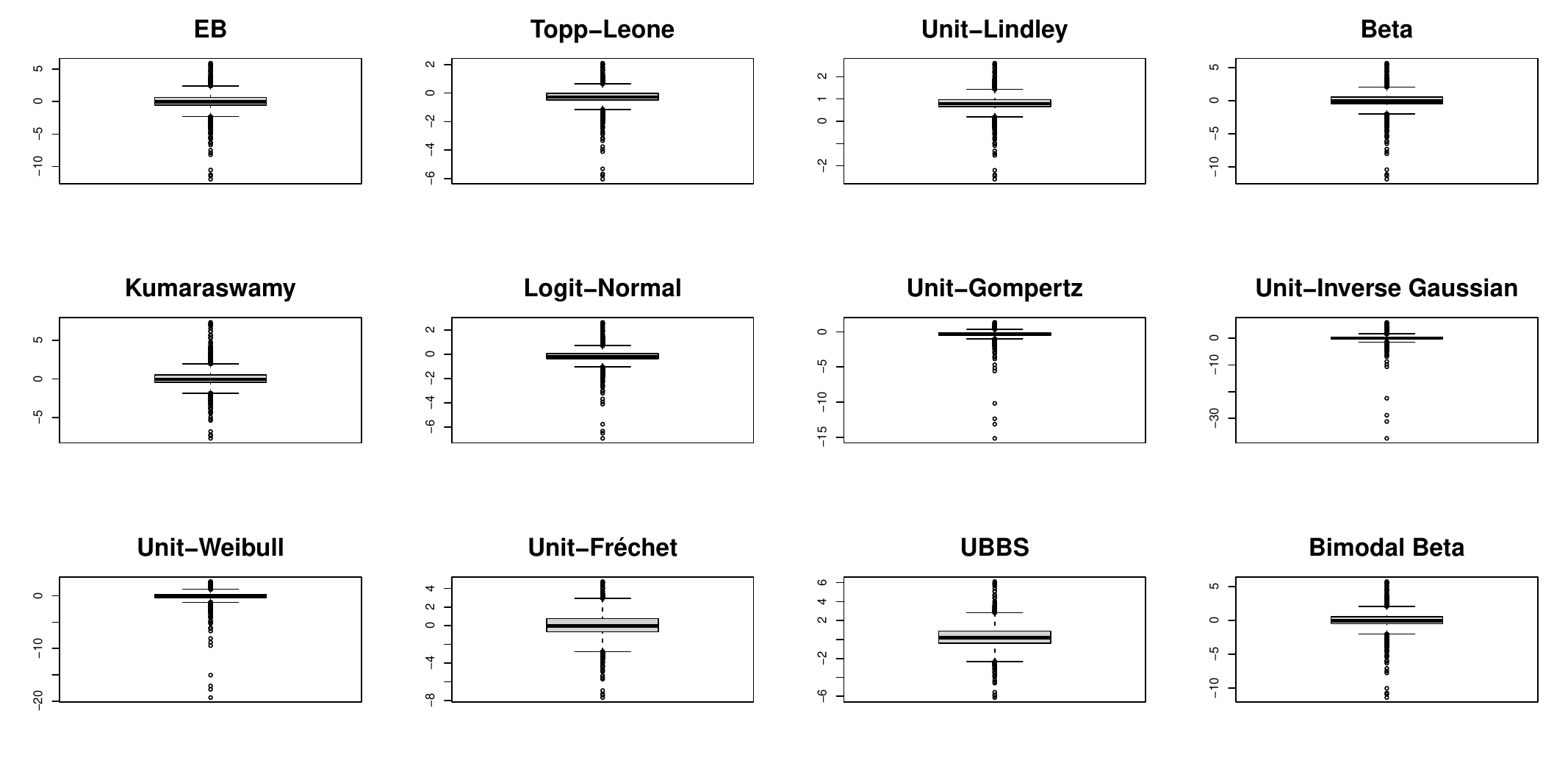}
	    \caption{Boxplot displaying randomized quantile residuals from fitted models for income-consumption data.}
	    \label{fig:residuals_boxplot}
	\end{figure}

{\color{black} To ensure a good model fit, we use randomized quantile residuals, defined by 
$R_i=\Phi^{-1}(F(z_i; \hat{\bm \theta}))$, where $\hat{\bm \theta}$ is the estimated parameter, $F(z_i; \hat{\bm \theta})$ is the CDF of the model fitted to each observation \(z_i\), and $\Phi^{-1}$ represents the quantile of the standard normal distribution $N(0, 1)$. 
Comparing the previously selected models (EB, UF, and EB), Figures \ref{fig:qqplot} and \ref{fig:residuals_boxplot} show the randomized quantiles are quite similar, with the UF model providing a slightly better fit in the left tail.

Given the large sample size ($n=7,957$), formal goodness-of-fit tests are expected to be highly sensitive and may strongly reject the null hypothesis even for models that provide an adequate fit in practical terms. Therefore, greater emphasis was placed on graphical diagnostics and information criteria when evaluating model performance, which showed that the EB model can be an interesting model for that data.
}

To test the null hypothesis $H_0: \rho = 0$ ($H_0$: beta model) against the alternative hypothesis $H_1: \rho \neq 0$ ($H_1$: EB model), we use the LR statistic, the value of which is 180.12 (p-value $<$ 0.01). Thus, using any usual significance level, we reject the null hypothesis in favor of the EB model, i.e., this model is significantly better than the beta model for explaining the current data.

\subsection{{\color{black} Assessing copula selection in income–consumption modeling}}
 \noindent

As pointed out in Subsection \ref{EB model arising as a ratio}, the {\color{black} EB} model arises from a ratio, where the random vector $(X, Y)^\top$ follows the Morgenstern bivariate distribution with gamma margins. 
Figure \ref{fig:gamma_margins} shows the gamma fits of the marginal distributions of expenditures and income. These findings offer more support for the {\color{black} EB} model's good fit to the derived $Z$ unitary data, {\color{black} which was discussed in detail in the previous subsection.
To estimate the Gamma parameters of the margins, we use the \textsf{fitdistr} function from the R package \textsf{MASS}. Due to the scale of the sample, the estimation procedure exhibits numerical instabilities.
Then, the adjustment $(X\cdot 10^{-3}, Y\cdot 10^{-3})$ is required to achieve numerical convergence in the estimation procedures. The choice of $10^3$ is based on trials with $10^k$. 
By varying $k$, the estimation for $X$ becomes numerically tractable at $k=1$, whereas $Y$ requires $k=2$. However, the estimated scale parameters remain too close to zero, leading us to choose $k=3$.
}

\begin{figure}[H]
	\centering
        \includegraphics[width=0.9\linewidth, height=7cm]{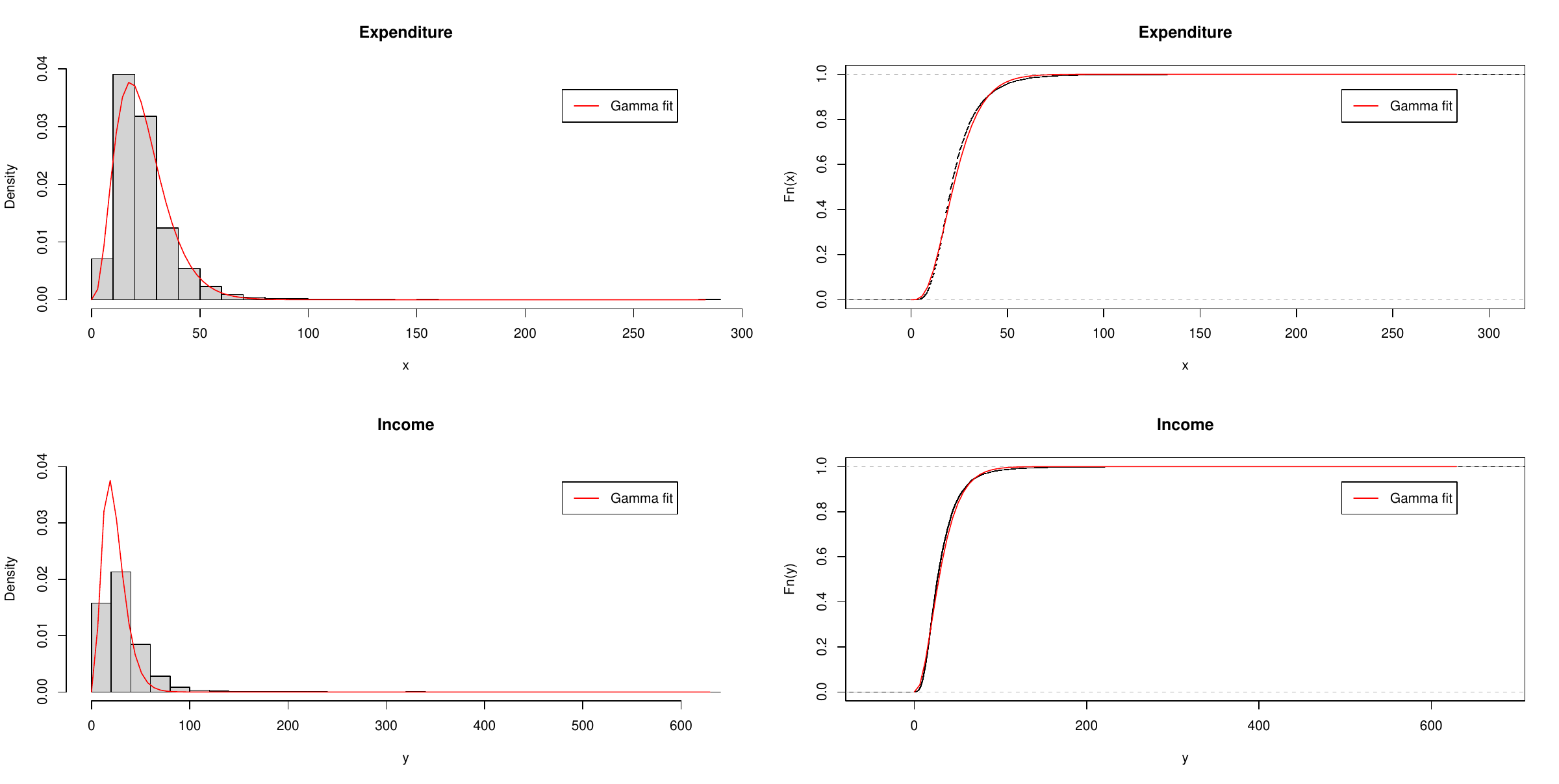}
       	\caption{Gamma fits to marginal distributions of expenditures and income.}
	\label{fig:gamma_margins}
\end{figure}

\begin{figure}[H]
	\centering
        \includegraphics[width=0.9\linewidth, height=9cm]{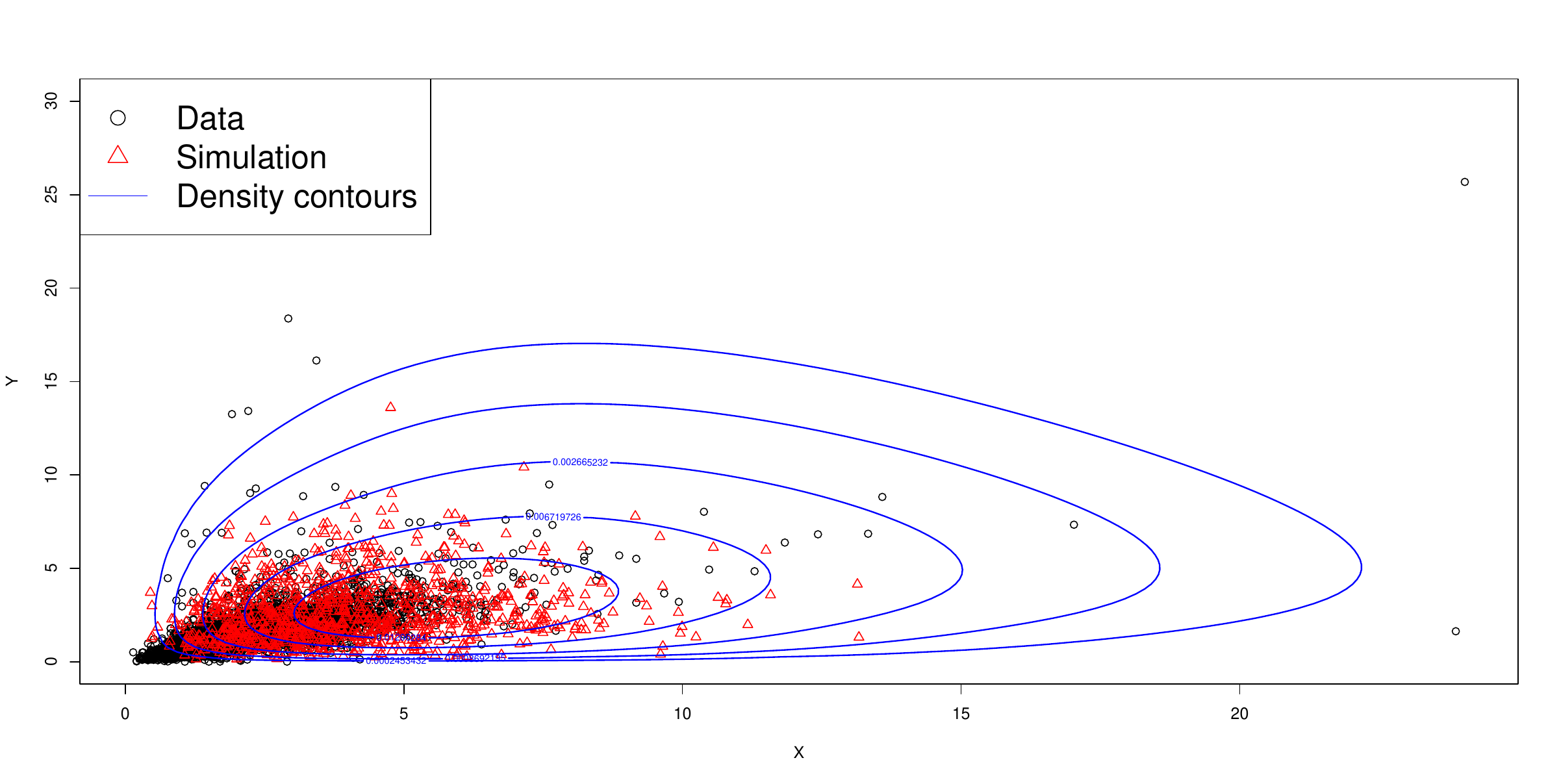}
       	\caption{Simulated samples of $C(Gamma(4.05, 2), Gamma(2.65,2); 0.77) $ (red) overlaid on the observed income–consumption data (black), along with the corresponding density contours (blue lines).}
	\label{fig:simulation_countor}
\end{figure}

{\color{black} After fitting the margins and aiming to use the estimator \eqref{eq:rho_est2} of $\rho$, we require that the data $(X,Y)$ share the same scale parameter $\theta$. We then transformed the data as follows:
$$x \leftarrow 0.5 \hat{\alpha} x /1000~~\text{and}~~y \leftarrow 0.5 \hat{\beta} y /1000.$$
Figure \ref{fig:simulation_countor} presents simulated samples overlaid on the observed data, along with the corresponding density contours. The model captures the joint behaviour of the data well, although a few outliers can be observed.

Table \ref{tab:copulas_selection} presents a comparative study involving some more flexible copulas from the extended FGM literature. For the reader's convenience, in Appendix \ref{ap:competitor_models} we present the CDFs of the benchmark copulas, named the Generalized
Farlie-Gumbel-Morgenstern (GFGM) copula.
A more detailed revision of 
GFGM copula, the reader can find information in \cite{amblard2002symmetry, susam2022multi}.
In Table \ref{tab:copulas_selection}, the log-likelihood values reveal no substantial differences among the GFGM generators. However, two of them yielded $\hat{\rho}=1$, suggesting that the model may not be suitable for capturing the dependence structure. Alternative copula families will be considered in future work. 
}

\begin{table}[H]
\caption{Copula dependence estimates, model selection based on the maximum log-likelihood value, and Spearman ($\rho_S$) and Kendall ($\tau$) corresponding estimates.}
\label{tab:copulas_selection}
\centering
\begin{tabular}{ccccc}
    \hline
  GFGM generator & $\widehat{\boldsymbol{\theta}}$ & $\ell(\widehat{\boldsymbol{\theta}})$ & $\rho_S(\hat{\rho})$ & $\tau(\hat{\rho})$ \\
  \hline
  $\phi(x) = min(x, 1-x)$ &  (4.05, 2.65, 0.59) & $-$51830.61 &0.20&0.13\\
    $\phi(x) =x(1-x)$ & (4.05, 2.65,0.77) & $-$51532.43 &0.26&0.17\\  
  $\phi(x) = x(1-x)(1-2x)$ &  (4.05, 2.65,1.00) & $-$52361.27 &0.33&0.22\\ 
  $\phi(x) = \frac{1}{\pi}\sin(\pi x)$ &  (4.05, 2.65,1.00) & $-$51054.64 &0.33&0.22\\
   \hline
\end{tabular}
\end{table}

\section{Concluding remarks}
\noindent

In this paper, we introduced the EB distribution, a new unit-supported model obtained from the ratio of two correlated Gamma random variables following a Morgenstern-type dependence structure. The additional dependence parameter extends the classical Beta distribution while preserving interpretability. 
{\color{black}
As shown throughout the paper, the EB family exhibits considerable flexibility in terms of distributional shape, encompassing symmetric and asymmetric densities, as well as boundary-inflated and other nonstandard behaviors that may arise in practice.
}
Closed-form expressions were obtained for the probability density and cumulative distribution functions and moments, relying on tools from special-function theory. Estimation was addressed using likelihood-based and alternative methods, and a Monte Carlo study highlighted both the finite-sample behavior of the estimators and the computational challenges that may arise. Applications to real datasets demonstrated that the EB distribution can provide improved fits compared to competing models, despite requiring only one additional parameter. 
Several directions remain open. Extensions to regression frameworks for responses to $(0,1)$, Bayesian inference with informative priors, and multivariate generalizations are natural avenues for future research. Another promising line involves studying robustness properties and developing faster numerical strategies for likelihood maximization in large samples.
{\color{black} New unit models also remain an alternative for future works, since the data like COVID-19 require copulas capable of capturing stronger associations than Morgenstern's copula, such as the Gumbel–Hougaard, Frank, and Clayton copulas with Gamma margins.}
In general, the EB model adds a versatile and analytically tractable option to the family of unit distributions, with potential for broad application in economics, reliability, hydrology, and related fields.

%
%
	\paragraph*{Acknowledgements}
The research was supported in part by CNPq and CAPES grants from the Brazilian government.
	
	\paragraph*{Disclosure statement}
	There are no conflicts of interest to disclose.


\bibliographystyle{apalike}
\bibliography{sample}

\begin{appendices}
\section{Some additional results}\label{additional results}

The purpose of this section is to present and show some new technical results used throughout the work.
\begin{proposition}\label{prop-app-2}
The following identity is valid 
\begin{multline*}
 	\int_0^\infty 
    x^{a-1}
    \exp(-sx)\gamma(b,\theta x)\gamma(c,\xi x){\rm d}x
    \\[0,2cm]
 	=
 	{\theta^b \xi^c \Gamma(a+b+c)\over bc (s+\theta+\xi)^{a+b+c}} \,
 	F_2\left(a+b+c,1,1;b+1,c+1;{\theta\over s+\theta+\xi},{\xi\over s+\theta+\xi}\right),
\end{multline*}
where $a>0$, $s>0$, $\theta>0$, $\xi>0$, $b>0$ and $c>0$.
\end{proposition}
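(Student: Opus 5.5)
The plan is to expand both lower incomplete gamma functions in their standard power series and integrate term by term. I will use the Kummer-type expansion
\begin{align*}
\gamma(b,y)=y^b e^{-y}\sum_{m=0}^\infty \frac{y^m}{b(b+1)\cdots(b+m)}=\frac{y^b e^{-y}}{b}\sum_{m=0}^\infty \frac{y^m}{(b+1)_m},
\end{align*}
which follows from $\gamma(b,y)=y^b\,{}_1F_1(b;b+1;-y)/b$ and Kummer's transformation ${}_1F_1(b;b+1;-y)=e^{-y}{}_1F_1(1;b+1;y)$. The series has nonnegative terms for $y>0$, which matters for the interchange below. Writing $(1)_m/m!=1$ lets me insert the factors $(b_1)_m/m!$ and $(b_2)_n/n!$ with $b_1=b_2=1$, which is where the two upper parameters $1,1$ of $F_2$ will come from.

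I apply this to $\gamma(b,\theta x)$ with index $m$ and to $\gamma(c,\xi x)$ with index $n$. The integrand becomes
\begin{align*}
\frac{\theta^b\xi^c}{bc}\sum_{m,n\ge 0}\frac{\theta^m\xi^n}{(b+1)_m(c+1)_n}\,x^{a+b+c+m+n-1}e^{-(s+\theta+\xi)x}.
\end{align*}
All terms are nonnegative, so Tonelli's theorem justifies exchanging the double sum and the integral, with no separate convergence estimate needed. The integral is then finite precisely when the resulting series converges.

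Each term is a gamma integral:
\begin{align*}
\int_0^\infty x^{a+b+c+m+n-1}e^{-(s+\theta+\xi)x}\,{\rm d}x=\frac{\Gamma(a+b+c)\,(a+b+c)_{m+n}}{(s+\theta+\xi)^{a+b+c+m+n}}.
\end{align*}
Collecting factors gives
\begin{align*}
\frac{\theta^b\xi^c\Gamma(a+b+c)}{bc\,(s+\theta+\xi)^{a+b+c}}\sum_{m,n}\frac{(a+b+c)_{m+n}(1)_m(1)_n}{(b+1)_m(c+1)_n}\frac{X^m}{m!}\frac{Y^n}{n!},
\end{align*}
with $X=\theta/(s+\theta+\xi)$ and $Y=\xi/(s+\theta+\xi)$. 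By \eqref{eq:F2} this is exactly the claimed $F_2$ expression.

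The main point to check is convergence of this double series. Since $s>0$, we have $X+Y=(\theta+\xi)/(s+\theta+\xi)<1$, which is the standard convergence region $|x|+|y|<1$ for $F_2$. Alternatively, finiteness is immediate because $\gamma(b,\theta x)\le\Gamma(b)$ and $\gamma(c,\xi x)\le\Gamma(c)$ bound the left-hand integral by $\Gamma(b)\Gamma(c)\Gamma(a)s^{-a}$, and Tonelli then gives equality of the two finite quantities. The only step needing care is the Kummer-form series for $\gamma$; the rest is bookkeeping with Pochhammer symbols.
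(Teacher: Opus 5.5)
Your proof is correct, but it takes a different route from the paper's.

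The paper writes each incomplete gamma function in Euler form, $\gamma(b,\theta x)=\theta^b x^b\int_0^1 w^{b-1}e^{-\theta x w}\,{\rm d}w$, and integrates out $x$ first to obtain the double integral \eqref{main-identity}. After $w\mapsto 1-w$, $u\mapsto 1-u$, it identifies that integral with the Euler-type integral representation of $F_2$ with $b_1=b_2=1$, $c_1=b+1$, $c_2=c+1$. The normalizing constant $\Gamma(b+1)\Gamma(c+1)/\{\Gamma(b)\Gamma(c)\}=bc$ supplies the factor $1/(bc)$.

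You instead expand each $\gamma$ in its Kummer-form series and integrate term by term, arriving directly at the series definition \eqref{eq:F2}. This buys you several things:
\begin{itemize}
\item you need only the series definition, which is the one the paper actually implements numerically;
\item Tonelli justifies the interchange cleanly because every term is nonnegative;
\item convergence of the $F_2$ series at these arguments follows from finiteness of the left-hand side;
\item the same one-index computation yields Proposition \ref{prop-app-3} directly, without the paper's limiting argument with $c=1$, $\xi\to\infty$.
\end{itemize}
In exchange, the paper's route produces the intermediate integral representation \eqref{main-identity}, which it reuses for that proposition.

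The two routes are closely linked. Substituting $w\mapsto 1-w$ in the Euler form and expanding $e^{yw}$ gives $\gamma(b,y)=y^b e^{-y}\sum_{m\ge 0} y^m\Gamma(b)/\Gamma(b+m+1)$, which is exactly your series. So the one step you flag as needing care follows in a line from the paper's own starting representation.
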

\begin{proof}
By using the definition 
of lower incomplete gamma function, we have
\begin{align*}
    \gamma(b,\theta x)
    =
    \int_{0}^{\theta x}y^{s-1} \exp(-y){\rm d} y
    =
    \theta^b x^b 
    \int_{0}^{1} w^{b-1} \exp(-\theta x w) {\rm d} w,
\end{align*}
where the change of variable $w=y/(\theta x)$ was considered. Analogously,
\begin{align*}
    \gamma(c,\xi x)
    =
    \xi^c x^c
    \int_{0}^{1} u^{c-1} \exp(-\xi x u) {\rm d} u.
\end{align*}

Hence,
\begin{align*}
    &\int_0^\infty 
    x^{a-1}
    \exp(-sx)\gamma(b,\theta x)\gamma(c,\xi x){\rm d}x
    \\[0,2cm]
    &=
    \theta^b\xi^c
    \int_0^\infty
    \int_0^1
    \int_0^1
    [x(s+\theta w+\xi u)]^{a+b+c-1}
    \exp[-x(s+\theta w+\xi u)] \,
    {w^{b-1} u^{c-1}\over (s+\theta w+\xi u)^{a+b+c-1}}
    {\rm d}w
    {\rm d}u
    {\rm d}x.
\end{align*}
Making the change of variable $z=x(s+\theta w+\xi u)$ and changing the order of integration, the last integral becomes 
\begin{align*}
    &=
    \theta^b\xi^c
    \int_0^1
    \int_0^1
    \left[
    \int_0^\infty
    z^{a+b+c-1}
    \exp(-z) 
    {\rm d}z
    \right]
    {w^{b-1} u^{c-1}\over (s+\theta w+\xi u)^{a+b+c}}
    {\rm d}w
    {\rm d}u
        \\[0,2cm]
    &=
    \theta^b\xi^c\Gamma(a+b+c)
    \int_0^1
    \int_0^1
    {w^{b-1} u^{c-1}\over (s+\theta w+\xi u)^{a+b+c}}
    {\rm d}w
    {\rm d}u.
\end{align*}
By changing $w$ to $1-w$ and $u$ to $1-u$, the above integral can be written as
\begin{align*}
=
    {\theta^b\xi^c\Gamma(a+b+c)\over (s+\theta+\xi)^{a+b+c}}
    \int_0^1
    \int_0^1
    {(1-w)^{b-1} (1-u)^{c-1}\over \big(1-{\theta\over s+\theta+\xi} w-{\xi\over s+\theta+\xi} u\big)^{a+b+c}}
    {\rm d}w
    {\rm d}u.
\end{align*}

In short, we have proven that
\begin{align}\label{main-identity}
\int_0^\infty 
    x^{a-1}
    \exp(-sx)\gamma(b,\theta x)\gamma(c,\xi x){\rm d}x
    =
        {\theta^b\xi^c\Gamma(a+b+c)\over (s+\theta+\xi)^{a+b+c}}
    \int_0^1
    \int_0^1
    {(1-w)^{b-1} (1-u)^{c-1}\over \big(1-{\theta\over s+\theta+\xi} w-{\xi\over s+\theta+\xi} u\big)^{a+b+c}}
    {\rm d}w
    {\rm d}u.
\end{align}
By using in \eqref{main-identity} the well-known integral representation of the Appell F2 double hypergeometric function:
	\begin{multline*}
		F_2(a,b_1,b_2;c_1,c_2;x,y)
        \\[0,2cm]
		=
        {\Gamma(c_1)\Gamma(c_2)\over\Gamma(b_1)\Gamma(b_2)\Gamma(c_1-b_1)\Gamma(c_2-b_2)}
        \int_0^1
        \int_0^1
	u^{b_1-1}v^{b_2-1}(1-u)^{c_1-b_1-1}(1-v)^{c_2-b_2-1}(1-ux-vy)^{-a}{\rm d}u{\rm d}v,
	\end{multline*} 
     the proof follows.
\end{proof}

\begin{proposition}\label{prop-app-3}
The following identity is valid 
\begin{align*}
 	\int_0^\infty 
    x^{a-1}
    \exp(-sx)\gamma(b,\theta x){\rm d}x
    =
    {\theta^b\Gamma(a+b)\over b(s+\theta)^{a+b}}
\,_{2}F_{1}\left(a+b,1;b+1;{\theta\over s+\theta}\right),
\end{align*}
where $a>0$, $s>0$, $\theta>0$ and $b>0$.
\end{proposition}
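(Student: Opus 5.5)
The proof will repeat the argument of Proposition \ref{prop-app-2}, now with a single incomplete gamma factor. First, the substitution $y=\theta x w$ in the definition of the lower incomplete gamma function gives
\begin{align*}
\gamma(b,\theta x)=\theta^b x^b\int_0^1 w^{b-1}\exp(-\theta x w)\,{\rm d}w .
\end{align*}
Inserting this into the left-hand side and applying Tonelli's theorem (the integrand is nonnegative) lets us swap the order of integration, so that
\begin{align*}
\int_0^\infty x^{a-1}\exp(-sx)\gamma(b,\theta x)\,{\rm d}x
=\theta^b\int_0^1 w^{b-1}\left[\int_0^\infty x^{a+b-1}\exp\{-x(s+\theta w)\}\,{\rm d}x\right]{\rm d}w .
\end{align*}
The inner integral equals $\Gamma(a+b)/(s+\theta w)^{a+b}$, by the change of variable $z=x(s+\theta w)$. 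We are then left with
\begin{align*}
\theta^b\Gamma(a+b)\int_0^1 \frac{w^{b-1}}{(s+\theta w)^{a+b}}\,{\rm d}w .
\end{align*}

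Next, substitute $w\mapsto 1-w$ and factor out $(s+\theta)^{a+b}$. Since $s+\theta(1-w)=(s+\theta)\{1-\tfrac{\theta}{s+\theta}w\}$, the expression becomes
\begin{align*}
\frac{\theta^b\Gamma(a+b)}{(s+\theta)^{a+b}}\int_0^1 (1-w)^{b-1}\Big(1-\tfrac{\theta}{s+\theta}\,w\Big)^{-(a+b)}{\rm d}w .
\end{align*}
This is the one-variable analogue of identity \eqref{main-identity}.

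The last step is to recognize the Euler integral representation
\begin{align*}
{}_2F_1(A,B;C;x)=\frac{\Gamma(C)}{\Gamma(B)\Gamma(C-B)}\int_0^1 t^{B-1}(1-t)^{C-B-1}(1-xt)^{-A}\,{\rm d}t,
\end{align*}
which is valid for $C>B>0$ and $x<1$. We take $A=a+b$, $B=1$, $C=b+1$ and $x=\theta/(s+\theta)\in(0,1)$. Then $t^{B-1}=1$ and $(1-t)^{C-B-1}=(1-t)^{b-1}$, so the integral above equals $\frac{\Gamma(1)\Gamma(b)}{\Gamma(b+1)}\,{}_2F_1\big(a+b,1;b+1;\tfrac{\theta}{s+\theta}\big)=\frac1b\,{}_2F_1(\cdots)$, which is exactly the claimed right-hand side.

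The main point needing care is this identification of parameters in the Euler representation. Its conditions hold because $b>0$ gives $C>B>0$, and $s>0$ forces $x<1$, so everything converges. The interchange of integrals needs no justification beyond the nonnegativity of the integrand. As a cross-check, the case $\xi\to0$ of Proposition \ref{prop-app-2} after dividing by $\xi^c/c$ should reproduce the same formula, since $F_2$ then reduces to ${}_2F_1$.
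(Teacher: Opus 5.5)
Your proof is correct, and the steps check out: Tonelli justifies the swap, $s+\theta w>0$ gives the Gamma integral, and Euler's formula applies since $C=b+1>B=1>0$ and $0<\theta/(s+\theta)<1$, with $\Gamma(b+1)/\{\Gamma(1)\Gamma(b)\}=b$ giving the factor $1/b$.

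It differs from the paper in how the one-variable integral is obtained. The paper does not repeat the computation. It specializes the double-integral identity \eqref{main-identity} from the proof of Proposition~\ref{prop-app-2} to $c=1$, where $\gamma(1,\xi x)=1-\exp(-\xi x)$. It then integrates out $u$ explicitly to get $\theta^b\Gamma(a+b)\int_0^1(1-w)^{b-1}\{(s+\theta-\theta w)^{-a-b}-(s+\theta+\xi-\theta w)^{-a-b}\}\,{\rm d}w$, and lets $\xi\to\infty$. From that point on it uses the same Euler representation you do.

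The paper's route is economical, since the result becomes a corollary of the two-factor identity. The cost is a limit--integral interchange on the left-hand side that it leaves implicit; it is justified by monotone convergence, since $1-\exp(-\xi x)\uparrow 1$. Your route is self-contained and more elementary: no limits, only the integral representation of $\gamma(b,\theta x)$, Tonelli, the Gamma integral and Euler's formula. It also makes Proposition~\ref{prop-app-3} logically independent of the $F_2$ machinery, which is arguably the more natural order.

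A minor remark on your closing cross-check. Because $\gamma(c,\xi x)\sim(\xi x)^c/c$ as $\xi\to0$, dividing Proposition~\ref{prop-app-2} by $\xi^c/c$ and letting $\xi\to0$ gives the identity with $a$ replaced by $a+c$, because the left side acquires an extra factor $x^c$. This is still consistent: take $c\in(0,a)$ and apply it with $a-c$ in place of $a$. The paper's proof is a limit of exactly this kind, but with $c=1$ and $\xi\to\infty$, which avoids the shift.
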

\begin{proof}
By taking $c=1$ and then $\xi\to\infty$ in \eqref{main-identity}, we get
    \begin{align}\label{main-identity-3}
&\int_0^\infty 
    x^{a-1}
    \exp(-sx)\gamma(b,\theta x){\rm d}x
    =
        \lim_{\xi\to\infty}
        {\theta^b\xi\Gamma(a+b+1)\over (s+\theta+\xi)^{a+b+1}}
    \int_0^1
    \int_0^1
    {(1-w)^{b-1} \over \big(1-{\theta\over s+\theta+\xi} w-{\xi\over s+\theta+\xi} u\big)^{a+b+1}}
    {\rm d}w
    {\rm d}u
    \nonumber
        \\[0,2cm]
&=
 \lim_{\xi\to\infty}
    \theta^b\Gamma(a+b)
      \int_0^1
    (1-w)^{b-1} 
    \left\{
(s+\theta-\theta w)^{-a-b}
-
(s+\theta+\xi-\theta w)^{-a-b}
    \right\}
    {\rm d}w
\nonumber
                \\[0,2cm]
&=
        {\theta^b\Gamma(a+b)\over (s+\theta)^{a+b}}
      \int_0^1
    (1-w)^{b-1} 
\left(1-{\theta\over s+\theta} w\right)^{-a-b}
    {\rm d}w.
\end{align}
By using in \eqref{main-identity-3} the well-known integral representation of the  hypergeometric function:
\begin{align*}
    {\displaystyle \mathrm {B} (b,c-b)\,_{2}F_{1}(a,b;c;z)
    =
    \int_{0}^{1}x^{b-1}(1-x)^{c-b-1}(1-zx)^{-a}{\rm d}x,\quad c>b>0,}
\end{align*}
the proof readily follows.
\end{proof}

{\color{black}
\section{Benchmark models}\label{ap:competitor_models}

For the reader's convenience, we present below the CDFs of the competitors' copulas and the PDFs of the unit competing models. 

\subsection{Bivariate copula models}

The CDF of the Generalized Farlie-Gumbel-Morgenstern copula is given by:
$$C^{GFGM}(u,v; \rho) = uv + \rho \phi(u)\phi(v), ~~\rho\in[-1,1],$$
where $u$ and $v$ are the marginal distribution functions of random variables $X$ and $Y$, respectively. Observe that if $\rho=0$, then  $C^{GFGM}(u,v;0) = uv$, that is, $X$ and $Y$ are independent.



To get the PDF associated with each CDF and use it in the estimation procedures, it suffices to consider that 
\begin{equation*}
    c(u,v;\rho)=\frac{\partial^2C(u,v; \rho)}{\partial u \partial v}.
\end{equation*}

\subsection{Unit models}
For simplicity of notation, we set the parameters as $\alpha$, $\beta$, and $\rho$ for all competing distributions. The ML estimates for the models below were obtained using the \textsf{goodness.fit} function from the \textsf{AdequacyModel} package in R software.

The Beta and Kumaraswamy models have PDFs, respectively, given by
\begin{equation*}\label{eq:beta_pdf}
    f_B(x; \alpha, \beta) = \frac{\Gamma(\alpha+\beta)}{\Gamma(\alpha)\Gamma(\beta)} x^{\alpha-1} (1-x)^{\beta-1}, ~~0<x<1,~\alpha>0, \beta>0,
\end{equation*}
and
$$f_K(x; \alpha, \beta) = \alpha  \beta x^{\alpha -1 } (1-x^\alpha)^{\beta-1}, ~~0<x<1,~\alpha>0, \beta>0.$$

The one-parameter competing models used in Section \ref{sec:applications} are the Unit-Lindley and the Topp-Leone, which have PDFs given, respectively, by:


\[
f_{UL}(x; \alpha)=
\frac{\alpha^2}{1+\alpha}
(1-x)^{-3}
\exp\left(
-\frac{\alpha x}{1-x}
\right),~~0<x<1,~~\alpha>0
\]


\[
f_{TL}(x)=
2\alpha
x^{\alpha-1}
(1-x)
(2-x)^{\alpha-1},~~0<x<1,~~\alpha>0.
\]

The other two-parameter competing models used in Section \ref{sec:applications} are the Logit-Normal, Unit-Gompertz, Unit-inverse Gaussian, and Unit-Weibull, which have PDFs given, respectively, by:


\[
f_{LN}(x;\alpha,\beta)=
\frac{1}{\beta\sqrt{2\pi}}
\exp\left[
-\frac{(\operatorname{logit}(x)-\alpha)^2}
{2\beta^2}
\right]
\frac{1}{x(1-x)},~~0<x<1,~\alpha\in\mathbb{R}, \beta>0,
\]






\[
f_{UG}(x;\alpha,\beta)=
\alpha\beta
x^{-(\beta+1)}
\exp\left[
-\alpha\left(x^{-\beta}-1\right)
\right],~~0<x<1,~\alpha>0, \beta>0,
\]



\[
f_{UIG}(x;\alpha,\beta)=
\sqrt{
\frac{\beta}{2\pi}
}
\,
(x(1-x))^{-3/2}
\exp\left(
-\frac{
\beta(x-\alpha(1-x))^2
}{
2\alpha^2x(1-x)
}
\right),~~0<x<1,~\alpha>0, \beta>0,
\]

$$f_{UW}(x;\alpha,\beta) = \frac{1}{x} \alpha \beta (-\log x)^{\beta-1}\exp\left[-\alpha(-\log x)^\beta\right],~~0<x<1,~\alpha>0, \beta>0.$$

The three-parameter competing models used in Section \ref{sec:applications} are the Unit-Fréchet and Unit-Bimodal Birnbaum-Saunders, which have PDFs given, respectively, by:
	\begin{align*}\label{pdf-main}
		f_{UF}(x;\boldsymbol{\theta})
		=
		{\alpha\over \beta^\alpha}\,
		s^{\alpha-1}(s+1)^2
		\left\{ \displaystyle
		{\displaystyle2\left({s^\alpha\over \beta^\alpha}+1\right)^2
			-
			\rho\left[\left({s^\alpha\over \beta^\alpha}\right)^2+1\right]
			\over 
			\displaystyle
			\left[
			\left({s^\alpha\over \beta^\alpha}+1\right)^2
			-\rho\, {s^\alpha\over\beta^\alpha}
			\right]^2}
		- 
		{\displaystyle 1\over\displaystyle \left({s^\alpha\over \beta^\alpha} + 1\right)^2}
		\right\},
		\quad 
		s={x\over 1-x}, 
	\end{align*}
    where $\alpha>0, \beta>0$, and $0\leq\rho\leq1$, and
    $$f_{UBBS}(x; \alpha, \beta, \rho) = \frac{1}{4x\alpha\beta\Phi(x)}  \left[ \left(-\frac{\beta}{\log x}\right)^{1/2} + \left(-\frac{\beta}{\log x}\right)^{3/2} \right]\phi(|t(x)|+\rho),$$
where $x\in(0,1), t(x)=\frac{1}{\alpha}\left(\sqrt{-\frac{\log x}{\beta}} - \sqrt{-\frac{\beta}{\log x}}\right)$, $\alpha>0, \beta>0, \rho\in\mathbb{R}$.

The four-parameter competing model used in Section \ref{sec:applications} is the Bimodal Beta distribution, which has PDFs given by:
\[
f_{BB}(x;\alpha, \beta, \rho, \delta)=\dfrac{\rho+(1-\delta x)^2}
{Z\,B(\alpha,\beta)}
\,x^{\alpha-1}(1-x)^{\beta-1},~~ 0<x<1, 
\]
where $\alpha>0, \beta>0, \rho\geq0, \delta\in\mathbb{R}$,
\[
Z=Z(\alpha, \beta, \rho, \delta)
=
1+\rho
-2\delta\frac{\alpha}{\alpha+\beta}
+\delta^2
\frac{\alpha(\alpha+1)}
{(\alpha+\beta)(\alpha+\beta+1)}
\]
denotes the normalization constant and $B(\alpha,\beta)
=
\int_0^1
t^{\alpha-1}(1-t)^{\beta-1}\,dt$
is the beta function.
}

\end{appendices}
\end{document}